\newtheoremstyle{mythm}{}{}{\itshape}{}{\bfseries}{.}{.5em}{\thmname{#1}~\thmnumber{#2}\ifthenelse{\equal{\thmnote{#3}}{}}{}{~(\thmnote{#3})}}
\newtheoremstyle{mydefn}{}{}{\upshape}{}{\bfseries}{.}{.5em}{\thmname{#1}~\thmnumber{#2}\ifthenelse{\equal{\thmnote{#3}}{}}{}{~(\thmnote{#3})}}
\newtheoremstyle{myremark}{}{}{\upshape}{}{\itshape}{.}{.5em}{\thmname{#1}~\thmnumber{#2}\ifthenelse{\equal{\thmnote{#3}}{}}{}{~(\thmnote{#3})}}
\theoremstyle{mythm}
\newtheorem{theorem}{Theorem}[section]
\newtheorem{lemma}[theorem]{Lemma}
\newtheorem{proposition}[theorem]{Proposition}
\newtheorem{corollary}[theorem]{Corollary}
\newtheorem{conjecture}[theorem]{Conjecture}
\theoremstyle{mydefn}
\newtheorem{definition}[theorem]{Definition}
\newtheorem{example}[theorem]{Example}
\theoremstyle{myremark}
\newtheorem{remark}[theorem]{Remark}
\theoremstyle{mythm}
\numberwithin{equation}{section}
\newcommand{\pol}{P}
\newcommand{\PTIME}{\text{\upshape P}}
\newcommand{\NP}{\text{\upshape NP}}
\newcommand{\coNP}{\text{\upshape co-NP}}
\DeclareMathOperator{\ifp}{\mathsf{ifp}}
\DeclareMathOperator{\free}{free}
\DeclareMathOperator{\ext}{ext}
\DeclareMathOperator{\ints}{int}
\DeclareMathOperator*{\avg}{avg}
\DeclareMathOperator*{\uniq}{uniq}
\newcommand{\wt}w
\newcommand{\bias}b
\DeclareMathOperator{\dom}{dom}
\DeclareMathOperator{\rg}{rg}
\DeclareMathOperator{\BP}{BP}
\newcommand{\Nat}{\mathbb N}
\newcommand{\PNat}{{\mathbb N}_{>0}}
\renewcommand{\vec}[1]{\boldsymbol{#1}}
\newcommand{\ite}[3]{\operatorname{\mathsf{if}}#1\operatorname{\mathsf{then}}#2\operatorname{\mathsf{else}}#3}
\newcommand{\CF}{{\mathcal F}}
\newcommand{\bigmid}{\mathrel{\big|}}
\newcommand{\class}[1]{{\operatorname{eqcl}(#1)}}
\newcommand{\ind}[1]{{\operatorname{ind}(#1)}}
\newcommand{\phiWUpdate}{\varphi_{\text{weight-update}}}
\newcommand{\phiStop}{\varphi_{\text{stop}}}
\newcommand{\phiAll}{\varphi_{\text{all}}}
\newcommand{\phiAllNew}{\varphi_{\text{all}}^\text{new}}
\newcommand{\phiAllOldNew}{\varphi_{\text{all}}^\text{old-new}}
\newcommand{\phiAllOldOldNew}{\varphi_{\text{all}}^\text{old-old-new}}
\newcommand{\RAll}{R_{\text{all}}}
\newcommand{\RAllOld}{R_{\text{all}}^{\text{old}}}
\newcommand{\RAllOldOld}{R_{\text{all}}^{\text{old-old}}}
\newcommand{\lti}[1]{\varphi_{\text{lti}}^{(#1)}}
\newcommand{\IfText}{\textsf{if }}
\newcommand{\ThenTextNS}{\textsf{then }}
\newcommand{\ThenText}{\text{ }\ThenTextNS}
\newcommand{\ElseTextNS}{\textsf{else }}
\newcommand{\ElseText}{\text{ }\ElseTextNS}
\newcommand{\superN}{^{\Net}}
\newcommand{\f}f \newcommand\Rbf{\mathbf R}
\newcommand\Rlin{\mathbf R_{\text{\upshape lin}}}
\newcommand\bblm[1]{\mathsf{FO}(\Rbf,\f/#1)}
\newcommand\bblinm[1]{\mathsf{FO}(\Rlin,\f/#1)}
\newcommand\bbl{\mathsf{FO}(\Rbf,\f)}
\newcommand\bblin{\mathsf{FO}(\Rlin,\f)}
\newcommand{\R}{\mathbb{R}}
\newcommand{\Q}{\mathbb{Q}}
\newcommand{\Rbot}{{\R_\bot}}
\newcommand{\Qbot}{\Q_{\bot}}
\newcommand{\A}{\mathcal A}
\newcommand{\B}{\mathcal B}
\newcommand{\C}{\mathcal C}
\newcommand{\K}{\mathbf K}
\newcommand{\ass}{\nu}
\newcommand{\sem}[2]{\llbracket #1 \rrbracket^{#2}}
\newcommand{\ar}{\mathrm{ar}}
\newcommand{\fosum}{\textsf{\upshape FO(SUM)}}
\newcommand{\ifpsum}{\textsf{\upshape IFP(SUM)}}
\newcommand{\sifpsum}{\textsf{\upshape sIFP(SUM)}}
\newcommand{\IFP}{\textsf{\upshape IFP}}
\newcommand{\Net}{\mathcal N}
\newcommand{\RNet}{\widetilde{\Net}}
\newcommand{\bvec}[1]{{\boldsymbol #1}}
\newcommand{\xx}{\bvec x}
\newcommand{\yy}{\bvec y}
\newcommand{\zz}{\bvec z}
\newcommand{\ReLU}{\mathrm{ReLU}}
\newcommand{\clipped}{\mathrm{lsig}}
\newcommand{\Netsplit}{\Net^{\mathsf{split}}}
\newcommand{\inn}{\mathrm{in}}
\newcommand{\out}{\mathrm{out}}
\newcommand{\In}{\mathrm{In}}
\newcommand{\Out}{\mathrm{Out}}
\newcommand{\val}{\mathit{val}}
\begin{document}

\title{Recursive querying of neural networks via weighted structures}
\author[1]{Martin Grohe}
\author[1]{Christoph Standke}
\author[2]{Juno Steegmans}
\author[2]{Jan~Van~den~Bussche}
\affil[1]{RWTH Aachen University}
\affil[2]{Hasselt University}
\date{}
\maketitle

\begin{abstract}

Expressive querying of machine learning models---viewed as a form
  of intensional data---enables their verification and
  interpretation using declarative languages, thus making learned
  representations of data more accessible. Motivated by the
  querying of feedforward neural networks, we investigate
  logics for weighted structures. In the absence of a bound on
  neural network depth, such logics must incorporate recursion;
  thereto we revisit the functional fixpoint mechanism proposed
  by Gr\"adel and Gurevich.  We adopt it in a Datalog-like
  syntax; we extend normal forms for fixpoint logics to weighted
  structures; and show an equivalent ``loose'' fixpoint mechanism
  that allows values of inductively defined weight functions to
  be overwritten.  We propose a ``scalar'' restriction of
  functional fixpoint logic, of polynomial-time data complexity,
  and show it can express all PTIME model-agnostic queries over
  reduced networks with polynomially bounded weights.  In contrast, we show
  that very simple model-agnostic queries are already
  NP-complete.  Finally, we consider transformations of weighted
  structures by iterated transductions.

\end{abstract}

\section{Introduction}

A case can be made, from several perspectives, for the
\emph{querying} of machine learning models:
\begin{itemize}
  \item
    Data science projects generate a large amount of
    model artefacts, which should be managed using database technology,
    just like any other kind of data.  In particular, we should
    be able to query this data.  Platforms like MLflow or W\&B
    offer administrative
    filtering and search based on experimental metadata, but
    no deep querying of the models themselves.
  \item

    In machine learning terminology, ``querying'' a model often just means to
    apply it to a new input.  However, we can be much more
    ambitious.  Consider a typical Boolean classifier on tuples
    (vectors) of numeric features.  Such a model represents the
    potentially infinite relation consisting of all possible
    tuples that are classified as true.  We would like to be able to
    query such relations just like ordinary relations in a
    relational database.
  \item
    Querying infinite relations that are finitely presented by
    constraints was already intensively investigated in
    database theory under the heading of \emph{constraint
    query languages} \cite{kkr_cql,cdbbook,libkin_embedded}.
    
  \item
    The multitude of methods for model interpretability or explainable AI
    \cite{molnar-book,rudin-stop-explaining} can be viewed as
    many different queries on models and data (e.g., finding
    a counterfactual, computing the Shapley value), but outside
    the framework of an encompassing structured query language.

  \item
    In the verification of neural networks
    \cite{aws-book,vnn-comp-2024}, models 
    represent functions over the reals, and
    properties to be verified
    are expressed as universally quantified constraint
    expressions about such functions.  These expressions can
    already be thought of as a minimal query language.
\end{itemize}

The above considerations can motivate us to investigate the
theoretical foundations of query languages for neural networks.
Indeed, research in this direction has already been started.
Arenas et al.~\cite{arenas-foil,arenas-dtfoil} consider boolean
decision trees and first-order logics over arbitrary-length
boolean vectors.  They combine these logics with logics that
quantify over the nodes of the decision tree, enabling query
evaluation via SAT solving.  Grohe et al.~\cite{ql4nn} consider
feedforward neural networks (FNNs) and the first-order
logic \fosum\ for weighted structures, with query evaluation via
SQL \cite{sql4nn}.

Focusing on model-agnostic queries,\footnote{A query is
model-agnostic if it does not distinguish between two models that
may be structurally different, but that happen to represent the
same function or classifier \cite{lime-explaining}.} Grohe et
al.\ show a remarkable dichotomy. Without a bound on the depth of
the network, \fosum, due to its lack of recursion, can only
express trivial model-agnostic queries. On fixed-depth FNNs,
however, \fosum\ has substantial expressive power and can express all queries
in the linear-arithmetic fragment of the constraint query
language $\bbl$.  Here, $\bbl$ stands for
first-order logic over the reals with an extra function symbol
$f$.  This language serves as a natural yardstick for
expressing model-agnostic queries: the model is accessed
as a black box via the symbol $f$.

In this paper, we investigate methods for, as well as obstacles
to, lifting the fixed-depth assumption made in Grohe et al.'s
work.  We offer the following contributions.  We begin by
developing the necessary theory for extending \fosum\ with
recursion.  At its core, \fosum\ is a logic for defining weight
functions.  Fixpoint logics for inductive definitions of
\emph{relations} are well known from logic, database theory and
finite model theory.  In contrast, for weight functions, we are
aware of only one proposal, the functional fixpoint
\cite{gg_metafinite}, which we revisit and develop more
systematically.  We examine alternative semantics and show that
they are equally expressive.  We define the recursive extension
of \fosum, called \ifpsum, both in a Datalog-like and in a
fixpoint-logic-like syntax, and establish a normal form that
extends the known normal form for fixpoint logic and inflationary
Datalog with negation \cite{av_datalog,ef_fmt2,libkin_fmt}.

Our treatment of \ifpsum\ is generally valid for all weighted
structures.  In order to apply it to FNNs, we first define a
restricted fragment, called \emph{scalar} \ifpsum, disallowing
multiplication in combination with recursion.  We prove that this
fragment has PTIME data complexity.  Due to the combination of
recursion with arithmetic on weights, this is a nontrivial
exercise.  We then establish that scalar \ifpsum\ can express
\emph{all} PTIME model-agnostic queries, on FNNs that have
polynomially bounded ``reduced'' weights.  We make the latter
condition precise and explain why it is necessary.

Next, we relate back to the black-box logic $\bbl$.  The relevant
question is whether \ifpsum\ can now express all of
linear $\bbl$, without the fixed-depth assumption.
An affirmative answer is unlikely, however.
We will show that very simple $\bbl$ queries are already \NP-hard.
This provides a negative answer at least for scalar \ifpsum,
unless $\PTIME = \NP$.  The question for unrestricted \ifpsum\
remains open.

Finally, we consider a highly expressive extension of \ifpsum\
that allows transductions, mapping structures to structures, to
be defined iteratively.  Reviewing the proof of the
expressiveness result of Grohe et al., we conclude that every
linear $\bbl$ query can indeed be computed by an iteration of
\ifpsum\ transductions.

This paper is organised as follows.  Section~\ref{secrel}
discusses related work. Section~\ref{secprel} presents
preliminaries. Section~\ref{secifp} contains the results on
\ifpsum, and Section~\ref{secscalar} those on the scalar
fragment. Section~\ref{secnph} presents the \NP-completeness
result.  Section~\ref{seciter} discusses iterative
transductions. Section~\ref{seconc} concludes.  An appendix with
proof details is provided.

\section{Related work} \label{secrel}

The logic \fosum\ for weighted structures is an instantiation of
the logics for metafinite structures defined by Gr\"adel and
Gurevich \cite{gg_metafinite}.  Apparently unaware of that work,
Torunczyk made a very similar proposal and studied the combined
complexity of query evaluation and enumeration over general
classes of numerical domains (semirings) \cite{toruncz-faq}.  In
a statistical-learning context, Van Bergerem and
Schweikardt~\cite{BergeremS21,BergeremS25} study a closely
related extension of first-order logic by "weight aggregation".
Of course, \fosum\ is also quite similar to the relational
calculus with aggregates and arithmetic which formalises basic
SQL \cite{libkin_sql}.  The difference is that in SQL, relations
can have multiple numerical columns, while metafinite structures
have a separate abstract domain on which relations and weight
functions (taking values in a separate numerical domain) are
defined.

Weight functions are also very similar to semiring-annotated
relations \cite{provsemirings,faq_sigmodrecord}. However, in that
space, the focus is typically on languages where the weights are
\emph{implicitly} added, multiplied, or summed via corresponding
relational operators. In contrast, \fosum\ deals
\emph{explicitly} with the numerics.\footnote{See also work on
implicit versus explicit handling of nonnumeric annotations
\cite{stijn_provenance,color_journal}.}  Explicit tensor logics
were also proposed by Geerts and Reutter
\cite{gr-gnn-tl,floris-qlgltut} for the purpose of characterising
indistinguishability of weight functions by graph neural
networks.  In contrast, our focus here is on 
model-agnostic querying of feedforward neural networks.

To this aim, we add recursion to \fosum.
When mixing recursion and numerical computation, termination
becomes a point of attention.  We use and
investigate variations of the inflationary fixpoint that are
guaranteed to terminate.  In contrast, in very interesting
related work, Abo Khamis et al.~consider
datalog over semiring-annotated relations and investigate conditions 
on the semiring that guarantee convergence \cite{datalogo}.
We also mention work on the semantics for logic programs with
aggregates \cite{necker-aggregates}.  There, the focus is on
providing natural semantics for general sets of recursive rules
involving negation and aggregation, and finding characterisations
for when such rule sets are unambiguous (have unique well-founded
models).

Work related to our \NP-hardness result in Section~\ref{secnph}
has considered the complexity of deciding various properties of
FNNs, including injectivity and surjectivity of the represented
function \cite{FroeseGS25}.  Also, various other forms of neural
network verification have been shown to be \NP-complete
\cite{Wurm24}.  However, existing results crucially depend on the
assumption that the \emph{width} (input dimension) of the network
to be verified contributes to the input size of the verification
problem.  In contrast, our \NP-hardness result is stronger in that
it already pertains to networks of width one.

\section{Preliminaries} \label{secprel}

\subsection{FNNs} \label{secfnn}

The structure of a feedforward neural network
\cite{understanding-ml-book}, abbreviated FNN, is that of a
directed acyclic graph with weights on nodes and edges.  The
source nodes are called \emph{inputs} and are numbered from $1$
to $m$; the sink nodes are called \emph{outputs} and are numbered
from $1$ to $p$.
All other nodes are called \emph{hidden} nodes.
We always assume input and output nodes to be
distinct, that is, we disallow a graph of only isolated nodes.
The weight of a node is also called its \emph{bias};
exceptions are the input nodes, which do not have a bias.

Let $\Net$ be an FNN as described.  The function $\f^{\Net}: \R^m
\to \R^p$ represented by $\Net$, using ReLU activations and
linear outputs, is defined as follows.  We begin by defining, for
every node $u$, a function $\f^{\Net}_u : \R^m \to \R$ by
induction on the \emph{depth} of $u$
(the maximum length of a path from an input node to $u$.) If $u$
is the $i$th input node then $\f^{\Net}_u(x_1,\dots,x_m) = x_i$.
If $u$ is a hidden node with bias $b$, and incoming edges
$(v_1,u)$, \dots, $(v_k,u)$ with weights $w_1$, \dots, $w_k$,
respectively, then $\f^{\Net}_u(\xx) = \ReLU(b + \sum_j w_j
\f^{\Net}_{v_j}(\xx))$. Here, $\ReLU : \R \to \R : z \mapsto
\max(0,z)$.  If $u$ is an output node, $\f^\Net_u(\xx)$ is defined
similarly as for hidden nodes, but the application of $\ReLU$ is
omitted.  We can finally define $\f^{\Net}(\xx)$ as
$(\f^{\Net}_{\out_1}(\xx),\dots,\f^{\Net}_{\out_p}(\xx))$, where
$\out_1$, \dots, $\out_p$ are the output nodes.

$\K(m,p)$ denotes
the class of FNNs with $m$ inputs and $p$ outputs.
We also write $\K(*,p)$, $\K(m,*)$, and $\K(*,*)$ when
the numbers of inputs, or outputs, or both, are not fixed.

\subsection{Model-agnostic queries and $\bbl$}
\label{secmodelagnostic}

In general, we may define an \emph{$r$-ary query
on $\K(m,p)$ with $k$ parameters} to be a relation $Q \subseteq
\K(m,p) \times \R^k \times \R^r$.  If $Q(\Net,\bvec z, \bvec y)$
holds, we say $\bvec y$ is a possible result of $Q$ on $\Net$ and
$\bvec z$.  In the special case $r=0$, we obtain a \emph{boolean
query} (true if $Q(\Net,\bvec z)$ holds, false otherwise).

\begin{example} \label{exqueries}
The simplest example of a query is inference, i.e.,
evaluating a model on a given input.  Formally, for inference, we
have $k=m$ and $r=p$ and $Q(\Net,\zz,\yy)$ holds iff $\yy =
\f\superN(\zz)$.  However, many more
tasks in interpretable machine learning \cite{molnar-book} fit
the above notion of query. For example, returning a
  counterfactual explanation or an
adversarial example, checking robustness, computing the
Shapley value in a point, computing the gradient in a point, or
  checking differentiability between certain ranges given by the
  parameters.

The case $k=0$ captures analysing or verifying
the global behaviour of neural networks, instead of their
  behaviour on given input vectors.  For example, recall that
  the function
  represented by an FNN with ReLU and linear outputs is always
  piecewise linear.  A possible $2$-ary query for $m=1=p$ on an
  FNN $\Net$ may ask for results $(b,s)$ such that $b$ is a
  breakpoint of $\f\superN$ and $s$ is the right-hand slope at
  $b$. For higher input dimensions, we could, for
  example, ask for the coefficients of supporting hyperplanes of the
  partitioning of input space induced by $\f\superN$.
  \qed
\end{example}

All examples of queries just mentioned are \emph{model-agnostic},
meaning that the query has the same results on two networks
$\Net$ and $\Net'$ representing the same function, i.e.,
$\f\superN=\f^{\Net'}$.  Model-agnostic methods form an
established category in interpretable machine learning and
explainable AI \cite{molnar-book}.\footnote{To give an example of
a method that is not model-agnostic, we can mention the
pruning of a neural network:
finding neurons that have only a negligible influence on the
function that is represented.}

As a yardstick for model-agnostic queries on $\K(m,p)$, we can
use first-order logic over the reals with $p$ function symbols
$f_1$, \dots, $f_p$ of arity $m$, together representing a
function $f : \R^m \to \R^p$.  For simplicity, in this
paper, we will look mainly at the case $p=1$, which in itself is
already typical in machine learning, with tasks such as regression and
binary classification.

Formally, let $\Rbf=(\R,+,\cdot,(q)_{q\in\Q},<)$ be the structure
of the reals with constants for all rational numbers.  By $\bblm
m$, we mean first-order logic over the vocabulary of $\Rbf$ with
an extra $m$-ary function symbol $f$.  When multiplication is
restricted to be only between a term and a constant (scalar
multiplication), we denote this by $\Rlin$ and $\bblinm m$.
When $m$ is understood, we omit it from the notation.

A formula $\varphi$ of $\bblm m$ with $k+r$ free variables
$z_1,\dots,z_k,y_1,\dots,y_r$ now naturally expresses an $r$-ary query
$Q_\varphi$ on $\K(m,1)$ with $k$ parameters.  Specifically,
$Q_\varphi(\Net,\zz,\yy)$ holds iff $\varphi(\zz,\yy)$ is satisfied in
$(\Rbf,\f\superN)$.  This query is model-agnostic by
definition.

\begin{example}
  We give two formulas to illustrate the syntax of $\bbl$.  The
  inference query (Example~\ref{exqueries})
  is expressed by the formula $y = f(\zz)$.  
  The query on $\K(1,1)$ that asks whether
  $\lim_{x\to+\infty}\f\superN(x) \allowbreak =-\infty$,
  is expressed by $\forall u<0 \, \exists x_0>0 \, \forall x>x_0
  \, f(x) < u$.
  Both formulas are in $\bblin$ since
  they do not use multiplication.
We note that
all queries from Example~\ref{exqueries}, with the
exception of Shapley value, are expressible in $\bbl$ \cite{ql4nn}.
\end{example}

\subsection{\fosum} \label{sec:fosum}

\fosum\ is a logic for querying weighted structures, which are
standard relational structures additionally equipped with weight
functions.  These functions map tuples of elements to numeric
values.

In our case, the numeric values are taken from the ``lifted''
rationals $\Qbot\coloneqq\Q\cup\{\bot\}$ or reals
 $\Rbot = \R \cup \{\bot\}$.
 Here, $\bot$ is an
extra element representing an undefined value.  We extend the
usual order $\leq$ on $\R$ to $\Rbot$ by
letting $\bot \leq x$ for all $x \in \Rbot$. We
extend addition, subtraction, and multiplication by letting $x+y
:= \bot$, $x-y := \bot$, and $x \cdot y := \bot$ if $x = \bot$ or
$y = \bot$. Similarly, we extend division by letting $x/y:=
\bot$ if $x = \bot$ or $y = \bot$ or $y = 0$.

A \emph{weighted vocabulary} $\Upsilon$ is a finite set of
relation symbols and weight-function symbols. Each symbol $S$ has
an arity $\ar(S)$, a natural number.
A \emph{weighted $\Upsilon$-structure} $\A$ consists of a
finite set $A$ called the universe of $\A$, for each $k$-ary
relation symbol $R \in \Upsilon$ a $k$-ary relation
$R^{\A} \subseteq A^k$, and for each $k$-ary
weight-function symbol $F \in \Upsilon$ a function
$F^{\A} : A^k \to \R_{\bot}$.  We will often refer to weighted
structures simply as structures for short.

We define the sets of \emph{formulas} $\varphi$ and \emph{weight
terms} $\theta$ of the logic \fosum\
by the following grammar:

\begin{align}
  \label{eq:fosum-formulas}
\varphi &::= x = y \mid R(x_1, \ldots, x_{\ar(R)}) \mid \theta
  \leq \theta \mid \neg\varphi \mid \varphi \ast \varphi \mid Qx\, \varphi \\
  \label{eq:fosum-terms}
\theta &::= r \mid F(x_1, \ldots, x_{\ar(F)}) \mid \theta \circ
  \theta \mid \textsf{if $\varphi$ then $\theta$ else $\theta$}
  \mid \sum_{(x_1, \ldots, x_k):\varphi}
  \theta.
\end{align}

Here $x, y, x_i$ are variables, $R$ is a relation symbol,
$\ast \in \{\vee, \wedge, \to\}$ is a Boolean connective,
$Q \in \{\exists, \forall\}$ is a quantifier, $r \in \Qbot$ is a
numeric constant,\footnote{One can also allow
arbitrary real constants, but in this paper, we are
  concerned with algorithms evaluating expressions, and therefore
  reasonably restrict our attention to rational constants.}
$F$ is a weight-function
symbol, and $\circ \in \{+, -, \cdot, /\}$ is an arithmetic
operator. The semantics is defined with respect to pairs $(\A, \ass)$,
where $\A$ is a structure and $\ass$ an assignment of elements from
the universe $A$ to the variables. Formulas $\varphi$ take a Boolean
value $\llbracket\varphi\rrbracket^{(\A,\ass)} \in \{0, 1\}$ and
weight terms $\theta$ take a value
$\llbracket\theta\rrbracket^{(\A,\ass)} \in \R_{\bot}$. These values
are defined inductively; we omit most definitions as they are obvious
or follow the familiar semantics of first-order logic. The semantics
of the summation operator is as follows:

\[ \llbracket \sum_{(x_1, \ldots, x_k):\varphi} \theta
\rrbracket^{(\A,\ass)} := \sum_{(a_1, \ldots, a_k) \in A^k}
\llbracket\varphi\rrbracket^{(\A,
\ass\frac{a_1, \ldots, a_k}{x_1, \ldots, x_k})} \cdot
\llbracket\theta\rrbracket^{(\A,
\ass\frac{a_1, \ldots, a_k}{x_1, \ldots, x_k})} , \]
where
$\ass\frac{a_1, \ldots, a_k}{x_1, \ldots, x_k}$
denotes the updated assignment
obtained from $\ass$ by assigning $a_i$ to $x_i$ for
$i=1,\dots,k$.

An \fosum\ \emph{expression} is either a formula or a weight
term. The set $\free(\xi)$ of \emph{free variables} of an expression
$\xi$ is defined in a
straightforward way, where a summation $\sum_{(x_1, \ldots,
x_k):\varphi}$ binds the variables $x_1, \ldots, x_k$. A
\emph{closed expression} is an expression without free variables.
A \emph{closed formula} is also called a \emph{sentence}.

For an
expression $\xi$, the notation $\xi(x_1, \ldots, x_k)$ stipulates
that all free variables of $\xi$ are in $\{x_1, \ldots, x_k\}$.
It is easy to see that the value
$\llbracket\xi\rrbracket^{(\A,\nu)}$ only depends on the values
$a_i := \nu(x_i)$ of the free variables. Thus, we may avoid explicit
reference to the assignment $\nu$ and write
$\llbracket\xi\rrbracket^{\A}(a_1, \ldots, a_k)$ instead of
$\llbracket\xi\rrbracket^{(\A,\nu)}$. If $\xi$ is a closed
expression, we just write $\llbracket\xi\rrbracket^{\A}$. For
formulas $\varphi(x_1, \ldots, x_k)$, we also write $\A \models
\varphi(a_1, \ldots, a_k)$ instead of
$\llbracket\varphi\rrbracket^{\A}(a_1, \ldots, a_k) = 1$, and for
sentences $\varphi$ we write $\A \models \varphi$.

Observe that we can express averages in \fosum\ using summation and
division. It will be useful to introduce a notation for averages: for
a term $\theta$ and formula $\varphi$ we write
$\avg_{\vec x:\varphi}\theta$ to abbreviate
$(\sum_{\vec x:\varphi}\theta)/\sum_{\vec
    x:\varphi}1$. Note that $\avg_{\vec x:\varphi}\theta$
takes value $\bot$ if there are no tuples $\vec x$ satisfying
$\varphi$.

\subsection{FNNs as weighted structures} \label{secfnnstruct}

Any FNN $\Net\in\K(*,*)$ can be naturally regarded as a weighted
structure
\[
  \Net =
  (V,E\superN,\In\superN,\Out\superN,b\superN,\allowbreak w\superN),
\]
where
$V$, the universe, is the set of nodes; $E\superN$ is the binary
edge relation; $\In\superN$ is a binary relation that is a linear
order of the input nodes of $\Net$ (and undefined on the
remaining nodes); $\Out\superN$ similarly is a
linear order of the output nodes; $b\superN$ is the unary bias
weight function on nodes; and $w\superN$ is the binary weight
function on edges.\footnote{To be precise,
$b\superN(u)=\bot$ for every input node $u$, and
$w\superN(u,v)=\bot$ for every pair $(u,v)$ not in $E$.}

This slightly generalises the model of \cite{ql4nn}, where the
vocabulary depended on the input dimension and output dimension of the
network, and for a network $\Net\in\K(p,q)$, the $p$ input nodes were
accessed by a singleton $p$-ary relation and the $q$ output nodes were
accessed by a singleton $q$-ary relation. Note that we can easily
retrieve these relations in our version. The singleton input relation
of an FNN in $\K(p,q)$ can be defined by the $\fosum$ formula
$
\varphi_{\textup{In}(p)}(x_1,\ldots,x_p)\coloneqq\bigwedge_{i=1}^{p-1}
\big(\In(x_i,x_{i+1})\wedge x_i\neq x_{i+1}\big)\wedge\forall
y\left(\In(y,y)\to\bigvee_{i=1}^p y=x_i\right), $ and similarly for
the output relation.  The advantage of our approach is that
we can write queries that apply uniformly to all FNNs, regardless of
their dimension.  

We can expand the corresponding vocabulary $(E,\In,\Out,b,w)$
with a weight function $\val$ giving
input values to the network.  Over the resulting
vocabulary, we can now give a few examples of formulas and
weight terms in \fosum.

\newcommand{\famdepth}[1]{\mathit{depth}_{\leq #1}}
\newcommand{\tameval}[1]{\mathit{eval}_{\leq #1}}

\begin{example} \label{exfixedepth}
For any natural number $\ell$, there is a
first-order logic formula
  $\famdepth\ell(u)$ defining the nodes of depth at most $\ell$
  in any network.  We can then define the function $\f\superN_u$
(cf.~Section~\ref{secfnn}) for such nodes
by the weight term $\tameval\ell(u)$ defined as
  follows:
\begin{tabbing}
  $\tameval 0 := \textsf{if $\In(u,u)$ then $\val(u)$ else $\bot$}$ \\
  $\tameval {\ell+1} := {}$\=\textsf{if $\famdepth \ell(u)$ then
  $\tameval \ell(u)$}
  \textsf{else if $\famdepth{\ell+1}(u)$ then} \+\\
  \hspace*{1em}\textsf{if $\Out(u,u)$ then
  $b(u) + \sum_{v : E(v,u)} w(v,u) \cdot \tameval\ell(v)$} \\
  \hspace*{1em}\textsf{else}
  $\ReLU(b(u) + \sum_{v : E(v,u)} w(v,u) \cdot \tameval\ell(v))$ \\
  \textsf{else $\bot$}
\end{tabbing}
Here, $\ReLU(\theta)$, for an arbitrary weight term $\theta$, is
an abbreviation for \textsf{if $\theta \geq 0$ then $\theta$ else
$0$}. \qed
\end{example}

The above example essentially shows that \fosum\ can express the
inference query on fixed-depth networks
(cf.~Section~\ref{secmodelagnostic}). 
It turns out that \fosum\ can do much more than that and actually
measures up to the yardstick set by $\bblin$:
\begin{theorem}[\cite{ql4nn}] \label{theorfixedepth}
  Let $m$, $k$ and $\ell$ be natural numbers, and let $Q$ be a boolean
  query on $\K(m,1)$ with $k$ parameters, expressible in $\bblin$. 
  There exists an \fosum\ sentence $\psi$ over vocabulary
  $(E,\In,\Out,b,w,\val)$ that expresses $Q$ on
  all networks in $\K(m,1)$ of depth $\ell$.
\end{theorem}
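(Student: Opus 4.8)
We would translate the $\bblin$ formula $\varphi(z_1,\dots,z_k)$ that expresses $Q$ into the required sentence $\psi$, processing $\varphi$ from the inside out while eliminating its real‑valued quantifiers in favour of quantification over the nodes of the input network. We may assume $\varphi$ is in prenex normal form, $\varphi=\mathsf{Q}_1 x_1\cdots\mathsf{Q}_n x_n\,\chi$ with $\chi$ quantifier‑free, and that the input weighted structure encodes, besides $\Net$, the parameter vector $\zz$ (say via $\val$; the precise encoding convention is immaterial to what follows). The invariant to be maintained is: a real number arising in the simulation is represented by an \fosum\ weight term in a tuple of free node‑variables whose length is bounded in terms of $n$, $m$ and $\ell$ only, and a real quantifier $\mathsf{Q}_i x_i$ is simulated by an \fosum\ quantifier $\mathsf{Q}_i\bvec u_i$ over such a tuple.

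First we set up the evaluation machinery. Starting from the term $\tameval\ell$ of Example~\ref{exfixedepth}, we build variant \fosum\ weight terms that evaluate the network ``off $\val$'': given an input point $\bvec a\in\R^m$ represented by weight terms, (i)~a term computing $\f\superN_u(\bvec a)$ for every node $u$, and (ii)~a term that, for a further symbolic point $\xx$, returns the affine function $\xx\mapsto\bvec g\cdot\xx+c$ agreeing with $\f\superN$ (or with $\f\superN_u$) on the linear region containing $\bvec a$ --- obtained by recursing through the at most $\ell$ layers and resolving each $\ReLU$ by the sign of the corresponding pre‑activation \emph{at $\bvec a$}, fixing a convention on the boundary (and recording both one‑sided alternatives). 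Recall $\f\superN$ is continuous piecewise linear, so (i) and (ii) are the only ways in which the black box $f$ needs to be accessed; the probe point $\bvec a$ is always supplied from reals already produced by the simulation.

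The core is the elimination of one real quantifier, say $\exists x_j$, the remaining variables being represented by weight terms. With those fixed, the already‑translated body $\chi$ is a Boolean combination of atoms that, as functions of $x_j$, are continuous piecewise linear --- affine when $f$ does not occur, and for an $f$‑atom because an argument $f(\bvec t(x_j))$ moves $\bvec t(x_j)$ along an affine line in $\R^m$ along which $\f\superN$ is piecewise linear. Hence the truth value of $\chi$ is constant strictly between consecutive \emph{critical} values of $x_j$ --- the breakpoints and zeros of these atoms --- so $\exists x_j\,\chi$ holds iff $\chi$ holds at some critical value, or at a point strictly between two consecutive ones, or at a point beyond the extreme ones. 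The key claim is that this finite test set is \fosum‑enumerable by a bounded tuple of node‑variables: a breakpoint of $x_j\mapsto\f\superN(\bvec t(x_j))$ is a value at which the line crosses a boundary between linear regions of $\f\superN$; such a boundary lies on a hyperplane $\{\xx:\ell_v(\xx)=0\}$ determined by a \emph{single} hidden node $v$ together with a point of an adjacent region, which is furnished, via (ii), by $\bvec t$ evaluated at a critical value already produced at an earlier layer. Bootstrapping through the at most $\ell$ layers then yields all critical values by adjoining one node‑variable per layer and solving one linear equation --- a weight term --- per choice (linearity of $\bbl$ is essential here: each critical value is the root of a single linear equation and hence \fosum‑expressible, whereas for full first‑order real arithmetic the sample points would be algebraic numbers of unbounded degree); midpoints of consecutive critical values and the behaviour beyond the extreme ones (governed by the eventual slope) are obtained by routine \fosum\ manipulations. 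Replacing ``$\exists x_j$'' by an \fosum‑existential over these descriptor‑tuples, with $x_j$ substituted throughout the translated body by the corresponding test‑value weight term, and dually for $\forall$, keeps the whole expression inside \fosum; composing the $n$ substitutions leaves us with the desired closed expression $\psi$.

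The main obstacle is exactly this \fosum‑enumerability \emph{uniformly in the width} of the network: one must check that the descriptor‑tuples have arity bounded independently of $|V|$, which relies on each region boundary of $\f\superN$ being indexed by a single node plus a probe point --- rather than by an unbounded activation pattern --- and on the piecewise‑linear nesting being capped by the depth $\ell$. Carrying this out cleanly, including the facts that region boundaries are only pieces of the hyperplanes $\{\ell_v=0\}$ and that probe points may accidentally land on lower‑dimensional faces, is the technical heart of the proof. The remaining ingredients --- prenexing and translating the Boolean structure of $\varphi$, dealing with $\bot$ and with division by zero (a matter of care rather than substance, since $\f\superN$ is total and $\bblin$‑terms never evaluate to $\bot$), and fixing $\ReLU$‑boundary conventions so that one‑sided behaviour at breakpoints is captured --- are routine.
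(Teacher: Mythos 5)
This theorem is imported from \cite{ql4nn}; the paper gives no proof of it, only the sketch in Section~\ref{seciter}: map $\Net$ to a structure recording the hyperplane arrangement and affine pieces of the piecewise-linear function $\f\superN$ (built layer by layer, $\ell$ times), then construct a cylindrical cell decomposition of the variable space compatible with $\f\superN$ and the query, and express the query over that decomposition. Your proposal is essentially the same argument in syntactic clothing: quantifier elimination by test points (critical values plus midpoints plus points beyond the extremes) is exactly the one-variable-at-a-time construction of a cylindrical decomposition for linear arithmetic, and your key claim --- that each region boundary of $\f\superN$ is indexed by a single node together with a recursively obtained probe point, so that cells are named by node-tuples of arity bounded in terms of $m$, $\ell$ and the number of quantifiers, independently of the width --- is precisely what makes the paper's layer-by-layer transduction possible. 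The identification of where linearity of $\bblin$ and boundedness of the depth enter is also correct. Two small points of care: your elimination must in fact proceed so that, when a quantifier $\mathsf{Q}_j x_j$ is treated, the already-processed inner quantifiers have been replaced by node quantifiers (so the body is a finite Boolean combination of atoms piecewise linear in $x_j$, whose critical values as functions of the outer data are what you enumerate); and the $k$ parameters need an agreed encoding in the vocabulary $(E,\In,\Out,b,w,\val)$. Neither affects the substance.
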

This result shows that \fosum\ can simulate
arbitrary quantification over the real numbers,
a feature central to $\bblin$. This is remarkable since
\fosum\ itself
only offers quantification over the finite set of nodes of the network.

\section{Extending \fosum\ with recursion} \label{secifp}

This paper investigates methods for, as well as obstacles to,
lifting the fixed-depth restriction of
Theorem~\ref{theorfixedepth}.  A necessary development, which we
investigate in this section, is to extend \fosum\ with a
recursion mechanism, allowing for inductive definitions.

Recall that \fosum\ expressions can be formulas, which define
relations, or weight terms, which define weight functions.  For
inductive definitions of relations, logical mechanisms are well
understood; in this paper, we consider a Datalog-like syntax with
stratification and the inflationary fixpoint semantics for strata
\cite{av_datalog,kp-whynot,ahv_book}.  We recall this semantics through an
example.

\newcommand{\Ans}{\mathit{Ans}}

\begin{example} \label{exans}
  Consider finite directed graphs given by a binary edge
  relation $E$ with an extra unary relation $S$.  The following
  program checks that the graph is acyclic when restricted to the
  nodes reachable from nodes in $S$:
  \newcommand{\Reach}{\mathit{Reach}}
  \begin{tabbing}
    $\Reach (x) \gets S(x) \lor \exists y(Reach(y) \land
    E(y,x))$; \\
    $\Ans \gets \neg \exists x\, Reach(x,x)$.
  \end{tabbing}
  The relation $\Reach$ is initialised to be empty. The defining
  formula is evaluated repeatedly, and the result is \emph{added} to
  $\Reach$ until no change occurs.\footnote{The adding gives the
  inflationary aspect; the defining formula need not be positive
  or monotone.}  Nullary relation $\Ans$
  provides the boolean answer to the query and is defined
  non-recursively.  Its defining rule is in a subsequent stratum,
  meaning that it
is evaluated after the recursion for $\Reach$ has terminated. \qed
\end{example}

To our knowledge,
inductive definitions of \emph{weight functions} has been much less
studied.\footnote{In standard first-order logic, whose syntax lacks the
powerful interaction between formulas and weight terms we have in
\fosum, one is limited to defining functions inductively as relations (graph
of a function) using formulas,
and somehow declaring the definition to be wrong
if the resulting relation is not the graph of a function.}
We can adopt the only existing proposal, which is
nicely compatible with inflationary fixpoints, called
\emph{functional fixpoints} \cite{gg_metafinite}.  The idea is
that an inductively defined weight function $F$ is initialised to be
undefined ($\bot$) everywhere.  A defining weight term is then
evaluated repeatedly and used to update $F$, but only on tuples where $F$
was not yet defined.  This is repeated until no change occurs.

\newcommand{\feval}{\mathit{eval}}

\begin{example} \label{ex:receval}
  The following inductively defined function $\feval$ uniformly
  expresses, given any FNN $\Net$, the function $\f\superN_u$ for
  all nodes. (Compare the weight term $\tameval \ell$ from
  Example~\ref{exfixedepth}, which does the same but only for nodes $u$ up
  to depth $\ell$.)
  \begin{tabbing}
    $\feval(u) \gets {}$\=\textsf{if $\In(u,u)$ then $\val(u)$} \\
  \>\textsf{else if
    $\Out(u,u)$ then $b(u) + \sum_{v : E(v,u)} w(v,u)\cdot
    \feval(v)$} \\
    \>\textsf{else $\ReLU(b(u) + \sum_{v : E(v,u)} w(v,u)\cdot\feval(v))$}.
  \end{tabbing}
\end{example}

\subsection{The logic \ifpsum}
\label{sec:ifpsum}

We formally define \ifpsum: the extension of \fosum\ with inflationary and
functional fixpoints.  Our syntax follows the pattern of
stratified Datalog \cite{ahv_book}.  Later in this section, we will
also consider a syntax that is more in line with the way fixed-point
extensions of first-order logic are defined in finite model
theory \cite{ef_fmt2,libkin_fmt}.

\subsubsection*{Rules and strata}

A \emph{relational rule} over a weighted vocabulary
    $\Upsilon$ is of the form $R(\bvec
    x) \gets \varphi$, where $R \in \Upsilon$ is a
    relation name, $\bvec x$ is a tuple of $\ar(R)$ distinct
    variables, and $\varphi(\bvec x)$ is an \fosum\
    formula over $\Upsilon$.

A \emph{weight function rule} over $\Upsilon$ is of the form $F(\bvec
    x) \gets \theta$, where $F \in \Upsilon$ is a
    weight function name, $\bvec x$ is a tuple of $\ar(R)$ distinct
    variables, and $\theta(\bvec x)$ is a weight term over $\Upsilon$.

    Let $\Upsilon$ and $\Gamma$ be two disjoint weighted
    vocabularies.  An \emph{\ifpsum\ stratum of type $\Upsilon \to \Gamma$}
    is a set $\Sigma$ of rules over $\Upsilon \cup \Gamma$, with one rule for
    each symbol in $\Gamma$.  In the spirit of stratified Datalog
    terminology, in $\Sigma$, the symbols from $\Upsilon$
    are called \emph{extensional} and those from $\Gamma$
    \emph{intensional}.

To define the semantics of an \ifpsum\ stratum
$\Sigma$ as above we first introduce the \emph{immediate
consequence} $T_\Sigma$ on $\Upsilon\cup\Gamma$-structures.

\begin{definition} \label{deft}
Given structure $\B$, we define $T_\Sigma(\B) := \C$, where the
universe of $\C$ equals $B$, the universe of $\B$, and
$\C$ agrees with $\B$ on the extensional
symbols; the intensional symbols in $\C$ are then defined as follows.
\begin{itemize}
  \item
    Let $R(\bvec x) \gets \varphi$ be a relational rule in
    $\Sigma$.  Then $R^{\C} := R^\B \cup \{\bvec a \in B^{\ar(R)}
    \mid \B \models \varphi(\bvec a)\}$.
  \item
    Let $F(\bvec x) \gets \theta$ be a weight function rule in
    $\Sigma$.  Then
    \begin{equation}
      \label{eqstrict}
     F^{\C}(\bvec a) := \begin{cases}
      \sem\theta\B(\bvec a) & \text{if $F^\B(\bvec a) = \bot$;} \\
      F^\B(\bvec a) & \text{otherwise}.
    \end{cases}
    \end{equation}
\end{itemize}
\end{definition}

Since $\B$ is finite and intensional relations and weight
functions only grow under the application of $T_\Sigma$, there exists a
natural number $n$ such that $T_\Sigma^n(\B)$ (i.e., the result
of $n$ successive applications of $T_\Sigma$ starting from $\B$)
equals $T_\Sigma^{n+1}(\B)$.  We denote this result by $T_\Sigma^\infty(\B)$.

We are now ready to define the semantics of $\Sigma$ as a
mapping from $\Upsilon$-structures to
$\Upsilon\cup\Gamma$-structures.  Let $\A$ be an
$\Upsilon$-structure and let $\A'$ be its expansion to an
$\Upsilon\cup\Gamma$-structure by setting all intensional
relations to empty and all intensional weight functions to be
undefined everywhere.  Then $\Sigma(\A) := T_\Sigma^\infty(\A')$.

\subsubsection*{Programs and queries} An \ifpsum\ program is just a finite
sequence of strata.  Formally, every stratum, of type
$\Upsilon \to \Gamma$, is also a program of that type;
moreover, if $\Pi$ is a program of type $\Upsilon \to
\Gamma_1$ and $\Sigma$ is a stratum of type $\Upsilon \cup
\Gamma_1 \to \Gamma_2$, then $\Pi ; \Sigma$ is a program of type
$\Upsilon \to \Gamma_1 \cup \Gamma_2$.  The semantics is given
simply by sequential composition.

As illustrated in Example~\ref{exans},
it is customary to designate an \emph{answer symbol} (relation name or
weight function name) from among the intensional symbols of a
program.  In this way, we can use programs to express
\emph{queries}, i.e., mappings from structures to relations or
weight functions.

\subsection{A loose semantics}

The immediate consequence operator just defined
only allows to set a new value for a weight function $F$ on a
tuple $\bvec a$ if $F$ was not yet defined on $\bvec a$.  The
advantage of the resulting functional fixpoint semantics
is that it is guaranteed to terminate on finite
structures.  In practice, however, it may be convenient to be
able to perform updates on weight functions.

\begin{example} \label{floydwarshall}
  \newcommand{\chosen}{\mathit{chosen}}
  \newcommand{\ord}{\mathit{ord}}
  \newcommand{\nxt}{\mathit{next}}
  \newcommand{\last}{\mathit{last}}
  Consider a distance matrix $W$,
represented as a structure
  with a strict total order relation $\ord$ and
  a binary weight function $W$.
  An almost literal transcription of the 
Floyd-Warshall algorithm for all-pairs
  shortest-path distances in \ifpsum\ presents itself below.
  \begin{align*}
    \chosen(k)\gets\;&\nxt(k);
    \\
    D(i,j)\gets\;
                     &\textsf{if $\chosen=\emptyset$ then}\\
                     &\hspace{2em}\textsf{if $\exists k(\nxt(k) \land W(i,j) >
                       W(i,k)+W(k,j))$ then}\\
                     &\hspace{4em}\sum_{k:\nxt(k)} W(i,k)+W(k,j)\\
                     &\hspace{2em}\textsf{else $W(i,j)$} \\
                     &\textsf{else if $\exists k(\nxt(k) \land D(i,j) >
                D(i,k)+D(k,j))$ then}\\
                     &\hspace{4em}\sum_{k:\nxt(k)} D(i,k)+D(k,j) \\
                     &\hspace{2em}\textsf{else }D(i,j).
  \end{align*}
  Here, $i$, $j$ and $k$ are
  variables, and $\nxt(k)$
  abbreviates $\forall x(\ord(x,k)
  \leftrightarrow \chosen(x))$.
  The auxiliary intensional relation $\chosen$ is used to
  iterate over all nodes in the graph; $\nxt(k)$ selects the
  next node.

  Under the functional fixpoint semantics we are using so far, however,
  this program does not work as intended.  After the first
  iteration, $D$ is already everywhere defined and further
  updates to $D$ will not be made. The program would work as
  intended under a more permissive
  semantics that allows weight functions to be updated.  \qed
\end{example}

The above example suggests an alternative \emph{loose fixpoint} semantics for
strata, where updates to weight functions are possible.  Care must be
taken, however, since termination is then no longer
guaranteed.  A simple solution we propose is to stop when a
fixpoint is reached on the intensional \emph{relations} only.

To define the loose semantics formally for a stratum $\Sigma$ of
type $\Upsilon \to \Gamma$, we introduce an alternative immediate
consequence operator $L_\Sigma$ (using $L$ for `loose').
It is defined like $T_\Sigma$ (Definition~\ref{deft}),
except that Equation~\eqref{eqstrict} is replaced simply by
$F^\C(\bvec a) := \sem\theta\B(\bvec a)$.
The definition for intensional relations is not changed.
Since we work with finite structures $\B$
and intensional relations can only grow, there exists a natural
number $n$ such that $L_\Sigma^n(\B)$ and 
$L_\Sigma^{n+1}(\B)$ agree on the intensional relations.
Taking $n$ to be the smallest such number, we define
$L_\Sigma^\infty(\B) := L_\Sigma^n(\B)$
and call $n$ the \emph{loose termination index}.
Note that the loose semantics is only useful
if there are some intensional relations, for otherwise this index is
zero.

\newcommand{\loose}{^L}
The result of applying a stratum $\Sigma$ to an
$\Upsilon$-structure $\A$, now under the loose semantics,
is defined as before, but using
$L_\Sigma^\infty$ instead of $T_\Sigma^\infty$, and denoted by
$\Sigma\loose(\A)$.  The result of a program (sequence of
strata) $\Pi$ on $\A$, using loose semantics for the strata, is
denoted by $\Pi\loose(\A)$.

\subsection{Comparing the two semantics}
It is not difficult to see that
the functional fixpoint semantics can be simulated in the loose
semantics:

\begin{proposition} \label{func2loose}

  For every stratum $\Sigma$ of type $\Upsilon \to \Gamma$ there
  exists a stratum $\Sigma'$ of type $\Upsilon \to \Gamma'$, with
  $\Gamma \subseteq \Gamma'$, such that $\Sigma'^L(\A)$
  and $\Sigma(\A)$ agree on $\Gamma$.

\end{proposition}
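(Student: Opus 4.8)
The plan is to let $\Sigma'$ run essentially the same computation as $\Sigma$, but to keep track, for every weight-function symbol $F\in\Gamma$, of the set of tuples on which $F$ has already received a value, by adding a fresh intensional \emph{domain relation} $D_F$ of arity $\ar(F)$. These relations play two roles. First, they let us recreate the ``freeze once defined'' behaviour of the strict operator $T_\Sigma$ inside the overwriting operator $L_{\Sigma'}$. Second --- and this is the real point --- they ensure that the intensional relations of $\Sigma'$ keep growing for \emph{exactly as long} as the strict computation of $\Sigma$ keeps assigning new weight values, so that the loose semantics does not stop prematurely. (The naive attempt, where $\Sigma'$ merely guards the weight-function rules of $\Sigma$ and introduces no new symbols, already fails when $\Gamma$ has no relation symbols, e.g.\ for the program of Example~\ref{ex:receval}: then $\Sigma'$ would have no intensional relations at all and its loose termination index would be $0$.)

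Concretely, I would set $\Gamma' := \Gamma \cup \{D_F : F\in\Gamma \text{ a weight-function symbol}\}$. For a weight-function rule $F(\bvec x)\gets\theta$ of $\Sigma$, write $G_F(\bvec x)$ for the weight term $\ite{F(\bvec x)=\bot}{\theta}{F(\bvec x)}$; note that $G_F$ computes precisely the value $T_\Sigma$ would assign to $F$ on $\bvec x$, via \eqref{eqstrict}. Now $\Sigma'$ keeps every relational rule of $\Sigma$ unchanged, replaces each weight-function rule $F(\bvec x)\gets\theta$ by $F(\bvec x)\gets G_F(\bvec x)$, and adds the rule $D_F(\bvec x)\gets\neg\bigl(G_F(\bvec x)=\bot\bigr)$.

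The first step of the proof would be an induction on $i$ showing the lockstep invariant: for every $\Upsilon$-structure $\A$, the $i$-th stage $L_{\Sigma'}^i(\A')$ of the loose computation agrees on $\Upsilon\cup\Gamma$ with the $i$-th stage $T_\Sigma^i(\A')$ of the strict computation, and moreover $D_F$ at stage $i$ equals $\{\bvec a : T_\Sigma^i(\A')|_F(\bvec a)\neq\bot\}$, the current domain of $F$. The relational part of the inductive step is immediate, since the original rules are untouched and the $G_F$-terms do not mention any $D_F$; for the weight part one checks that, under the overwriting operator $L_{\Sigma'}$, the rule $F(\bvec x)\gets G_F(\bvec x)$ reproduces exactly \eqref{eqstrict}, and that $\neg(G_F=\bot)$ holds of exactly the tuples defined at the next strict stage (using that under $T_\Sigma$ a weight value, once different from $\bot$, never changes again). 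It is important that $D_F$ is defined from the ``updated'' term $G_F$ rather than directly from $F$, so that it tracks the domain of $F$ \emph{without lag}.

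The remaining step, and the part I expect to be the main obstacle, is to argue that $L_{\Sigma'}$ halts at the right moment. The intensional relations of $\Sigma'$ (the old $\Gamma$-relations together with the $D_F$) only grow, inside the finite universe, so the loose termination index $m$ is finite. By the invariant, ``the intensional relations of $\Sigma'$ do not change between stages $m$ and $m+1$'' unpacks to ``$T_\Sigma^m(\A')$ and $T_\Sigma^{m+1}(\A')$ have the same $\Gamma$-relations and the same domains for every weight function''; since the values of already-defined weight functions are frozen under $T_\Sigma$, equality of domains forces equality of the functions themselves, so in fact $T_\Sigma^m(\A') = T_\Sigma^{m+1}(\A')$, hence $T_\Sigma^m(\A') = T_\Sigma^\infty(\A') = \Sigma(\A)$. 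Combining this with the invariant, $\Sigma'^L(\A) = L_{\Sigma'}^m(\A')$ agrees with $\Sigma(\A)$ on $\Upsilon\cup\Gamma$, in particular on $\Gamma$, which is what we wanted. (The delicate ingredient here is precisely the lag-free tracking: with a lagging $D_F$ one would be left with an off-by-one gap in the termination-timing argument.)
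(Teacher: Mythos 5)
Your construction is essentially identical to the paper's: keep the relational rules, guard each weight rule as $F(\bvec x)\gets\ite{F(\bvec x)=\bot}{\theta}{F(\bvec x)}$, and add a fresh domain-tracking relation per weight-function symbol so that the loose termination index coincides with that of the strict computation. The only cosmetic difference is that you define the tracking relation from the guarded term $G_F$ rather than from $\theta$ directly (the paper uses $R_F(\bvec x)\gets\theta\neq\bot$); both yield the domain of $F$ at each stage, and your lockstep invariant and termination argument correctly fill in the details the paper leaves implicit.
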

\begin{proof}
  For the relation symbols in $\Gamma$, we have the same rule in
  $\Sigma'$ as in $\Sigma$.  Each weight function rule $F(\xx)
  \gets \theta$ in $\Sigma$ is replaced in $\Sigma'$ by 
  $F(\xx) \gets \textsf{if $F(\xx)=\bot$ then $\theta$ else
  $F(\xx)$}$.  Finally, to capture the functional fixpoint, 
  for each $F$ as above we include an extra
relation symbol $R_F$ in $\Gamma'$.
  The rule for $R_F$ is $R_F(\xx) \gets \theta\neq\bot$, thus
  keeping track of the tuples on which $F$ is defined. 
\end{proof}

Conversely, the functional fixpoint semantics can simulate the
loose semantics.  We can prove this by applying \emph{timestamping}
and \emph{delayed evaluation} techniques developed for
inflationary datalog with negation \cite{vianu-unchained}.  Here,
a timestamp of a stage in the loose fixpoint is a concatenation
of tuples, one for each intensional relation, so that at least
one of the tuples is newly added to its intensional relation.
Such timestamps become extra arguments of the intensional weight
functions.  We can then simulate function updates by defining the
function on new timestamps.  We can maintain a weak order on the
timestamps, or use an extra set of ``delayed'' intensional
relation names, so we can identify the timestamps from the
previous iteration as well as the current one.  Upon termination
of the simulating stratum, a subsequent stratum can be used to
project on the last timestamps to obtain the final result of the
loose fixpoint.

We conclude:

\begin{theorem} \label{theor-loose}
  \ifpsum\ programs
  under the functional fixpoint semantics and\/ \ifpsum\ programs
  under the loose fixpoint semantics are equivalent query
  languages for weighted structures.
\end{theorem}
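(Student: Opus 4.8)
The plan is to prove the two inclusions separately. The inclusion of the functional fixpoint semantics into the loose semantics is handled stratum by stratum by Proposition~\ref{func2loose}: given a functional-fixpoint \ifpsum\ program $\Pi=\Sigma_1;\dots;\Sigma_r$, replace each $\Sigma_i$ by the stratum $\Sigma_i'$ supplied by that proposition. Since ${\Sigma_i'}^L$ agrees with $\Sigma_i$ on the original intensional symbols $\Gamma_i$ — and both agree with the input on the extensional symbols, which no fixpoint touches — the fresh auxiliary symbols introduced by the $\Sigma_i'$ are merely carried along, so the sequential composition $\Sigma_1';\dots;\Sigma_r'$, read under the loose semantics, computes the same query, in particular on the designated answer symbol. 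The content of the theorem is the converse, and everything below concerns it.

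By the same compositional reasoning, for the converse it suffices to fix a single stratum $\Sigma$ of type $\Upsilon\to\Gamma$ and build a functional-fixpoint program $\Pi_\Sigma$ — a short sequence of strata — with $\Pi_\Sigma(\A)$ agreeing with $\Sigma^L(\A)$ on $\Upsilon\cup\Gamma$ for every $\Upsilon$-structure $\A$. Write the intensional relations of $\Sigma$ as $R_1,\dots,R_p$ and its intensional weight functions as $F_1,\dots,F_q$. The single obstruction to a direct simulation is that the loose semantics may overwrite a weight-function value, whereas a functional fixpoint writes each cell at most once; the remedy, following the timestamping and delayed-evaluation technique of \cite{vianu-unchained}, is to tag every weight value with the stage of the loose iteration that produced it, so distinct stages write into disjoint cells. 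Concretely, call a tuple $\bvec t=(\bvec a_1,\dots,\bvec a_p)$, with each $\bvec a_i$ an $\ar(R_i)$-tuple over the universe, a \emph{timestamp valid for stage $s$} if every $\bvec a_i$ lies in $R_i^{(s)}$ while at least one $\bvec a_i$ lies in $R_i^{(s)}\setminus R_i^{(s-1)}$. Because $\Sigma$'s intensional relations strictly grow at every step below the loose termination index $n$ — here we use that $n$ was taken to be the \emph{smallest} index at which the relations stabilise — every stage $1\le s\le n$ has a valid timestamp, and one checks that a given tuple is valid for exactly one stage; stage $0$, where all intensional relations are empty and all weight functions are $\bot$, is covered by an extra nullary flag. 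Each $F_j$ of arity $k_j$ is then replaced by a fresh weight function $\hat F_j$ of arity $k_j+\sum_i\ar(R_i)$, meant to satisfy ``$\hat F_j(\bvec x,\bvec t)$ is the value of $F_j$ at $\bvec x$ at the stage of $\bvec t$'', while delayed copies $R_i^-$ of the intensional relations make ``added at the current stage'' definable. A single fixpoint step of the main stratum of $\Pi_\Sigma$ then realises one loose step: replacing, in every rule body of $\Sigma$, each occurrence of an intensional $F_j$ by $\hat F_j(\,\cdot\,,\bvec t_s)$ for a stage-$s$ timestamp $\bvec t_s$, the relational rules add to $R_1,\dots,R_p$ exactly the tuples the loose operator would, and the rule for $\hat F_j$, guarded to fire only on timestamps that have just become valid for stage $s+1$, sets $\hat F_j(\bvec x,\bvec t_{s+1})$ to the value at $\bvec x$ of $\Sigma$'s defining weight term for $F_j$ under the same replacement; a one-step lag, absorbed by the delayed relations $R_i^-$, reconciles the fact that stage-$(s+1)$ timestamps only exist once $R_i^{(s+1)}$ has been formed. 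When the loose relations stabilise, no new timestamps appear, so the main stratum reaches its (always-existing) functional fixpoint; a final non-recursive stratum copies $R_1,\dots,R_p$ unchanged and sets $F_j(\bvec x):=\avg_{\bvec t\,:\,\mathrm{last}(\bvec t)}\hat F_j(\bvec x,\bvec t)$, all timestamps of the last stage carrying the same value for a fixed $\bvec x$.

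The main obstacle I anticipate is exactly the bookkeeping that turns the previous description into a proof: within the single main stratum one must be able both (i) to name a timestamp of the \emph{previous} stage, so that the defining formulas and weight terms get evaluated against the correct earlier state, and (ii) after stabilisation, to name the timestamps of the \emph{last} stage for the final projection — and neither is possible without threading some order information on the stages through an inflationary iteration. Supplying this is precisely the role of the weak order on timestamps and the delayed intensional relations of \cite{vianu-unchained}. Once that apparatus is in place, faithfulness reduces to a straightforward induction on $s$: if the state of $\Pi_\Sigma$ after $s$ simulated steps encodes $\rho^{(s)}$ in $R_1,\dots,R_p$ and the value of $F_j$ at stage $s$ in $\hat F_j(\,\cdot\,,\bvec t_s)$, then the rules reproduce the stage-$(s+1)$ state, so at the fixpoint $\Pi_\Sigma$ encodes $\Sigma^L(\A)$, which the final stratum reads off on $\Upsilon\cup\Gamma$.
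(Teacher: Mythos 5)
Your overall architecture is the same as the paper's: the easy direction via Proposition~\ref{func2loose} composed stratum by stratum, and the converse via timestamping a single stratum, with per-stage copies $\hat F_j$ of the intensional weight functions, delayed copies of the intensional relations to synchronise stages, and a final $\avg$-projection onto the last-stage timestamps. However, there is a genuine gap in your definition of timestamps. You require a valid stage-$s$ timestamp $(\bvec a_1,\dots,\bvec a_p)$ to have \emph{every} component $\bvec a_i$ lying in $R_i^{(s)}$, and you claim every stage $1\le s\le n$ admits one because some relation strictly grows at each step below the termination index. That last fact is true, but it does not give you a timestamp: if, say, $R_1$ grows at stage $s$ while $R_2^{(s)}$ is still empty (e.g.\ $R_2$'s defining formula only starts firing once $R_1$ is nonempty), then there is no tuple with a component in $R_2^{(s)}$ at all, so stage $s$ has no valid timestamp and your $\hat F_j$ have nowhere to store the stage-$s$ values. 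Your nullary flag only covers stage $0$, where \emph{all} relations are empty; it does not cover intermediate stages where some relations are empty and others are not. The paper's construction addresses exactly this by padding each component with a two-element prefix that encodes ``no tuple'' versus ``actual tuple,'' so that empty relations contribute a dummy component.

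That fix introduces a secondary obligation you also do not address: the ``no tuple'' encoding needs two distinct elements in the universe, so the construction only works on structures with $|A|\ge 2$. The paper therefore proves a separate lemma for structures with at most one element (where the loose termination index is bounded by the number of intensional relations, so the whole loose iteration unfolds into plain \fosum\ expressions) and combines the two cases by a definable case split on $\exists y_1 y_2\,(y_1\neq y_2)$. The remainder of your argument --- uniqueness of the stage a timestamp witnesses, the one-step lag between relation updates and weight updates (the paper simulates one loose step by two functional steps, guarding the relational rules so they do not fire during weight-update steps), and the final projection --- matches the paper and goes through once the timestamp encoding is repaired.
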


\begin{example} \label{exloose2func}
  \newcommand{\chosen}{\mathit{chosen}}
  \newcommand{\ord}{\mathit{ord}}
  \newcommand{\last}{\mathit{last}}
  \newcommand{\nxt}{\mathit{next}}
  The program from Example~\ref{floydwarshall} under the loose
  fixpoint semantics is simulated by the following program under
  the functional fixpoint semantics:
  \begin{align*}
    \chosen(k') \gets\;
    & \nxt(k'); \\
    D'(k',i,j) \gets\;
    &\textsf{if $\neg \nxt(k')$ then $\bot$} \\
    &\textsf{else if $\chosen=\emptyset$ then} \\
    &\hspace{4em}\textsf{if $W(i,j) > W(i,k')+W(k',j))$ then
      $W(i,k')+W(k',j)$}\\
    &\hspace{4em}\textsf{else $W(i,j)$} \\
    &\hspace{2em}\textsf{else if $\exists k(\last(k) \land D'(k,i,j) >
      D'(k,i,k')+D'(k,k',j))$ then}\\
    &\hspace{6em}\sum_{k:\last(k)}
      D'(k,i,k')+D'(k,k',j) \\
    &\hspace{4em}\textsf{else } \sum_{k:\last(k)} D'(k,i,j);\\
    D(i,j) \gets\;&\sum_{k:\last(k)} D'(k,i,j).
  \end{align*}
Here, $\last(k)$ is an abbreviation for $\chosen(x) \land
  \forall x((\chosen(x) \land x\neq k)\to \ord(x,k))$, which
  selects the node chosen in the previous iteration.
  The final rule defining $D$ is in a separate stratum.
  \qed
\end{example}

We remark that the above example is much simpler than the general
  construction in the proof of Theorem~\ref{theor-loose}. Indeed,
  in the example, we may assume a total order, getting timestamps
  for free.  The theorem, however,
  holds in general.
    It is an open question
whether timestamping is necessary for going from loose to functional
fixpoints.  Specifically, does there exist a query from weighted
structures to weight functions that is expressible in \ifpsum\ using
only unary intensional weight functions under the loose semantics, but
not under the functional fixpoint semantics?

\subsection{A Normal Form}

In this section, we consider \ifpsum\ in the framework and
language of classical fixed-point logics, as studied in recursion
theory and finite model theory.  We will prove a normal form
essentially stating that every \ifpsum-program is equivalent to a
program consisting of a single stratum with a single intensional
symbol followed by a selection and projection.

The redefinition of \ifpsum\ as a logic extending \fosum\
is standard \cite{moschovakis,ef_fmt2,gg_metafinite}, so we will be brief.
We add a new weight term formation rule to the grammar
\eqref{eq:fosum-formulas}--\eqref{eq:fosum-terms}:
\begin{equation}
  \label{eq:ifpsum-terms}
  \theta::=\ifp\big(
  F(x_1,\ldots,x_{\ar(F)})\gets\theta\big)(x'_1,\ldots,x'_{\ar(F)}),
\end{equation}
where $F$ is a weight function symbol; note that $\theta$ can
contain inner $\ifp$ operators.

The free variables of such an $\ifp$ term are $(\free(\theta) -
\{x_1,\ldots,x_{\ar(F)}\} \cup \{x'_1,\ldots,x'_{\ar(F)}\}$.  The
free, or \emph{extensional}, relation and function symbols in an
\ifpsum-expression $\xi$, denoted by $\ext(\xi)$, are defined in
a straightforward way, letting the $\ext$ of $\ifp$-term
\eqref{eq:ifpsum-terms} be $\ext(\theta)-\{F\}$.  A symbol $F$ is
\emph{intensional} in $\xi$ if it appears in a subterm $\ifp\big(
F(\xx)\gets\theta\big)(\xx')$ of $\xi$. We denote the set of all
intensional symbols of $\xi$ by $\ints(\xi)$. \emph{In the
following, for all\/ \ifpsum\ expressions $\xi$ we assume that
$\ints(\xi)\cap\ext\xi)=\emptyset$ and that every intentional
symbol is only bound by a single $\ifp$ operator.} This is
without loss of generality by renaming intensional symbols.

The semantics of \ifpsum\ extends the semantics of \fosum.
To define the semantics of
$
  \eta\coloneqq\ifp\big(
  F(\vec x)\gets\theta\big)(\vec x')
  $ on $(\A,\nu)$, with $\A$ an $\Upsilon$-structure with
$\ext(\eta)\subseteq\Upsilon$, 
we define a sequence $(F^{(t)})_{t\in\Nat\cup\{\infty\}}$ of
functions $F^{(t)}:A^{\ar(F)}\to\Rbot$ as follows.
We set
$F^{(0)}(\vec a)\coloneqq\bot$ for all $\vec a\in A^{\ar(F)}$, and 
$ F^{(t+1)}(\vec a)\coloneqq
  \sem{\theta}{(\A\frac{F^{(t)}}{F},\ass\frac{\vec a}{\vec x})}$
 if $F^{(t)}(\vec a)=\bot$, and $F^{(t)}(\vec a)$ otherwise.
Here, $\A\frac{F^{(t)}}{F}$ denotes the $\Upsilon\cup\{F\}$-structure that
coincides with $\A$ on all symbols in $\Upsilon\setminus\{F\}$ and
interprets $F$ by $F^{(t)}$. For $t=\infty$ we observe that for every $\vec a\in
A^{\ar(F)}$, if there is some $t$
such that $F^{(t)}(\vec a)\neq\bot$ then $F^{(t')}(\vec a)=F^{(t)}(\vec a)$ for all
$t'\ge t$, and in this case we let
$F^{(\infty)}(\vec a)\coloneqq F^{(t)}(\vec a)$. Otherwise, we let $F^{(\infty)}(\vec a)\coloneqq\bot$.
Finally, we let
$
  \sem{\eta}{(\A,\ass)}\coloneqq F^{(\infty)}\big(\ass(\vec x')\big).
$

We use the name \ifpsum\ both for the logic in the previous
section as well as the variant defined here. As we will show
next, the logics are indeed essentially the same. If we
explicitly need to distinguish between them,
we speak of \emph{\ifpsum\ strata} and
\emph{programs} for the syntax defined in Section~\ref{sec:ifpsum}
and of \emph{\ifpsum\ (weight) terms} and \emph{formulas}
for the syntax defined here.

We begin by observing that the
semantics of the ifp operator is compatible with the semantics of
strata from Section~\ref{sec:ifpsum}.  Specifically,
let $\theta(\vec x)$ be an \fosum-term and consider the
\ifpsum\ stratum $\Sigma\coloneqq F(\vec x)\gets\theta$,
where $\ext(\theta)\setminus\{F\}\subseteq\Upsilon$.
Then for all
$\Upsilon$-structures $\A$ we have
\[
  \sem{\ifp\big(
  F(\vec x)\gets\theta\big)}{\A}=F^{\Sigma(\A)}.
\]
Here we view $\sem{\ifp\big(
  F(\vec x)\gets\theta\big)}{\A}$ as the function from $A^{\ar(F)}$ to $\Rbot$ mapping a tuple $\vec a'$ to 
  $\sem{\ifp\big(F(\vec x)\gets\theta\big)(\vec x')}{\A}(\vec a')=\sem{\ifp\big(
  F(\vec x)\gets\theta\big)(\vec x')}{(\A,\ass\frac{\vec a'}{\vec x'})}$ for all assignments $\ass$.

An apparent difference, however, between programs and ifp-terms, is
that each ifp-term defines a single intensional weight function, while
programs can inductively define multiple intensional relations and
weight functions.  This does not imply a higher expressivity for
programs, however.  By adapting the proof of the Simultaneous
Induction Lemma \cite{moschovakis,ef_fmt2} to the \ifpsum\ setting, we
can show the following.

\begin{lemma} \label{simindlemma}
  For every \ifpsum\ program $\Pi$ with answer symbol $S$ there is a
  closed\/ \ifpsum\ expression $\xi$
  such that $ \sem{\xi}{\A}=S^{\Pi(\A)}$
  for all\/ $\Upsilon$-structures $\A$.
\end{lemma}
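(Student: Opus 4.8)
The plan is to proceed by induction on the structure of the program $\Pi$, following the classical Simultaneous Induction Lemma but with two kinds of intensional symbols (relations and weight functions) to accommodate. The key device is that relations can be encoded by weight functions: a $k$-ary intensional relation $R$ is represented by a $k$-ary intensional weight function $\chi_R$ that takes value $1$ on tuples in $R$ and $\bot$ elsewhere, so that $R(\bvec x)$ is simulated by the formula $\chi_R(\bvec x) \neq \bot$ (or $\chi_R(\bvec x) = 1$), and the inflationary "add to $R$" update $R(\bvec x) \gets \varphi$ becomes the functional-fixpoint update $\chi_R(\bvec x) \gets \textsf{if } \varphi \textsf{ then } 1 \textsf{ else } \bot$, which under Equation~\eqref{eqstrict} exactly realises the monotone accumulation. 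Thus it suffices to prove the lemma for programs all of whose intensional symbols are weight functions.

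Next I would handle a \emph{single stratum} $\Sigma$ of type $\Upsilon\to\Gamma$ with $\Gamma=\{F_1,\dots,F_n\}$ all weight functions, with rules $F_i(\bvec x_i)\gets\theta_i$. The idea, as in the simultaneous-induction argument, is to fold all $n$ functions into one intensional weight function $F$ of arity $1 + \max_i \ar(F_i)$, where the first argument ranges over $n$ distinct "tag" elements of the universe (assuming $|A|\geq n$; the small-universe case is handled separately, e.g. by a direct \fosum\ computation since there are only boundedly many structures, or by padding). Writing $\bvec t_1,\dots,\bvec t_n$ for tuples of constants that can be defined by \fosum\ formulas distinguishing the tags, I would set
\[
  F(s,\bvec x) \gets \bigvee\nolimits_i \big(\textsf{if } s = t_i \textsf{ then } \theta_i[F_j(\bvec y)/F(t_j,\bvec y)] \textsf{ else } \dots\big),
\]
i.e. a single weight term that dispatches on the tag $s$ and in each branch evaluates $\theta_i$ with every occurrence of $F_j(\bvec y)$ textually replaced by $F(t_j,\bvec y)$. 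One then checks that $\ifp\big(F(s,\bvec x)\gets\cdot\big)$ restricted to tag $t_i$ computes exactly $F_i^{\Sigma(\A)}$: the stagewise semantics of the single ifp matches the stagewise behaviour of $T_\Sigma$ because the "only update where $\bot$" clause is applied componentwise in both cases. This uses the compatibility observation stated just before the lemma, extended to the simultaneous setting.

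For the \emph{sequential composition} $\Pi;\Sigma$, I would use the inductive hypothesis to replace, inside the bodies of the rules of $\Sigma$, every occurrence of a symbol from the earlier stratum's output $\Gamma_1$ by the corresponding closed \ifpsum\ expression $\xi_S$ supplied by the hypothesis; since \ifpsum\ terms may contain nested $\ifp$ operators, this substitution stays within the logic, and after the renaming convention (every intensional symbol bound by a unique $\ifp$, disjoint from extensional symbols) it is well-formed. This reduces $\Pi;\Sigma$ to a single stratum over the original vocabulary $\Upsilon$, to which the previous paragraph applies. Finally, the designated answer symbol $S$ — whether a relation or weight function — is extracted: if $S$ is a weight function it is read off directly as a component (tag) of the folded $F$; if $S$ is a relation, it is $\{\bvec a : \chi_S^{\Pi(\A)}(\bvec a)\neq\bot\}$, but since the lemma only asks for a \emph{closed expression} $\xi$ with $\sem{\xi}{\A}=S^{\Pi(\A)}$, and an expression is either a formula or a weight term, for relational $S$ we take $\xi$ to be the formula $\chi_S$-projection, while for weight-function $S$ we take the projecting weight term; in the paper's phrasing "$\sem{\xi}{\A}=S^{\Pi(\A)}$" is read with the appropriate type.

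The step I expect to be the main obstacle is the bookkeeping in the single-stratum folding: getting the tag-substitution $\theta_i[F_j(\bvec y)/F(t_j,\bvec y)]$ to respect arities (padding shorter argument tuples) and, more delicately, verifying rigorously that the \emph{inflationary} behaviour is preserved — namely that $F^{(t)}(t_i,\bvec a)$ becomes non-$\bot$ at exactly the stage where $F_i$ becomes defined on $\bvec a$ under $T_\Sigma$, despite the fact that a single ifp freezes a value as soon as \emph{any} branch produces a non-$\bot$ result for that argument tuple $(t_i,\bvec a)$ (which is fine, since only the $s=t_i$ branch is ever relevant for that tuple, but this needs the dispatch to output $\bot$ in all other branches). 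The small-universe corner case ($|A| < n$, so not enough tags) also needs a clean argument; the cheapest fix is to observe that for universes of size below the fixed bound $n$ the whole query is a function of a bounded finite structure and is therefore already \fosum-definable by a big case distinction, which is in particular an \ifpsum\ expression.
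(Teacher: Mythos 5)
Your overall strategy --- encode intensional relations by characteristic weight functions, fold all intensional symbols into a single one via a ``tag'' argument, and handle sequential composition by substituting the expression supplied by the inductive hypothesis --- is the same as the paper's. The genuine gap is in the dispatch mechanism. You index the components by $n$ distinct \emph{elements} $t_1,\dots,t_n$ of the universe and write the body as a case split on ``$s=t_i$'', describing the $t_i$ as ``tuples of constants that can be defined by \fosum\ formulas''. But \fosum\ has no element constants, and in an unordered structure with nontrivial automorphisms (e.g.\ a bare set, or a graph with symmetries) no individual element is definable, so the formula $s=t_i$ cannot be written and there is no canonical, uniformly definable choice of $n$ distinct tags. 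The paper avoids this by tagging with $(k+\ell)$-\emph{tuples} $\vec z$ together with the definable property $\chi_i(\vec z)$ (``the $i$th entry differs from the others and all remaining entries coincide''), which identifies the index $i$ without naming any element and only requires $|A|\ge 2$. Your construction could be repaired by existentially quantifying the tags as parameters of the $\ifp$ term and arguing that the outcome is independent of the choice, but as written the central step does not go through, and the tuple-tagging device is exactly the missing idea.

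Secondarily, your fallback for small universes is justified incorrectly: when $|A|<n$ there are \emph{not} ``boundedly many structures'', because the weight functions take arbitrary real (or rational) values, so a finite case distinction over structures is impossible. What is true --- and what the paper uses for $|A|=1$ and its small-domain lemma uses in general --- is that on a universe of bounded size every inflationary fixpoint stabilises within a bounded number of iterations, so the program can be unfolded syntactically into a single \fosum\ expression. With the tuple-tagging fix the exceptional case shrinks to $|A|=1$, where this unfolding is immediate.
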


The remaining difference between ifp-terms 
 and programs is that ifp-operators can be nested, but strata
 can only be composed sequentially.  We can, however, show the
 following normal form.
Let us say that a \emph{selection condition} is a conjunction of
equalities and inequalities.

 \begin{lemma}\label{lem:normal-form}
   Every \ifpsum\ formula $\varphi(\vec x)$, as well as
     every \ifpsum\ program with a relation symbol as answer
       symbol, is equivalent to a formula of the form
       $
       \exists \vec y\big(\chi(\vec y)\wedge\ifp\big(F(\vec x,\vec y)\gets\theta\big)(\vec
       x,\vec y)\big) $,
     where $\chi$ is a selection condition and $\theta(\vec x,\vec y)$ is
     an \fosum\ term.
   Also, every \ifpsum\ term $\eta(\vec x)$, as well as every
     \ifpsum\ program with a weight function symbol as answer
       symbol, is similarly equivalent to a term of the form
       $\avg_{\vec y:\chi(\vec y)}\ifp\big(F(\vec x,\vec y)\gets\theta\big)(\vec
       x,\vec y)$.
 \end{lemma}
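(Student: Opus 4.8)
The plan is to reduce both alternatives in the statement, in two stages, to a single application of $\ifp$ with an \fosum\ body, followed only by a selection and a projection (an $\exists$, resp.\ an $\avg$). \emph{Stage~1 removes programs.} By Lemma~\ref{simindlemma} every \ifpsum\ program is equivalent to a single \ifpsum\ expression — a formula if the answer symbol is relational (reading $\sem{\varphi}{\A}(\vec a)=1$ as $\vec a\in S^{\Pi(\A)}$), a weight term otherwise — so it suffices to bring an arbitrary logic-syntax formula $\varphi(\vec x)$ or term $\eta(\vec x)$ into the claimed form. Recall that in this syntax the only intensional symbols are weight functions, introduced by $\ifp$ operators that may be nested and may share variables with the enclosing expression. \emph{Stage~2} then proceeds by a simultaneous induction on the structure of $\varphi$ and $\eta$ establishing both claimed forms at once, with two kinds of step.

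\emph{Step (i): absorbing an \fosum\ context.} Suppose $\varphi$ or $\eta$ is built by one \fosum\ construct (a comparison, $\neg$, $\wedge$, $\exists$, an arithmetic operation, an if-then-else, or a summation) from subexpressions that are, inductively, already in normal form, each carrying its own $\ifp$ on a function $F_i$ with \fosum\ body. I would merge the $F_i$ into a single intensional function $F$ of larger arity by the usual simultaneous-induction packing: a block of ``tag'' coordinates — selectable afterwards by a selection condition, since distinct tags can be encoded by equality patterns among the coordinates — picks out the slice simulating each $F_i$. To $F$ I add one further slice whose rule fires only once all $F_i$-slices have stabilised; ``stabilised'' is \fosum-definable from the bodies (``one more application of each rule defines no new tuple''), and once true it stays true, so this slice then evaluates the \fosum\ construct in question and stores the result — a Boolean value in the formula case — broadcast over the whole slice. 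The outer $\ifp$ is applied to $F$, and the answer is read off the last slice: $\exists\vec y\big(\chi(\vec y)\wedge\ifp\big(F(\vec x,\vec y)\gets\theta\big)(\vec x,\vec y)\big)$, with the $\ifp$ term read as a formula via ``$=1$'', in the formula case, and the corresponding $\avg$ in the term case. Termination is inherited from that of $T_\Sigma$: every slice only grows.

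\emph{Step (ii): flattening a nested $\ifp$.} This is the main obstacle. Consider $\eta=\ifp\big(G(\vec x)\gets\theta_G\big)(\vec x')$ where, by step~(i), $\theta_G$ is a single $\ifp$ on a function $H$ with \fosum\ body and — the hard case — $\theta_G$, hence the $H$-recursion, may mention the outer function $G$; then $H$ must be recomputed from scratch for each stage of $G$, which the freeze-once-defined functional semantics cannot do in place. I would handle this exactly as in the passage from the loose to the functional semantics (proof of Theorem~\ref{theor-loose}): equip the $H$-slice of the combined function with a block of \emph{timestamp} coordinates; maintain a weak order on timestamps within the same fixpoint; whenever a $G$-step is taken, allocate a fresh timestamp and begin a new $H$-copy ``from $\bot$'' under the current $G$; take a $G$-step only at a timestamp where the current $H$-copy has saturated (again \fosum-definable); and read $G^{(\infty)}(\vec x')$ off the $G$-slice at the final timestamp. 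Packing $G$, the timestamp order, and the timestamped $H$ into one function with one \fosum\ body, then applying step~(i) once more to absorb the surrounding context, yields the single-$\ifp$ normal form. The delicate points — and where I expect the real work to lie — are the \fosum-definability of all the bookkeeping (saturation tests, timestamp allocation and ordering) and the termination argument (the timestamp order has bounded length because $G$ has only boundedly many stages, so the combined system still reaches a fixpoint); both follow the established timestamping / delayed-evaluation pattern.
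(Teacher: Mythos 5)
Your proposal takes essentially the same route as the paper's (sketched) proof: programs are collapsed to single expressions via the Simultaneous Induction Lemma, and nested $\ifp$-operators are then flattened by packing all intensional symbols into one simultaneous induction that runs the inner induction to saturation, takes one outer step, re-runs the inner induction, and so on, with \fosum-definable stabilisation flags and tag coordinates selectable by a selection condition. Your explicit timestamping to restart the inner induction under the freeze-once functional semantics is precisely the mechanism the paper leaves implicit in its sketch (it is the same device it uses to prove Theorem~\ref{theor-loose}), so the two arguments coincide.
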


   This lemma can be proved along the lines of the analogous
   normal-form result for inflationary fixed-point logic
\cite[Chapter~8]{ef_fmt2}. 
   The idea of the proof is to first simulate nested $\ifp$-operations
   by a simultaneous induction, as it is defined by an \ifpsum\
   stratum. In the simultaneous induction, we first step through the
   inner induction until it reaches a fixed point. Then we take a
   single step of the outer induction, again run through the inner
   induction till it reaches a fixed-point, take a step of the outer
   induction, et cetera, until the outer induction reaches a fixed
   point. For this to work, it is crucial that we can detect if an
   induction has reached a fixed point in \fosum\ and then set a flag
   to indicate that to the outer induction.

 \begin{corollary}
   \begin{enumerate}
   \item Every \ifpsum\ program whose answer symbol is a relation
     symbol is equivalent
     to an \ifpsum\ formula of the form
     \[
       \exists \vec y\Big(\chi(\vec y)\wedge\ifp\big(F(\vec x,\vec y)\gets\theta\big)(\vec
       x,\vec y)\Big),
     \]
     where $\chi$ is a selection condition and $\theta$ is
     an \fosum\ term.
   \item Every \ifpsum\ program whose answer symbol is a function
     symbol is equivalent
     to a term of the form
     \[
       \avg_{\vec y:\chi(\vec y)}\ifp\big(F(\vec x,\vec y)\gets\theta\big)(\vec
       x,\vec y),
     \]
     where $\chi$ is a selection condition and $\theta$ is
     an \fosum\ term.
   \end{enumerate}
 \end{corollary}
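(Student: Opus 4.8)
The plan is to obtain both items directly from the two preceding lemmas; the Corollary is essentially the special case of Lemma~\ref{lem:normal-form} in which the input expression arises from a program. First I would invoke the Simultaneous Induction Lemma (Lemma~\ref{simindlemma}): given an \ifpsum\ program $\Pi$ over $\Upsilon$ with answer symbol $S$, it produces a single closed \ifpsum\ expression $\xi$ with $\sem{\xi}{\A}=S^{\Pi(\A)}$ for every $\Upsilon$-structure $\A$. The syntactic category of $\xi$ is dictated by that of $S$: it is an \ifpsum\ formula when $S$ is a relation symbol, and an \ifpsum\ weight term when $S$ is a weight function symbol.

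Next I would apply Lemma~\ref{lem:normal-form} to $\xi$. In the relational case, the formula clause of the lemma rewrites $\xi$ into $\exists\vec y\big(\chi(\vec y)\wedge\ifp\big(F(\vec x,\vec y)\gets\theta\big)(\vec x,\vec y)\big)$, where $\chi$ is a selection condition and $\theta$ is an \fosum\ term; in the weight-function case, the term clause rewrites $\xi$ into $\avg_{\vec y:\chi(\vec y)}\ifp\big(F(\vec x,\vec y)\gets\theta\big)(\vec x,\vec y)$, again with $\chi$ a selection condition and $\theta$ an \fosum\ term. Since $\sem{\xi}{\A}=S^{\Pi(\A)}$, the resulting normal-form expression is equivalent to the program, which is precisely the claim. (In fact the two clauses of Lemma~\ref{lem:normal-form} already mention \ifpsum\ programs explicitly, so one could read the Corollary off that lemma verbatim; routing through Lemma~\ref{simindlemma} just makes transparent how the program case is subsumed by the expression case.)

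I do not anticipate any real obstacle here: all the substantive work lies in Lemmas~\ref{simindlemma} and~\ref{lem:normal-form}. The only point meriting a sentence of care is the bookkeeping of free variables — that the tuple $\vec x$ appearing in the normal form matches the argument arity of the answer symbol $S$, and that no spurious free variables are introduced by $\chi$ beyond the bound $\vec y$ — and this is immediate from the statements of the two lemmas.
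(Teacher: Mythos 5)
Your proposal is correct and matches the paper's (implicit) argument: the corollary is stated without separate proof precisely because Lemma~\ref{lem:normal-form} already covers programs with relational or functional answer symbols, with Lemma~\ref{simindlemma} supplying the reduction from programs to single expressions. Nothing further is needed.
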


  \begin{remark}
    Since our motivation is neural networks, we fixed
    the numerical domain to the reals.
However, the results presented in this section
generalise to any
    numerical domain with aggregates
    \cite{gg_metafinite,libkin_sql}.  The only requirement is that
    the logic, without recursion, can express the construct
    $\uniq_{x:\varphi} \theta$, with $\varphi$ a formula
    and $\theta$ a weight term, having the following semantics.
    Suppose there exists $a \in A$ such that $\A \models
    \varphi(a)$, and moreover, for all such $a$, the value
    $\sem\theta\A(a)$ is the same, say $r \in \Rbot$.
    Then we define
    $\sem{\uniq_{x:\varphi}\theta}{\A}$ to be this $r$.
    Otherwise, we define it to be $\bot$.

    In \fosum, we can indeed express this as
    \textsf{if $\forall x\forall
      x'((\varphi(x)\land\varphi(x'))\to\theta(x)=\theta(x'))$
      then $\avg_{x:\varphi}\theta$ else $\bot$},
    where by $\varphi(x')$ and $\theta(x')$ we mean that $x'$ (a
    fresh variable) is substituted for the free occurrences of $x$.
  \end{remark}

\section{A Polynomial Time Fragment} \label{secscalar}

In finite model theory, fixed-point logics are typically used to
capture polynomial-time computations, and it would be nice if we could
use \ifpsum\ to capture the polynomial properties of neural
networks.  We have to restrict our attention to weighted
structures with rational weights if we want to do this, at least if we
want to work in a traditional computation model. But even then
it turns out that \ifpsum\ terms cannot be
evaluated in polynomial time, as their values may get too large to
even be represented in space polynomial in the size of the
input structure.

\begin{example}[\cite{gg_metafinite}] \label{exa:squaring}
  Consider the following \ifpsum\ term $\eta(x)$:
  \[
    \eta(x)\coloneqq
    \ifp\big(F(x)\gets
      \ite{\exists y\, E(y,x)}{\,(\sum_{y:E(y,x)} F(y))\cdot
       (\sum_{y:E(y,x)} F(y))}{2}\big)(x).
 \]
 Then if $\A$ is a path of length $n$ and $a$ is last vertex of this
 path, then $ \sem{\eta}{\A}(a)=2^{2^n} $.
\end{example}

To avoid repeated squaring as illustrated above,
we will define a fragment $\sifpsum$, called \emph{scalar
  $\ifpsum$}, that limits the way multiplication
and division can be used.
It forbids multiplication between two terms that both
contain intensional function symbols.  
In other words, we only allow scalar multiplication when
intensional weight functions are involved, where ``scalar'' is
interpreted liberally as ``defined nonrecursively''.
We also forbid division by recursively defined terms.

Formally, for a set $\CF$ of function symbols, the syntax of
\emph{$\CF$-scalar formulas and terms} follows exactly the
grammar \eqref{eq:fosum-formulas}, \eqref{eq:fosum-terms},
\eqref{eq:ifpsum-terms} of \ifpsum\ formulas and terms, except
that the rules for multiplication, division, and $\ifp$ are
changed as follows:
\begin{itemize}
\item if $\theta_1$ is $\CF$-scalar and $\theta_2$ is a term with
  $\ext(\theta_2)\cap\CF=\emptyset$, then $\theta_1\cdot\theta_2$,
  $\theta_2\cdot\theta_1$, and $\theta_1/\theta_2$ are $\CF$-scalar;
\item if $\theta$ is $\CF\setminus\{F\}$-scalar, then $\ifp\big(F(\vec x)\gets\theta\big)(\vec
  x')$ is $\CF$-scalar.
\end{itemize}
Now an \ifpsum\ expression $\xi$ is called \emph{scalar} if all its 
subexpressions are
$\ints(\xi)$-scalar.  We denote the scalar fragment of
\ifpsum\ by $\sifpsum$.
  Similarly, an \ifpsum\ stratum $\Sigma$ is scalar if
  all subexpressions in rules are $\ints(\Sigma)$-scalar,
and a program is scalar if all its strata are.

\begin{example}
  The term from
Example~\ref{exa:squaring} is not scalar as it contains
  multiplication of two terms involving the intensional symbol
  $F$.  The programs in
    Examples \ref{ex:receval} and \ref{floydwarshall}
    are scalar.  \qed
  \end{example}

When we think about the complexity of evaluating \ifpsum\ expressions,
we restrict our attention to structures with rational weights, which
for simplicity we call \emph{rational structures}. For a rational
$\Upsilon$-structure $\A$, by $\|\A\|$ we denote the bitsize of an
encoding of $\A$. Then
\[
  \|\A\|=O\big(|A|+\sum_{R\in\Upsilon\text{ relation
        symbol}}|R^\A|+\sum_{\substack{F\in\Upsilon\text{
          weight-}\\\text{function symbol}}}\sum_{\vec a\in
        A^{\ar(F)}}\|F^\A(\vec a)\|\big),
\]
where for an integer $n\in\Nat$, we let $\|n\|$ be the length of the
binary encoding of $n$ and for a rational $p/q\in\Q$ in reduced
form we let $\|p/q\|=\|p\|+\|q\|$.

We can show that scalar \ifpsum\ has polynomial data complexity.

\begin{theorem}\label{theo:eval-ifp}
  There is an algorithm that, given an $\sifpsum$ expression $\xi$,
  a rational structure $\A$, and a tuple $\vec a\in A^{|\vec
      x|}$, computes $\sem{\xi}{\A}(\vec a)$ in time
  polynomial in $\|\A\|$. 
\end{theorem}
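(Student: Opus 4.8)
The goal is a polynomial-time evaluation algorithm for scalar \ifpsum\ on rational structures. The plan is to proceed by structural induction on the expression $\xi$, computing along the way, for each subexpression and each tuple of values for its free variables, the value it takes in $\A$. The only genuinely new phenomena over ordinary \fosum\ evaluation are (i) the $\ifp$ operator, whose naive iteration could run for exponentially many steps or produce exponentially large numbers, and (ii) the interaction of recursion with the arithmetic operators. The scalar restriction is precisely what neutralises (ii); the fact that $T_\Sigma$-style iteration grows a finite relation/function monotonically handles (i)'s step count; and a separate size bound handles (i)'s bitsize. I expect the main work to be the bitsize bound, so I will organise the proof around it.

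\textbf{Step 1: reduce to a single $\ifp$ over an \fosum\ body.} First I would invoke Lemma~\ref{simindlemma} and Lemma~\ref{lem:normal-form} (and the Corollary), or rather their scalar-preserving analogues, to assume $\xi$ is $\ifp\big(F(\vec x,\vec y)\gets\theta\big)(\vec x,\vec y)$ wrapped in an $\exists$ or $\avg$ over a selection condition, where $\theta$ is an \fosum\ term. I must check that the normal-form transformations preserve scalarity — the simultaneous-induction and flag-detection constructions only introduce relational bookkeeping and do not create new products of recursively defined terms, so this is routine but needs to be stated. Nested $\ifp$'s are handled by first evaluating the innermost one (by induction) and treating its values as a new weight-function interpretation on $\A$; because $\xi$ is scalar, an inner $F$ is only ever scalar-multiplied against outer recursive terms, which is exactly the situation the inductive hypothesis is set up to exploit.

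\textbf{Step 2: bound the number of iterations and the arithmetic per iteration.} For a single $\ifp(F(\vec x)\gets\theta)$: the sequence $F^{(t)}$ is monotone in the sense that once $F^{(t)}(\vec a)\neq\bot$ its value never changes, so it stabilises after at most $|A|^{\ar(F)}\le|A|^{O(1)}$ steps. Each step requires evaluating the \fosum\ term $\theta$ on all $|A|^{|\free(\theta)|}$ assignments, and \fosum\ evaluation over a structure whose weights are already present is polynomial-time in the structure's bitsize by the usual bottom-up evaluation (summations become polynomially large sums of products of numbers already computed). So the iteration count and the per-step work are fine; everything hinges on controlling $\|F^{(t)}(\vec a)\|$.

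\textbf{Step 3 (the crux): a polynomial bitsize bound for all intermediate weights.} The hard part will be proving that every value computed during the evaluation — in particular every $F^{(t)}(\vec a)$ — has bitsize polynomial in $\|\A\|$ (for fixed $\xi$). The scalar restriction gives the key structural invariant: in $\theta$, whenever a product $\theta_1\cdot\theta_2$ (or quotient $\theta_1/\theta_2$) occurs with $F$ among the intensional symbols, one factor $\theta_2$ is \emph{$F$-free}, hence its value is determined by $\A$ alone and has bitsize bounded by a fixed polynomial $p(\|\A\|)$ \emph{independent of $t$}. Consequently $F^{(t+1)}(\vec a)$ is obtained from the values $F^{(t)}(\vec b)$ by: applying a fixed \fosum\ term that multiplies/divides them only by $\A$-determined scalars, adds at most $|A|^{O(1)}$ of them together, and takes \textsf{if-then-else} branches. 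Writing everything over a common denominator, one shows that if $D$ bounds the common denominator size and $N$ the numerator size at stage $t$, then at stage $t+1$ the denominator is multiplied by at most a $2^{p(\|\A\|)}$ factor and the numerator grows additively/multiplicatively by polynomial amounts; crucially, \emph{no value is ever multiplied by another value that itself grew during the recursion}. Since the recursion runs for only polynomially many stages, $D$ and $N$ remain polynomial in $\|\A\|$. (This is exactly where Example~\ref{exa:squaring}, which violates scalarity, blows up: there $F^{(t)}$ is squared, doubling its bitsize each step.) I would make this precise by fixing a reduced-fraction representation and tracking the least common denominator of $\{F^{(t)}(\vec a)\}_{\vec a}$ and a bound on the numerators, proving by induction on $t$ that both are $\le q(\|\A\|)$ for a polynomial $q$ depending only on $\xi$; one must be mildly careful that reduction to lowest terms never \emph{increases} size, and that division by a recursive term is already syntactically forbidden so a recursive value never appears in a denominator-growing role.

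\textbf{Step 4: assemble.} With Step 3 in hand, the overall algorithm evaluates $\xi$ bottom-up: for each $\ifp$ subterm, initialise $F^{(0)}\equiv\bot$, iterate $T_\Sigma$-style until no change (at most $|A|^{O(1)}$ times), at each iteration doing a polynomial-time \fosum\ evaluation on numbers of polynomial bitsize (Step 3), then pass the resulting weight function up as data for the enclosing expression. The outer $\exists/\avg$ over the selection condition is a polynomial-size quantification/summation over already-computed polynomial-size rationals. Multiplying the (constant) number of subexpressions of $\xi$ by the polynomial costs yields total running time polynomial in $\|\A\|$, as required.
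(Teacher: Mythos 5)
Your proposal is correct and follows essentially the same route as the paper: reduce to the normal form (checking it preserves scalarity), bound the iteration count by $|A|^{\ar(F)}$, and prove a per-iteration growth lemma showing that a common denominator and a numerator bound for $F^{(t)}$ each grow by at most a fixed, $t$-independent multiplicative factor of polynomial bitsize, so that after polynomially many iterations all intermediate values still have polynomial bitsize. The paper formalises your Step~3 as a structural induction on the \fosum\ body (its Claim~1, with explicit constants $c,d$), but the invariant you identify — scalarity ensures no recursively-grown value is ever multiplied by, or divides, another — is exactly the one used there.
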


To prove this theorem (see Appendix), we can start from the normal form of
Lemma~\ref{lem:normal-form}, which can be seen to preserve
scalarness.  The proof amounts to a careful verification that, in a recursive
rule $F(\xx) \gets \theta$, the numerator and denominator of
$\sem\theta\A(\bvec a)$
only depend linearly on $F^\A$, in a sense that can be made
precise.  The difficulty lies in controlling the growth of
denominators under the arithmetic operators.

\begin{remark}
In the language of parameterized complexity, the
\ifpsum-evaluation problem is in the complexity class XP if we
parameterized by the length of the expression. As \ifpsum\
contains the standard fixed-point logic UFP, the problem is
actually hard for (uniform) XP \cite{FlumG06}. 
\qed
\end{remark}

Even on unweighted relational structures, \sifpsum\ is more
expressive than plain fixed point logic \IFP,
because it can count the number of elements in
definable sets using the summation operator, then compare such
counts to express, e.g., the majority query.
Yet, unsurprisingly, we have the following even for full \ifpsum.

\begin{theorem}\label{theo:inexp}
  There is a boolean query on graphs that is decidable in polynomial
  time, but not expressible in \ifpsum.
\end{theorem}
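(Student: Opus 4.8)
The statement is a separation: some polynomial-time graph property is not expressible in \ifpsum. The natural strategy is a locality / Ehrenfeucht--Fra\"iss\'e style argument, but that is doomed for full \ifpsum\ because fixpoint logics are \emph{not} local — \ifpsum\ can express reachability and much more. Instead, the right approach is a \emph{data-complexity lower bound}: identify a graph property whose decision requires more than the computational power that \ifpsum\ evaluation provides on unweighted structures, or — more robustly — pick a property that \ifpsum\ provably \emph{cannot} decide by a counting/capture argument. Concretely, I would exploit the fact (noted just before the theorem) that on unweighted relational structures \ifpsum\ is essentially ``fixed-point logic with counting'' in disguise: the summation operator over definable sets, applied to rational weights all equal to $1$, gives exactly cardinality counting, and nothing more, because without genuine arithmetic on the weights the numeric values stay bounded (polynomially many distinct integer values). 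Thus on unweighted structures \ifpsum\ collapses to (a mild extension of) \IFP+C, fixed-point logic with counting. Then I would invoke the celebrated result of Cai, F\"urer and Immerman: \IFP+C does not capture \PTIME; there is a polynomial-time graph property (distinguishing the CFI graphs) not expressible in \IFP+C.

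**Key steps, in order.** First I would make precise the claim that \ifpsum\ restricted to unweighted structures — i.e.\ structures where every weight function is constantly $1$ or $\bot$ — is no more expressive than \IFP+C. This requires showing that in any \ifpsum\ evaluation over such a structure, every intensional weight function takes values in a set of size polynomial in $|A|$ (in fact, values that are ``small'' rationals built by bounded arithmetic from $1$ and the rational constants of the formula). The subtle point is that arithmetic $+,-,\cdot,/$ applied to such values can in principle blow up; but by the \emph{scalar}-free analysis is not available here (we want full \ifpsum). So instead I would argue more carefully: simulate \ifpsum\ by \IFP+C \emph{augmented with the ability to name the polynomially many weight values that actually occur}. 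Since the fixpoint stabilises after polynomially many stages and each stage introduces only polynomially many new function entries, the set of values reachable is itself computable in \PTIME\ and — crucially — invariant under automorphisms of $\A$; hence an \IFP+C formula can speak about these values as a definable linearly ordered set of ``numbers,'' and can carry out the arithmetic on them via counting. This gives an effective translation from \ifpsum\ sentences to \IFP+C sentences that agree on all unweighted structures.

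Second, having established $\ifpsum \le \IFP{+}C$ on unweighted structures, I would take the Cai--F\"urer--Immerman construction: the pairs of non-isomorphic graphs $X(G)$, $\tilde X(G)$ obtained by the CFI twisting of a suitable family of expander graphs $G$. Isomorphism of such pairs is decidable in polynomial time (one checks a linear-algebra parity condition over $\mathbb F_2$), yet no \IFP+C sentence distinguishes $X(G)$ from $\tilde X(G)$ for large enough $G$. The boolean query ``is the given CFI graph isomorphic to the untwisted one'' is therefore in \PTIME\ but not in \IFP+C, hence not in \ifpsum.

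**The main obstacle.** The crux is Step one: pinning down exactly what ``\ifpsum\ on unweighted structures'' computes, and ensuring the arithmetic cannot secretly encode non-\IFP+C information. The danger is that multiplication of many terms of value $1$ is harmless, but division — producing, say, $1/n$ — together with further arithmetic could in principle generate a rich family of rationals whose \emph{bit patterns} might encode something an \IFP+C formula cannot see. I expect one can rule this out by observing that all such values are order-invariant: any arithmetic combination of the constants and of counting-definable integers is itself counting-definable, so the isomorphism type of $\A$ determines the entire run of the \ifpsum\ program, stage by stage. Making this ``the run is \IFP+C-definable'' argument watertight — handling the loose-versus-functional semantics, the stratification, and the nesting of $\ifp$ operators — is where the real work lies; once it is done, the CFI lower bound delivers the theorem immediately. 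An alternative, should the translation to \IFP+C prove delicate, is a direct argument: run a pebble-game / bijective-pebble-game argument tailored to \ifpsum\ on the CFI graphs, showing the weight values on corresponding tuples stay equal throughout the fixpoint iteration; but the cleanest exposition is via the reduction to \IFP+C and the off-the-shelf Cai--F\"urer--Immerman theorem.
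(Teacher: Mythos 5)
Your overall target (CFI graphs plus invariance under a counting-style equivalence) is the right one, and the ``alternative'' you mention in passing --- a bijective pebble game adapted to \ifpsum\ --- is exactly what the paper does. But the route you actually commit to, a translation of \ifpsum\ on unweighted structures into \IFP{+}C, has a genuine gap at its central step. You claim that on unweighted structures every intensional weight function takes only ``small'' rational values built by bounded arithmetic from $1$ and the constants of the formula, so that \IFP{+}C can name these values and do the arithmetic via counting. This is false: the paper's Example~\ref{exa:squaring} is an unweighted graph (a path of length $n$) on which an \ifpsum\ term computes $2^{2^n}$ by repeated squaring. Such values need exponentially many bits, so they cannot be stored, indexed, or arithmetically manipulated inside a polynomial-size \IFP{+}C interpretation; the number of \emph{entries} of $F$ is polynomial, but the values themselves are not. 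Note also that a faithful translation to \IFP{+}C would give \ifpsum\ polynomial-time data complexity on unweighted structures, which the paper explicitly leaves open --- a warning sign that no easy such translation exists.

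The fix is to drop the value-boundedness claim entirely and argue invariance directly, which is what the paper's proof does: it proves that if Duplicator wins the bijective $k$-pebble game on $\A,\B$, then every \fosum\ formula and \emph{weight term} with at most $k$ variables takes identical values on corresponding tuples --- the summation case is handled by assembling Duplicator's round-by-round bijections into a bijection between the tuples being summed over, so equality of the (possibly huge) sums follows without ever bounding them. Two further ingredients you would still need even on your fallback route: (i) the normal form (Lemma~\ref{lem:normal-form}) to reduce an arbitrary \ifpsum\ sentence to a single $\ifp$ applied to an \fosum\ term, and (ii) an unfolding of the fixpoint into a sequence of \fosum\ terms $\theta^{(t)}$ that \emph{reuse} variables so the variable count stays at $2\ell+m$ independently of $t$; since the fixpoint converges in at most $|A|^{\ell}$ stages, the whole sentence collapses to a bounded-variable \fosum\ sentence, and the game lemma plus the Cai--F\"urer--Immerman graphs finish the proof. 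Without (ii), invariance of the stages does not by itself give invariance of the sentence for a \emph{fixed} number of pebbles.
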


This result can be proved by standard techniques; we give a
self-contained proof in the Appendix.
We show that the bijective
pebble game \cite{Hella96}, which is the standard tool for
proving inexpressibility in fixed-point logic with counting, can
also be used to prove inexpressibility in \ifpsum.  This was
already known for logics with aggregates quite similar to \fosum\
\cite{hlnw_aggregate}.  Then the usual Cai-Fürer-Immerman
construction \cite{cfi} provides an example of a query
inexpressible in  \ifpsum.

\begin{figure}
  \centering
  \begin{tikzpicture}[
        vertex/.style = {draw,circle,minimum size = 5pt,inner
          sep=0pt},
        internal/.style = {draw,fill,circle,minimum size = 4pt,inner
          sep=0pt},
        ]
    \begin{scope}
      \node[vertex] (in) at (0,0) {};
      \node[vertex] (out) at (0,1.6) {};
      \draw[thick,->] (in) edge node[right] {$a$} (out);
      \path (0,-0.5) node {(a)};
    \end{scope}

    \begin{scope}[xshift=4cm]
      \node[vertex] (in) at (0,0) {};
      \node[vertex] (out) at (0,1.6) {};
      \node[internal] (i1) at (-1,0.8) {}; 
      \node[internal] (i2) at (-0.5,0.8) {}; 
      \node[internal] (i3) at (1,0.8) {};
      \node at (0.25,0.8) {$\ldots$};
      \draw[->] (in) edge node[left] {$1$} (i1) edge
      node[right] {$1$} (i2) edge node[right] {$1$} (i3)
      (i1)  edge node[left] {$1$} (out) (i2)  edge node[right] {$1$}
      (out) (i3)  edge node[right] {$1$} (out);
      \path (0,-0.5) node {(b)};
    \end{scope}
    \begin{scope}[xshift=8cm]
      \node[vertex] (in) at (0,0) {};
      \node[vertex] (out) at (0,1.6) {};
      \node[internal] (i1) at (0,0.8) {}; 
      \draw[->] (in) edge node[right] {$1$} (i1) 
      (i1)  edge node[right] {$a$} (out);
      \path (0,-0.5) node {(c)};
    \end{scope}
  \end{tikzpicture}
  \caption{Splitting (a) an edge with large weight $a\in\Nat$ into (b) $a$
    internal nodes connected by edges of weight $1$. Reducing the net
    in (b) yields the net (c)}
  \label{fig:edge-split}
\end{figure}
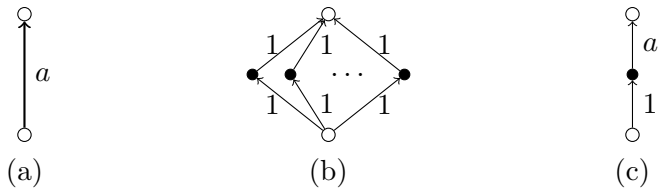

We are mainly interested, however, in model-agnostic queries on neural
networks. Nevertheless, we still cannot express all such
polynomial-time queries in \ifpsum.
An intuition for this is that weights in an FNN can be
arbitrary large and \ifpsum\ is too weak for
general (polynomial-time) computations with numbers.  
Actually, already the very simple query $Q_0$ where $Q_0(\Net)$ is
true iff $f^\Net(1)$ is a natural number,
can be shown to be not expressible.
But it gets worse: large weights can be avoided,
because we can split large-weight edges into many small-weight edges, as
Figure~\ref{fig:edge-split} illustrates.  Using the variation of
the above query $Q_0$ that asks if $f^\Net(1)$
is an even natural number, we can show the following.

\begin{theorem} \label{theor:not-ptime}
  There is a model-agnostic Boolean query on the class $\K(1,1)$ that is
  decidable in polynomial time, but not expressible in \ifpsum\ even
  on FNNs where all weights are $1$ or $0$.
\end{theorem}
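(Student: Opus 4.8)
The plan is to take $Q$ to be the query with $Q(\Net)$ true iff $\f\superN(1)$ is an even natural number. This $Q$ is model-agnostic, since its value depends only on the single real number $\f\superN(1)$ and hence only on $\f\superN$; and it is decidable in polynomial time, since $\f\superN(1)$ is a rational of polynomial bitsize that can be computed by evaluating the network bottom up along its DAG, after which testing evenness is trivial. All the work is in the inexpressibility, and the first step is to see what $Q$ means on the subclass of $\K(1,1)$ in which every weight and bias lies in $\{0,1\}$. On such a network, evaluated at input $1$, every node value is a non-negative integer and no $\ReLU$ ever clips; if in addition every bias is $0$, an easy induction on depth shows that the value of a node $u$ equals the number of directed paths from the unique input node to $u$, so $\f\superN(1)$ is exactly the number of directed paths from the input node to the output node. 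Thus, on this subclass, $Q$ is the query ``the number of source-to-sink paths of the underlying DAG is even''.

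It therefore suffices to show that the parity of the number of $s$--$t$ paths in a DAG is not expressible in \ifpsum, even when the DAG is presented as a $\{0,1\}$-weight FNN. The plan is to reduce the inexpressible polynomial-time query of Theorem~\ref{theo:inexp} to $Q$. That query comes, via the bijective pebble game, from a Cai--Fürer--Immerman family, and on its separating instances the answer is a single bit of $\mathrm{GF}(2)$ linear algebra (solvability of a linear system, equivalently nonsingularity of a matrix, over $\mathrm{GF}(2)$). The crucial point is that such a bit is realizable as the parity of the number of directed paths between two distinguished nodes of a layered DAG: this is the standard fact that an iterated product of $0/1$ matrices over $\mathrm{GF}(2)$---and, through the clow-sequence description of the determinant, also nonsingularity over $\mathrm{GF}(2)$---is computed by counting paths modulo $2$, and, importantly, \emph{the DAG in question is definable from the instance by a first-order interpretation}. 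Reading this layered DAG as an FNN in $\K(1,1)$ with all edge weights $1$, all biases $0$, input node $=s$ and output node $=t$ produces, by the computation of the previous paragraph, an FNN $\Net$ with $\f\superN(1)$ congruent modulo $2$ to the answer bit. Since the whole map from instance to $\Net$ is then an \ifpsum-transduction (every \fosum/first-order interpretation is one), and since \ifpsum\ queries are closed under precomposition with \ifpsum-transductions---this is just sequential composition of strata, which is how \ifpsum\ programs are built---an \ifpsum\ program expressing $Q$ on the $\{0,1\}$-weight subclass of $\K(1,1)$ would yield an \ifpsum\ program for the query of Theorem~\ref{theo:inexp}, a contradiction. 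The intuitive reason $Q$ stays hard under the $\{0,1\}$ restriction is exactly the edge-splitting of Figure~\ref{fig:edge-split}: large weights buy no extra power.

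The main obstacle is the italicized claim that the reduction ``$\mathrm{GF}(2)$-linear-algebra $\Rightarrow$ path-parity'' is realizable by a genuine first-order interpretation, i.e.\ that building the path-counting DAG, and aligning whatever orderings are available on the instances with its layered structure, does not itself secretly perform the hard computation. One safe way to handle this, rather than invoke the reduction abstractly, is to unfold it: take the CFI pairs $\A_k,\B_k$ underlying Theorem~\ref{theo:inexp}, apply a fixed low-width first-order interpretation to obtain $\{0,1\}$-weight FNNs $\Net_k,\Net'_k$ with $\f^{\Net_k}(1)$ and $\f^{\Net'_k}(1)$ of opposite parity, note that the interpretation turns a Duplicator win in the bijective $O(k)$-pebble game on $(\A_k,\B_k)$ into a Duplicator win in the $k$-pebble game on $(\Net_k,\Net'_k)$, and re-run the soundness half of the argument for \ifpsum\ from the proof of Theorem~\ref{theo:inexp} on these FNNs. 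Either route still needs the easy verifications that $Q$ is polynomial-time and model-agnostic, that the encoding above is a legal weighted structure encoding an FNN in $\K(1,1)$ with all non-$\bot$ weights in $\{0,1\}$ (inputs carry no bias, non-edges carry weight $\bot$), and that $\f\superN(1)$ really does count source-to-sink paths. An alternative organisation, matching the remark preceding the theorem, would first show the companion query ``$\f\superN(1)\in\Nat$'' inexpressible for unrestricted weights and then push down to $\{0,1\}$ weights by edge-splitting, switching to the ``even natural number'' variant precisely because ``natural number'' trivialises once weights are split.
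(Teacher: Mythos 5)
Your choice of query (``$\f\superN(1)$ is an even natural number'') matches the paper, and your observations that it is model-agnostic, polynomial-time, and that on bias-$0$, weight-$\{0,1\}$ networks $\f\superN(1)$ counts input-to-output paths are all correct. But the inexpressibility argument you propose rests on exactly the claim you flag as ``the main obstacle,'' and that claim is not established: you need a genuine first-order (or \fosum) interpretation that, from a CFI instance, builds a layered $\{0,1\}$-weight DAG whose $s$--$t$ path count mod $2$ decides the CFI query. All the standard devices for turning $\mathrm{GF}(2)$ linear algebra into path counting (iterated matrix powering, the Mahajan--Vinay clow-sequence DAG, Gaussian elimination) index their layers and pivots by a linear order on the relevant ground set, and on CFI structures the part of the universe carrying the linear system is precisely the part that is \emph{not} definably ordered --- that is the whole point of the construction. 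Whether CFI reduces to definable path-parity is essentially the open territory of logics with linear-algebraic operators, so this step cannot be waved through; and your fallback (``re-run the pebble-game argument on the interpreted FNNs'') inherits the same unproved interpretation. A second, more routine gap is the closure of \ifpsum\ under precomposition with interpretations whose universe is a quotient of tuples; this is plausible but the paper never proves it, and for weight terms with summation it needs an argument (sums over classes versus sums over representatives).

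The paper's own proof avoids all of this and is far more elementary. It restricts attention to the one-parameter family $\K_1$ of depth-$2$ star networks (one input, one output, $H(\Net)$ mutually symmetric hidden nodes, all weights $1$, all biases $0$), on which $\f\superN(1)=H(\Net)$. A quantifier-by-quantifier induction (Lemma~\ref{lem:description}) shows that, because the hidden nodes are interchangeable, every \fosum\ formula evaluates on $\K_1$ according to a fixed Boolean combination of polynomial inequalities in the single variable $H(\Net)$; symmetry also forces every fixpoint to stabilise in constantly many rounds, so \ifpsum\ collapses to \fosum\ on $\K_1$. Evenness of $H(\Net)$ would then be a semialgebraic subset of $\R$, i.e.\ a finite union of intervals --- contradiction. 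So the intended mechanism is not a reduction from a known hard query but a symmetry collapse to one-variable semialgebraic conditions; if you want to salvage your route, you would have to actually exhibit the interpretation from CFI to path-parity, which is a much harder problem than the theorem itself.
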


The dependence on large weights indeed requires a more subtle
analysis.  Thereto, for every FNN $\Net$
we shall define an equivalent \emph{reduced} FNN $\RNet$.
Recall that $b(u)$ and $w(v,u)$ denote the bias of nodes $u$
and weight of edges $(v,u)$ in a network.
For a set $V$ of nodes, we let
$\wt(V,u)\coloneqq \sum_{v\in V}\wt(v,u)$.

We define an equivalence relation $\sim$ on the set of nodes of
$\Net$ by induction on the depth of the nodes in $\Net$. Only
nodes of the same depth can be equivalent. For input nodes and
output nodes $u$, $u'$ we let $u\sim u'\Leftrightarrow u=u'$, so 
input and output nodes are only equivalent to themselves. Now
consider two hidden nodes $u$, $u'$ of the same depth, and suppose
that we have already defined $\sim$ on all nodes of smaller
depth. Then
$u\sim u'$ if $\bias(u)=\bias(u')$ and $\wt(V,u)=\wt(V,u')$
for every equivalence class $V$ of nodes of smaller depth.

By $\tilde u$ we denote the equivalence class of node $u$.
We now define:
\begin{definition}
The reduced network $\RNet$ has nodes $\tilde u$ and edges $(\tilde
v,\tilde u)$ for all edges $(v,u)$ of $\Net$.
We define $\bias(\tilde u)\coloneqq \bias(u)$ and
$\wt(\tilde v,\tilde u)\coloneqq \sum_{v'\sim v}\wt(v',u)$.
The input nodes of $\RNet$ are the singleton classes of the input
nodes of $\Net$, and similarly for the output nodes.
\end{definition}

A straightforward induction shows that if $u\sim u'$ then
$f^\Net_u=f^\Net_{u'}$, so $f^\Net=f^{\RNet}$, i.e.,
an FNN and its reduction represent the same function.

\begin{example}
  The reduction of the network in Figure~\ref{fig:edge-split}(b)
  is shown in Figure~\ref{fig:edge-split}(c). \qed
\end{example}

Let $\pol(X)$ be a polynomial, and let $\Net$ be an FNN;
we will use the notation $|\Net|$ for the number of nodes of a network.
With $n=|\Net|$,
we say that $\Net$ has \emph{$\pol$-bounded
weights} if all node and edge weights are rational and of the
form $r/q$ for integers $r\in\mathbb Z$, $q\in\Nat$ such that
$|r|,q\le \pol(n)$.  Furthermore, we say that $\Net$ has
\emph{$\pol$-bounded reduced weights} if $\RNet$ has
$\pol$-bounded weights.  A class $\K$ of networks has
\emph{polynomially bounded (reduced) weights} if there exist a
polynomial $\pol$ so that every $\Net \in \K$ has $\pol$-bounded
(reduced) weights.

\begin{example}
  The class of networks depicted in
  Figure~\ref{fig:edge-split}(b) obviously has polynomially bounded
  weights (they are all $1$),
  but not polynomially bounded reduced weights. Indeed, the class
  depicted in 
  Figure~\ref{fig:edge-split}(c) does not have polynomially
  bounded weights (since $a$ can be arbitrary but $|\Net|=3$).
  \qed
\end{example}

We establish the following completeness result for scalar
\ifpsum.
\begin{theorem} \label{theo:scalarptime}
Let $Q$ be a polynomial-time computable query on $\K(*,*)$.
Then $Q$ is expressible in $\sifpsum$ on every
  $\K' \subseteq \K(*,*)$ with polynomially bounded reduced
  weights.
\end{theorem}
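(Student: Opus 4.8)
The plan is to encode a polynomial-time Turing machine computation over a suitable relational encoding of the reduced network $\RNet$, using the classical capture theorem for fixed-point logic plus the counting/arithmetic power of $\sifpsum$. First I would observe that, since $Q$ is model-agnostic, $Q(\Net)=Q(\RNet)$, so it suffices to compute $Q$ on $\RNet$. The key point is that when $\Net$ ranges over a class $\K'$ with $\pol$-bounded reduced weights, the reduced network $\RNet$ has all weights of the form $r/q$ with $|r|,q\le \pol(|\Net|)$. Writing $n=|\Net|$ and $N=|\RNet|\le n$, the whole object $\RNet$ — its edge relation, its input/output orders, and every bias and weight written as a reduced fraction — can be encoded as a relational structure $\hat\A$ over an ordered universe of size $O(\mathrm{poly}(n))$: the universe carries a linear order (obtainable since we may assume a linear order is available, or, more carefully, since the query is invariant under automorphisms of $\Net$ and we can quotient — this is the standard subtlety in capturing $\PTIME$, and here it is eased by the fact that nodes of an FNN come with the given orders $\In,\Out$ and depth structure, and by the quotient to $\RNet$ which collapses "symmetric" hidden nodes). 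Numerators and denominators are represented in unary or as subsets of the ordered universe, which has enough elements because weights are polynomially bounded.

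The second step is to show that $\sifpsum$ can actually \emph{compute the reduction} $\Net\mapsto\RNet$ and build this relational encoding $\hat\A$ from the input weighted structure $\Net$. Computing the equivalence relation $\sim$ is an induction on depth: at each depth level one checks, for hidden nodes $u,u'$ of that depth, whether $b(u)=b(u')$ and whether $\wt(V,u)=\wt(V,u')$ for every already-formed class $V$ of smaller depth; each such test is an $\fosum$ condition (the class-sums $\wt(V,u)=\sum_{v} [\,v\sim_{\text{prev}} v'\text{-representative}\,]\cdot w(v,u)$ are scalar summations, and the defining formula uses recursion only on relations, not on the products of intensional functions), so this fits in $\sifpsum$. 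Then the reduced biases $b(\tilde u)=b(u)$ and reduced edge weights $\wt(\tilde v,\tilde u)=\sum_{v'\sim v} w(v',u)$ are again scalar summations. Finally, because the reduced weights are $\pol$-bounded, I can extract their numerators and denominators into relations over the ordered universe: e.g.\ a term with rational value $r/q$ and polynomial bound lets one define, by a bounded $\sifpsum$ search, the unique integers representing $r$ and $q$; this is where scalarness must be watched, but division is used only by the \emph{constants} $q\le\pol(n)$, which are nonrecursive, so it stays within the fragment.

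The third step invokes the Immerman–Vardi theorem: over ordered finite relational structures, $\IFP$ (equivalently the relational part of $\ifpsum$, and a fortiori $\sifpsum$, which contains $\IFP$) captures $\PTIME$. Since $Q$ is polynomial-time and, via the first two steps, reduces to a $\PTIME$ property of the ordered relational structure $\hat\A$ — whose size is polynomial in $\|\A\|$ because the reduced weights are polynomially bounded — there is an $\sifpsum$ (indeed $\IFP$) formula over $\hat\A$ deciding it; composing with the $\sifpsum$ program that builds $\hat\A$ from $\Net$ and projecting to the answer symbol yields the desired $\sifpsum$ sentence. One should double-check that the composition respects the stratification and scalarness conventions of Section~\ref{secscalar} — each stage is a separate stratum, the encoding stage uses scalar multiplication and only constant-denominator division, and the final capturing stage uses no weights at all — and that by Theorem~\ref{theo:eval-ifp} the whole thing has $\PTIME$ data complexity, as it must.

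I expect the main obstacle to be the \emph{ordered-structure} issue underlying the use of Immerman–Vardi: a bare FNN as a weighted structure need not come with a linear order on all its nodes, so capturing $\PTIME$ is not immediate. The resolution is precisely the reduction to $\RNet$ together with the available orders $\In,\Out$ and the depth stratification: after quotienting by $\sim$, the remaining nodes at each depth are pairwise distinguished by their \emph{profiles} of class-sums to lower depths, so one can define, in $\sifpsum$, a linear order on the nodes of $\RNet$ (order by depth, then lexicographically by the sorted vector of incoming class-weights, breaking residual ties — there will be none beyond those already collapsed, or one handles the output nodes via $\Out$ and inputs via $\In$). Making this definable-linear-order claim precise, and verifying it is genuinely expressible with the scalar restriction in force, is the technical heart of the argument; once it is in place, the rest is bookkeeping plus a black-box appeal to the classical capture theorem. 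The role of the "$\pol$-bounded reduced weights" hypothesis is exactly to keep the encoded integers small enough to live in a polynomial-size universe — without it, as the preceding examples and Theorem~\ref{theor:not-ptime} show, even parity of $f^\Net(1)$ escapes $\ifpsum$.
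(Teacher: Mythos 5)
Your proposal is correct and follows essentially the same route as the paper: define a canonical linear (quasi-)order on the nodes of $\RNet$ by depth, bias, and the profile of class-sums to lower-depth classes (which is total precisely because $\sim$ has already collapsed all ties), represent the polynomially bounded numerators and denominators by positions in the lexicographic order on tuples of nodes, and then build an ordered copy of $\RNet$ inside $\Net$ via a transduction so that Immerman--Vardi and model-agnosticism finish the argument. The only minor difference is bookkeeping: where you suggest unary/subset encodings of the integers, the paper uses indices of $c$-tuples under the lexicographic order (so that $n^c$ exceeds the polynomial bound) and pins down $|r|$ and $q$ by a minimality condition on the denominator index, but this is the same idea made precise.
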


The proof (see Appendix) observes that a quasi-order on the nodes
of a network $\Net$ can be defined uniformly in $\sifpsum$ such
that it yields a linear order on $\RNet$.  Moreover, since
$\RNet$ has polynomially bounded weights, all numerators and
denominators can be represented, using $\sifpsum$ formulas,
by lexicographically ordered
tuples of nodes. This yields a copy of $\RNet$ as an
ordered finite structure defined in $\Net$.  By the
Immerman-Vardi theorem, we can then also define, in $\Net$, the
answer of $Q$ on $\RNet$.  As $Q$ is model-agnostic, this is also
the answer of $Q$ on $\Net$.

\begin{remark}
  For simplicity, Theorem~\ref{theo:scalarptime} is stated and
  proved for Boolean queries without parameters.  A version for $r$-ary
  queries with parameters can be formulated and proved, where we
  then also need to restrict to polynomially bounded reduced rational
  numbers for the parameters and result tuples.  The candidates for
  parameters and result tuples are passed to the $\sifpsum$
  formula as extra weight functions.
\end{remark}

\section{Complexity of model-agnostic queries} \label{secnph}

With \ifpsum\ in place, it is now tempting to revisit
Theorem~\ref{theorfixedepth} and wonder if the fixed-depth
restriction can be lifted simply by replacing \fosum\ by \ifpsum.
We conjecture, however, that the answer is negative:
\begin{conjecture} \label{conjecture}
  Let $m>0$ be a natural number.  There exists a Boolean 
  query on $\K(m,1)$ expressible in $\bblin$ but not in \ifpsum.
\end{conjecture}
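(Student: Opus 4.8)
The plan is to attack the conjecture with the Ehrenfeucht--Fra\"iss\'e / bijective-pebble method for \ifpsum\ that underlies the proofs of Theorems~\ref{theo:inexp} and~\ref{theor:not-ptime}: it suffices to exhibit, for every $k$, two FNNs $\Net_k$ and $\Net_k'$ in $\K(m,1)$ that cannot be distinguished by $k$ rounds of the game (so that no \ifpsum\ query separates them), together with one fixed $\bblin$ sentence $\psi$ whose associated query has different answers on $\Net_k$ and $\Net_k'$. As $\psi$ is model-agnostic, this forces $\f^{\Net_k}\neq\f^{\Net_k'}$, and the depths of the $\Net_k$ must be unbounded: otherwise $\f^{\Net_k}$ would be \fosum-definable from the structure (Example~\ref{exfixedepth}), hence of bounded quantifier rank, and the game would pin it down. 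So the concrete goal is a pair of unbounded-depth FNN families that are pebble-equivalent as weighted structures but represent different piecewise-linear functions, with the difference detectable by a fixed $\bblin$ formula.

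First I would fix the source of pebble-equivalent-yet-nonisomorphic objects: a Cai--F\"urer--Immerman-style family over base graphs of growing treewidth \emph{and} growing diameter (say $\sqrt n\times\sqrt n$ grids), giving for each $n$ two non-isomorphic graphs that agree in the $k$-pebble bijective game for all $k\le c\sqrt n$ yet are separated in polynomial time. Next I would design a gadget construction turning each such graph $H$ into an FNN $\Net(H)\in\K(1,1)$ so that (i) it is sufficiently local that $k$-pebble equivalence of $H,H'$ transfers to $k'$-pebble equivalence of $\Net(H),\Net(H')$ for a constant multiple $k'$ of $k$, absorbing the constantly many copies and the induced bias/edge/$\val$-weights; and (ii) the "twist" of the CFI construction is routed so that it changes $\f^{\Net(H)}$ in a way visible to a fixed $\bblin$ formula --- e.g.\ so that $\f^{\Net(H)}$ is strictly monotone (hence injective, cf.\ the injectivity/surjectivity questions of \cite{FroeseGS25}) in one case and has a strict local extremum in the other, or stays $\ge 0$ in one case and attains a negative value in the other. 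The corresponding $\psi$ (namely $\forall x<y\,(f(x)\le f(y))$, resp.\ $\exists x\,f(x)<0$) is plainly in $\bblin$, and the growing diameter of the base graph guarantees unbounded depth, which is exactly the regime where (i) does not already collapse by Theorem~\ref{theorfixedepth}.

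The main obstacle --- and the reason the statement is offered only as a conjecture --- is property~(ii). The known \ifpsum-inexpressibility arguments, as in the proof of Theorem~\ref{theor:not-ptime}, most naturally produce FNN pairs whose represented functions differ only \emph{arithmetically}, e.g.\ $\f^{\Net}(1)$ equal to $2m$ versus $2m+1$ for a large $m$; this is precisely the kind of difference that a coarse counting-style logic like \ifpsum\ cannot see, but it is equally invisible to $\bblin$, which is insensitive to parity and cannot test membership in $\Nat$. Making the conjecture go through requires instead that the twist surface as an \emph{order-theoretic} feature of $\f^\Net$ --- a breakpoint in a fixed interval, the sign of a slope, a local extremum --- while the two networks stay pebble-equivalent, and it is unclear whether the CFI twist, an inherently $\mathbb F_2$-linear (hence "arithmetic") phenomenon, can be converted into such a geometric feature without thereby also rendering the two networks distinguishable in the pebble game. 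A secondary, more routine obstacle is verifying that the gadget construction is genuinely pebble-preserving on the weighted structures, including the induced weight functions, which is needed for step~(i).
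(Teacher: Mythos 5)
The statement you are asked to prove is labelled a \emph{conjecture} in the paper, and the authors explicitly state that they cannot prove it; there is no proof in the paper to compare yours against. What the paper does instead is establish the \emph{scalar} version conditionally, by a route entirely different from yours: Theorem~\ref{thm:np-hardness-fx-zero} shows that the $\bblin$ query $\exists x\, f(x)\neq 0$ is already \NP-hard on $\K(1,1)$ (via a reduction from 3-SAT that encodes assignments as binary expansions of the input), while Theorem~\ref{theo:eval-ifp} shows that \sifpsum\ has polynomial-time data complexity; hence, assuming $\PTIME\neq\NP$, some $\bblin$ query is inexpressible in \sifpsum. This is a complexity-theoretic separation, not a model-theoretic one, and it says nothing about full \ifpsum, whose data complexity the paper leaves uncharacterised.

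Your proposal is therefore not a proof, and to your credit you say so yourself: the obstacle you name in your final paragraph is genuine and is precisely why the statement remains open. The CFI construction yields pairs of structures whose difference is an $\mathbb F_2$-linear ``twist'', and the paper's own inexpressibility arguments (Theorems~\ref{theo:inexp} and~\ref{theor:not-ptime}) exploit exactly such arithmetic invisibility --- but, as you observe, a parity-type difference in $\f^\Net$ (e.g.\ $f^\Net(1)$ even versus odd) is equally invisible to $\bblin$, which cannot test membership in $\Nat$ or parity. Turning the twist into an order-theoretic feature of the represented piecewise-linear function (a sign, a breakpoint, monotonicity) that a fixed $\bblin$ sentence detects, while keeping the two weighted structures $k$-pebble-equivalent for all $k$, is the missing idea, and nothing in your sketch supplies it. If you could supply it, your argument would actually be \emph{stronger} than what the paper achieves: it would give an unconditional separation for full \ifpsum\ rather than a $\PTIME\neq\NP$-conditional one for the scalar fragment. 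A secondary caveat: the unrolling argument in the proof of Theorem~\ref{theo:inexp}, which reduces an \ifpsum\ sentence to an \fosum\ formula with a bounded number of variables, must be re-verified for your FNN structures, including the weight-function condition (ii) of the bijective pebble game in the appendix; you flag this but do not carry it out.
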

While we cannot prove this conjecture,
we can prove the corresponding conjecture for
scalar \ifpsum, assuming $\PTIME \neq \NP$.
Indeed, whereas $\sifpsum$ queries are
computable in polynomial time, there
are very simple $\bblin$ sentences that already express an \NP-hard
query, even on $\K(1,1)$.

\begin{theorem}\label{thm:np-hardness-fx-zero}
It is \NP-hard to decide if an FNN $\Net \in\K(1,1)$ computes a
non-zero function, that is, if $f^\Net(x)\neq 0$ for some $x\in\R$.
\end{theorem}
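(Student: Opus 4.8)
The plan is to reduce from the \NP-hard Subset Sum problem: given positive integers $a_1,\dots,a_n$ and a target $t\in\Nat$, decide whether $\sum_{i\in S}a_i=t$ for some $S\subseteq\{1,\dots,n\}$. After disposing of the trivial cases $t=0$ and $t>\sum_i a_i$ (outputting a fixed $\Net\in\K(1,1)$ computing the constant $1$, resp.\ the constant $0$), we may assume $1\le t\le A$ with $A:=\sum_i a_i$. From such an instance I would build, in polynomial time, a network $\Net\in\K(1,1)$ such that $\f\superN$ is not identically $0$ precisely when the instance is a yes-instance. Identify a subset $S\subseteq\{1,\dots,n\}$ with the integer $j\in\{0,\dots,2^n-1\}$ whose $i$-th binary digit is $1$ iff $i\in S$. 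The idea is to make $\f\superN$ a ``comb'' that at each integer $x=j$ reads off whether the subset coded by $j$ sums to $t$, and that vanishes at every other real.

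Concretely, $\Net$ would first clamp its input by $x\mapsto x':=\ReLU(x)-\ReLU(x-2^n)=\min(\max(x,0),2^n)$, and then output $\f\superN(x)=\ReLU\!\big(\mathrm{ind}(x')-A\cdot\mathrm{viol}(x')\big)$, where $\mathrm{viol}(y)=\sum_{i=1}^n\min(b_i(y),1-b_i(y))$ and $\mathrm{ind}(y)=\max\!\big(0,\,1-\lvert g(y)-t\rvert\big)$ with $g(y)=\sum_{i=1}^n a_i\,b_i(y)$. Here each ``bit wave'' $b_i\colon[0,2^n]\to[0,1]$ is a continuous, $1$-Lipschitz, piecewise-linear function equal to the $i$-th binary digit at every integer of $[0,2^n]$: thus $b_1(y)=\operatorname{dist}(y,2\Z)$, and for $i\ge 2$, $b_i$ is the period-$2^i$ trapezoidal wave that is flat $0$ on $[0,2^{i-1}-1]$, ramps up (with slope $1$) on $[2^{i-1}-1,2^{i-1}]$, is flat $1$, ramps down, and so on. Since $\min$, $\max$, $\lvert\cdot\rvert$ and a fixed-weight sum each cost a constant number of ReLU gates, the only substantive part is the $b_i$; although $b_i$ has about $2^{\,n-i}$ linear pieces, it is computed by a ReLU network of only $O(n)$ gates, namely — after an affine transformation, a phase shift, and clamping to $[0,1]$ — the $(n-i+1)$-fold composition of the fixed tent map $\Lambda(y)=\max(0,\min(2y,2-2y))$ precomposed with $y\mapsto y/2^n$; this is the standard ``sawtooth'' trick. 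It is essential here that this takes a single real input even though its hidden layers are wide: that is exactly what keeps us in $\K(1,1)$. Assembling everything yields one $\Net\in\K(1,1)$ of size polynomial in $n$ and in the bit-lengths of the $a_i$ and $t$, constructed in polynomial time.

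For correctness, note first that at an integer $j\in\{0,\dots,2^n-1\}$ every $b_i(j)\in\{0,1\}$, so $\mathrm{viol}(j)=0$ and $g(j)=\sum_{i\in S_j}a_i$, where $S_j$ is the subset coded by $j$. Hence if $S_{j^\ast}$ witnesses a yes-instance then at $x=x'=j^\ast$ we get $g(j^\ast)=t$, $\mathrm{ind}(j^\ast)=1$ and $\f\superN(j^\ast)=\ReLU(1)=1\ne 0$. Conversely, assume no subset sums to $t$; I must show $\f\superN\equiv 0$. Fix $x\in\R$, put $x'=\min(\max(x,0),2^n)\in[0,2^n]$, and let $j$ be the nearest integer to $x'$, so $j\in\{0,\dots,2^n\}$ and $\lvert x'-j\rvert\le\tfrac12$. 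If $j\le 2^n-1$ then $g(j)=\sum_{i\in S_j}a_i\ne t$ by assumption; if $j=2^n$ then $g(j)=\sum_i a_i\cdot 0=0\ne t$ since $t\ge 1$. Either way $\lvert g(j)-t\rvert\ge 1$, and since $g$ is $A$-Lipschitz, $\lvert g(x')-t\rvert\ge 1-A\lvert x'-j\rvert$, so $\mathrm{ind}(x')\le A\lvert x'-j\rvert$. On the other hand $\mathrm{viol}(x')\ge\min(b_1(x'),1-b_1(x'))=\operatorname{dist}(x',\Z)=\lvert x'-j\rvert$. Hence $\mathrm{ind}(x')-A\cdot\mathrm{viol}(x')\le 0$, and $\f\superN(x)=0$.

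The main obstacle, I expect, is precisely this last ``gating'' estimate: the multiplier (here $A$, or any value at least the Lipschitz constant of $g$) must be large enough that on the unit interval around each integer the indicator term — which is pinned to $0$ at that integer in the no-case — is dominated by the violation term $\mathrm{viol}$, while at the integers that code subsets $\mathrm{viol}$ must be exactly $0$ so that genuine witnesses survive. The rest is routine bookkeeping: checking that the trapezoidal bit waves really have the stated period, Lipschitz constant, and integer values, and that the sawtooth/tent-map composition (with the right phase shift and scaling) realises each $b_i$ with $O(n)$ gates.
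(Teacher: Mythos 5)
Your reduction is correct, but it takes a genuinely different route from the paper's proof. The paper reduces from 3-SAT: it interprets inputs $(0.a_1a_2\ldots a_n)_2\in[0,1)$ as truth assignments, extracts the bits with a gap-amplifying network $\Netsplit$ built from clipped-sigmoid units, simulates the CNF formula by $\min_i\max_j\ell_{ij}$, and disposes of the whole continuum in the unsatisfiable case via a structural bound (its Claim~2): the min--max interpolation of an unsatisfiable 3-CNF never exceeds $1/2$ on all of $[0,1]^n$, so a final $\ReLU(-1+2z)$ yields the zero function. You reduce from Subset Sum instead, read bits off \emph{integer} inputs in $[0,2^n]$ with periodic trapezoidal waves obtained from tent-map compositions, and handle non-integer reals by an explicit gating inequality: the integrality penalty $\mathrm{viol}$, scaled by the Lipschitz constant $A$ of $g$, dominates the indicator near every integer in the no-case. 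Both proofs face the same central difficulty---making the function vanish off the coding points, not merely at them---and your quantitative Lipschitz argument is a sound, and arguably more generic, alternative to the paper's CNF-specific $1/2$-bound (it would work for any integer-valued predicate on the coded integers, at the price of needing the explicit penalty term). The one place where your write-up leaves real work is the realization of the asymmetric trapezoids $b_i$ for $i\ge 2$ with $O(n)$ gates; this does go through, e.g.\ one can take $b_i(y)=\clipped\big(\operatorname{dist}\big(y-\tfrac{2^{i-1}-1}{2},\,2^i\mathbb{Z}\big)+1-\tfrac{2^{i-1}+1}{2}\big)$, where the phase-shifted triangle wave is produced by $O(n)$ tent-map iterations, and one checks directly that this is $1$-Lipschitz, has period $2^i$, and takes the value of the $i$th bit at every integer of $[0,2^n]$. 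So the gap you flag as routine bookkeeping is indeed closable. One cosmetic remark: the outermost $\ReLU$ must sit on a hidden node, since output nodes are linear in this paper's FNN model; the paper's own construction handles its final thresholding the same way.
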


Note that testing non-zeroness is a boolean model-agnostic query,
easily expressed by the $\bblin$ sentence $\exists x\, f(x)\neq
0$.  In fact, the query also belongs to \NP
\cite[Proposition~1]{Wurm24}.  In the cited work, Wurm also
proves \coNP-hardness of deciding whether an FNN $\Net \in
\K(*,*)$ is zero (phrased as an equivalence problem).  Our
contribution here is that hardness already holds when the input
and output dimensions are fixed to $1$.  The proof (see Appendix)
reduces from 3-SAT\@. We interpret rational numbers between 0 and
1, with with binary representation $(0.a_1a_2\ldots a_n)_2$, as
assignments on $n$ boolean variables.  We then construct a
network $\Net$ simulating, on these numbers, a given 3-CNF
formula over these variables.

\section{Iterated transductions} \label{seciter}

In view of Conjecture~\ref{conjecture}, how can we go beyond
\ifpsum\ to define a logic over weighted structures that can
express all $\bbl$ queries without fixing the network depth?  We
can get inspiration from how Theorem~\ref{theorfixedepth} was
proved \cite{ql4nn}.  The first step of that proof is to map any given
(fixed-depth) FNN $\Net \in \K(m,1)$ to a structure that
represents the geometry of the piecewise linear function
$f^\Net:\R^m\to\R$.  This geometry consists of a hyperplane
arrangement that partitions $\R^m$ in polytopes, plus an affine function on
each of the polytopes.  Later steps embed the resulting geometry
in a higher-dimensional space as dictated by the number of variables of
the $\bblin$ query $\psi$ that we want to express, and construct a
cylindrical cell decomposition of the space.  The resulting cell
decomposition is compatible both with $f^\Net$ and with the
constraints imposed by $\psi$, which allows us to express $\psi$
in \fosum.

The steps just described map weighted structures to weighted
structures; these transductions are expressed in \fosum\ by
adapting the classical model-theoretic method of interpreting one
logical theory in another \cite{hodges}.  Such interpretations
(also called transductions \cite{courcelle_book,grohe_book} or
translations \cite{ql4nn}) map
structures from one vocabulary to structures from another
vocabulary by defining the elements of the output structure as
equivalence classes of tuples from the input structure, and
defining the relations and weight functions by formulas and
weight terms.

The important observation is that, in showing that the steps of
the proof described above can be expressed as \fosum\
transductions, the assumption of a fixed depth $\ell$ of $\Net$
is only important for the first step.  That construction is
performed layer by layer, with a transduction that is iterated a
fixed number $\ell$ times.  We can thus lift the fixed-depth
restriction if we can iterate a transduction an unbounded number
of times; actually, $O(\ell)$ iterations suffice.

Formally, we can define an \emph{iterated \fosum\ transduction}
from vocabulary $\Upsilon$ to vocabulary $\Gamma$ as a pair
$(\tau,\varphi)$ where $\tau$ is an \fosum\ transduction from
$\Upsilon \cup \Gamma$ to itself, and $\varphi$ is a closed \fosum\ 
formula over $\Upsilon \cup \Gamma$.  The semantics, given an
$\Upsilon$-structure $\A$, is to first expand $\A$ with the symbols 
of $\Gamma$ (initialising them to be empty or undefined
everywhere).  Then, $\tau$ is repeatedly applied until $\varphi$
becomes true.

The difference with \ifpsum\ programs, even under loose semantics,
is that a transduction can grow the universe, and can arbitrarily
change (also shrink) relations and functions
This can then happen in
each step of an iterated transduction.

We can show that iterated transductions are
closed under sequential composition. To express a query, we can
designate an answer symbol, just like we did for \ifpsum\
programs.  We conclude:

\begin{proposition}
  Let $m$ be a natural number. Every boolean $\bblin$ query on
  $\K(m,1)$ is expressible by an iterated \fosum\ transduction
  that, on any $\Net \in \K(m,1)$, iterates only $O(\ell)$ times,
  where $\ell$ is the depth of $\Net$.
\end{proposition}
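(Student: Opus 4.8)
The plan is to build directly on the proof of Theorem~\ref{theorfixedepth} from \cite{ql4nn}, whose multi-step structure is described in the preceding paragraph. First I would recall that this proof proceeds in two phases: (i) a \emph{layer-by-layer} construction that turns $\Net$ into a structure encoding the hyperplane arrangement and the associated affine functions on each polytope, and (ii) a sequence of \emph{further} transductions that embed this geometry into a higher-dimensional space, build a cylindrical cell decomposition compatible with both $f^\Net$ and the constraints of $\psi$, and finally evaluate $\psi$. The key observation, already flagged in the text, is that only phase~(i) uses the fixed depth $\ell$, and it uses it only as the number of times a \emph{single} \fosum\ transduction $\tau_{\mathrm{layer}}$ is iterated. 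Phase~(ii) is a \emph{fixed finite} composition of \fosum\ transductions whose length depends only on $m$ and $\psi$, not on $\Net$.

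The concrete steps I would carry out are as follows. Step one: exhibit $\tau_{\mathrm{layer}}$ explicitly as an \fosum\ transduction from $\Upsilon \cup \Gamma_{\mathrm{geo}}$ to itself that, given a partially-processed network together with the geometric data accumulated for all nodes up to the current depth, extends that data by one more layer; pair it with a stopping formula $\varphi_{\mathrm{done}}$ that is true exactly when all nodes of $\Net$ have been processed (e.g.\ expressing ``every node already carries geometric data''). By construction this iterated transduction $(\tau_{\mathrm{layer}},\varphi_{\mathrm{done}})$ halts after $\ell+1$ iterations on any $\Net$ of depth $\ell$, and produces the geometry structure $G(\Net)$. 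Step two: observe that each transduction in phase~(ii) of the \cite{ql4nn} proof is likewise an \fosum\ transduction, and that a non-iterated \fosum\ transduction $\sigma$ is the special case of an iterated transduction whose stopping formula is $\top$ after one step (or, trivially, can be composed into the framework as a single-step iteration). Step three: invoke the closure of iterated \fosum\ transductions under sequential composition (stated just before the proposition) to chain $(\tau_{\mathrm{layer}},\varphi_{\mathrm{done}})$ with the phase~(ii) transductions into a single iterated \fosum\ transduction; designate as answer symbol the nullary intensional relation holding the boolean value of $\psi$. On $\Net$ of depth $\ell$ the composite iterates $\ell+1$ times in phase~(i) and a constant number of times (depending only on $m,\psi$) in phase~(ii), hence $O(\ell)$ times overall.

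The main obstacle is making Step one airtight: I must verify that the layer-construction in \cite{ql4nn}, originally phrased as an $\ell$-fold syntactic unfolding of an \fosum\ expression, can genuinely be recast as a \emph{single} \fosum\ transduction that is iterated $\ell$ times—i.e.\ that the "state" after $i$ layers (the hyperplane arrangement restricted to depth-$i$ nodes, the affine pieces, the bookkeeping of which nodes are done, the auxiliary ordering of hyperplanes, and whatever coordinates the construction introduces) can all be stored in finitely many relations and weight functions of $\Gamma_{\mathrm{geo}}$, with the transition from state $i$ to state $i+1$ uniformly definable without reference to $\ell$. This is plausible because the per-layer operation in that proof—intersecting the current arrangement with the new hyperplanes coming from depth-$(i{+}1)$ neurons and refining the affine data—is itself a fixed \fosum\ computation, but one must be careful that the representation of the growing arrangement (number of regions, their ordering) stays within what a transduction can produce at each step. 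A secondary, more routine obstacle is checking that the stopping formula $\varphi_{\mathrm{done}}$ is genuinely closed and \fosum-definable over $\Upsilon\cup\Gamma_{\mathrm{geo}}$, and that composing iterated transductions does not silently require growing the vocabulary in a way that breaks the $O(\ell)$ bound; both reduce to the composition lemma already asserted. Once Step one is settled, the remainder is bookkeeping.
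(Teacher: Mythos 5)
Your proposal follows essentially the same route as the paper: the paper's own argument (given only as a sketch in the surrounding text, with no further details in the appendix) likewise observes that only the first, layer-by-layer step of the proof of Theorem~\ref{theorfixedepth} depends on the depth $\ell$, recasts it as a single \fosum\ transduction iterated $O(\ell)$ times with a stopping formula, and then appeals to closure of iterated transductions under sequential composition to append the remaining fixed sequence of transductions. Your identification of the main verification burden (that the per-layer state is representable and the transition uniformly definable) is exactly the point the paper asserts without elaboration, so your plan is consistent with, and no less detailed than, the paper's.
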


\begin{remark}
Iterated transductions are something of a ``nuclear option,'' in
that they are reminiscent of the extension
  of first-order logic with while-loops and object creation,
  investigated in the 1990s \cite{av_datalog,ak_iql,vdbp_abstr,vvag_compl}.  
  Such an extension typically yields a computationally complete query
  languages over finite unweighted structures.  We can show
  (proof omitted) that, similarly, iterated \fosum\ transductions (without a
  depth bound on the number of iterations, as in the above
  proposition) are computationally complete over rational weighted
  structures.
\end{remark}

\section{Conclusion} \label{seconc}

We have explored approaches, and obstacles, to model-agnostic
querying of deep neural networks.  In the fixed-depth
case, \fosum\ is already quite expressive, and the main challenges
now lie in finding good implementation strategies.  Without 
a bound on the depth, there are also challenges in expressivity,
theoretical complexity, and query language design.
We have introduced a language \ifpsum\ that enables us to
evaluate neural networks of unbounded depth and, more generally,
express a rich set of queries on such networks.

Some interesting questions remain open.
A very concrete question, even independent of the application to
neural networks, is to characterise the
data complexity of \ifpsum\ (without the scalar restriction),
even on unweighted structures.  We also encountered the question
whether the loose fixpoint semantics can keep arities lower than
possible with the function fixpoint semantics.
Another interesting question is
how the yardstick logic $\bbl$ can be adapted to work over
arbitrary functions $f:\R^m\to\R^p$ where $m$ and $p$ are not fixed in
advance.  Over Boolean models, there are very elegant
languages for this \cite{arenas-foil,arenas-dtfoil}.

Finally, and of course, we should also investigate the querying
of other machine-learning (ML) models, such as numerical decision
trees, Transformer models, and graph neural networks.  A large
body of work has accumulated in the ML, logic, and
database theory communities on understanding the logical
expressiveness of ML models.  Nevertheless, it is quite a
distinct subject to understand the querying of these models by
logical methods \cite{silva-logic-explain}.

\printbibliography

\newpage
\appendix
\section{Proofs}

In our proofs we use the notation $[n]$, with $n$ a natural
number, for $\{1,\dots,n\}$.

\subsection{Proof of Theorem~\ref{theor-loose}}

We shall prove that there exists a translation from the loose to the functional semantics in three parts. First we prove that such a translation exists if we assume that the structure always has at least two elements. Then we prove that a translation exists if we assume the structures have less than two elements, and then finally we show how to combine them into a translation that works for all structures.
\begin{lemma}
    \label{lem:loose_to_func_big_dom}
    For every stratum $\Sigma$ of type $\Upsilon \to \Gamma$
    there exists a stratum $\Sigma'$ of type $\Upsilon \to
    \Gamma'$, with $\Gamma \subset \Gamma'$, such that for any
    $\Upsilon$-weighted structure $\A$ with domain $A$ and $|A|
    \geq 2$, we have that ${\Sigma'}(\A)$, restricted to the
    symbols of\/ $\Upsilon \cup \Gamma$, equals $\Sigma^L(\A)$.
\end{lemma}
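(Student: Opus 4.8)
The plan is to adapt the \emph{timestamping} and \emph{delayed-evaluation} technique for inflationary Datalog with negation \cite{vianu-unchained}, as sketched in the discussion preceding Theorem~\ref{theor-loose}, to the weighted setting. Fix $\Sigma$ of type $\Upsilon\to\Gamma$ with intensional relation rules $R_i(\bvec x)\gets\varphi_i$ ($i\in[p]$) and intensional weight-function rules $F_j(\bvec x)\gets\theta_j$ ($j\in[q]$), and for an $\Upsilon$-structure $\A$ write $\mathcal L_t:=L_\Sigma^t(\A')$ for the $t$-th loose stage and $n$ for the loose termination index. Since the intensional relations grow monotonically, $\mathcal L_{t-1}$ and $\mathcal L_t$ differ on the intensional relations for every $1\le t\le n$; in particular some intensional relation strictly grows at each step before termination. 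A \emph{timestamp} of stage $t\ge1$ will be a tuple $\bvec\sigma=(\bvec b_1,\dots,\bvec b_p)$, with $\bvec b_i$ an $\ar(R_i)$-tuple, such that $\bvec b_i\in R_i^{\mathcal L_t}$ for all $i$ and $\bvec b_i\notin R_i^{\mathcal L_{t-1}}$ for at least one $i$. Monotonicity makes each timestamp belong to at most one stage, and the predicate ``$\bvec\sigma$ belongs to stage $t$'' is expressible in \fosum\ from $R_i^{\mathcal L_{t-1}}$ and $R_i^{\mathcal L_t}$. The decisive feature is that timestamps have \emph{bounded} arity, so they can be added as extra arguments of the intensional weight functions; this is what lets the (terminating) functional fixpoint imitate function updates.

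The one obstruction is that a stage receives a timestamp only when \emph{every} intensional relation is nonempty at that stage, whereas some relation may remain empty throughout. This is where the hypothesis $|A|\ge2$ enters: in a preprocessing step, replace each intensional $R_i$ of arity $k_i$ by $R_i^\circ$ of arity $k_i+2$ with rule $R_i^\circ(\bvec x,y,z)\gets(y=z)\vee\big(y\neq z\wedge\tilde\varphi_i(\bvec x)\big)$, where $\tilde\varphi_i$ is $\varphi_i$ with each intensional $R_l$ replaced by the definable relation $\{\bvec u\mid\exists v\neq v'\,R_l^\circ(\bvec u,v,v')\}$ (weight-function symbols untouched). Then $R_i^\circ$ contains every ``diagonal'' tuple $(\bvec x,y,y)$ from stage~$1$ on---hence is nonempty---while $\{\bvec u\mid\exists v\neq v'\,R_l^\circ(\bvec u,v,v')\}$ faithfully tracks the original $R_l$; the need for two distinct domain elements here is exactly why the overall proof of Theorem~\ref{theor-loose} splits off the case $|A|<2$. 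From now on I work with the preprocessed rules, again writing $R_i$ for the working relations, and take care---by a preliminary guard for the degenerate case $n=0$, in which no original relation ever grows---that the padding does not manufacture a spurious extra stage. Into $\Gamma'$ I put: the working relations $R_i$; a delayed copy $\hat R_i$ for each $i$ with rule $\hat R_i(\bvec x)\gets R_i(\bvec x)$, so that in a run of $\Sigma'$ the relation $\hat R_i$ lags one step behind $R_i$, which is how we recognise the previous stage; a weight function $F'_j$ of arity $\ar(F_j)+\sum_l\ar(R_l)$ for each $j$; and the original symbols of $\Gamma$, re-added only to carry the output.

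The rules of $\Sigma'$ are then as follows. For $R_i$ and $\hat R_i$ we keep the rules above, but with every intensional subterm $F_k(\bvec w)$ of $\varphi_i$ replaced by $\avg_{\bvec z:\beta(\bvec z)}F'_k(\bvec w;\bvec z)$, where $\beta(\bvec z)$---with $\bvec z$ split as $(\bvec z_1,\dots,\bvec z_p)$---expresses ``$\bvec z$ is a timestamp of the previous stage'', namely $\bigwedge_l\bvec z_l\in R_l\wedge\bigvee_l\bvec z_l\notin\hat R_l$; call the rewritten formula $\varphi_i^\star$. For $F'_j$ we take $F'_j(\bvec x;\bvec z)\gets\ite{\gamma(\bvec z)}{\theta_j^\star}{\bot}$, where $\theta_j^\star$ is $\theta_j$ rewritten in the same way and $\gamma(\bvec z)$ expresses ``$\bvec z$ is a timestamp of the stage about to be produced'', namely $\bigwedge_l\big(\bvec z_l\in R_l\vee\varphi_l^\star(\bvec z_l)\big)\wedge\bigvee_l\bvec z_l\notin R_l$. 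Finally, for each original $F_j\in\Gamma$ we take $F_j(\bvec x)\gets\ite{\delta}{\avg_{\bvec z:\beta(\bvec z)}F'_j(\bvec x;\bvec z)}{\bot}$, where $\delta$ is the sentence ``one more application of the relational rules changes nothing'', i.e.\ $\bigwedge_l\forall\bvec u\,(\varphi_l^\star(\bvec u)\to R_l(\bvec u))$; this carries out the ``projection onto the last timestamps'' inside the single stratum, the functional fixpoint freezing $F_j$ the first moment $\delta$ holds. An analogous, simpler rule recovers the original relation symbols of $\Gamma$ from the $R_i^\circ$.

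Correctness is then an induction on $t$: writing $\B_t$ for the $t$-fold application of $T_{\Sigma'}$ to the empty expansion of $\A$, one shows (a) $R_i^{\B_t}=R_i^{\mathcal L_t}$ and $\hat R_i^{\B_t}=R_i^{\mathcal L_{t-1}}$, and (b) $(F'_j)^{\B_t}(\bvec x;\bvec\sigma)=F_j^{\mathcal L_s}(\bvec x)$ whenever $\bvec\sigma$ is a timestamp of a stage $s\le\min(t,n)$, and $(F'_j)^{\B_t}(\bvec x;\bvec\sigma)=\bot$ otherwise. The key---and the actual work---is: $\gamma$ holds of $\bvec\sigma$ at exactly the step producing $\bvec\sigma$'s stage, so the functional ``write once'' matches a loose update even when the value written is (or later would be) $\bot$; the preprocessing guarantees the previous stage has at least one timestamp, and by (b) these all carry the value $F_k^{\mathcal L_{s-1}}(\bvec x)$, so $\avg_{\bvec z:\beta}F'_k(\bvec x;\bvec z)$ reads it back exactly; and $\delta$ first holds at step $n+1$, where $\B_n$ already stores the final-stage function values on the timestamps of stage $n$, so $F_j$ is set to $F_j^{\mathcal L_n}$. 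Since $\Sigma^L(\A)=\mathcal L_n$, restricting $\Sigma'(\A)$ to $\Upsilon\cup\Gamma$ yields exactly $\Sigma^L(\A)$. The main obstacle I expect, beyond routine bookkeeping, is precisely this last interplay: matching the write-once functional fixpoint to the overwriting loose fixpoint---including values that are or become $\bot$---and coping with intensional relations that stay empty, which is the very point that forces the restriction $|A|\ge2$ and hence the case analysis in the proof of Theorem~\ref{theor-loose}.
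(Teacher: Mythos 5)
Your construction is correct and follows the same timestamping/delayed-evaluation strategy as the paper's proof: timestamps are concatenations of one tuple per intensional relation, a two-element gadget (the source of the hypothesis $|A|\ge 2$) handles relations that may stay empty, the intensional weight functions acquire timestamp arguments, values are read back by averaging over the timestamps of the appropriate stage, and a definable termination test triggers the projection onto the final stage. Within that strategy you make two choices that differ from the paper's. First, the paper simulates one loose stage by \emph{two} functional stages, alternating a relation-update phase with a weight-update phase and using the chain $\RAll$, $\RAllOld$, $\RAllOldOld$ to reach back across the phase boundary; you instead run the simulation in lockstep, one functional stage per loose stage, by letting the weight rule \emph{look ahead}: $\gamma$ uses $\varphi_l^\star$ to identify the timestamps of the stage being produced, while $\beta$ (via the delayed copies $\hat R_l$) identifies those of the previous stage. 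This eliminates the phase bookkeeping at the price of re-evaluating the relational bodies inside the weight rules; your observation that $\gamma(\bvec\sigma)$ holds at exactly one step, so that the write-once discipline of \eqref{eqstrict} coincides with the loose overwrite even for $\bot$ values, is the key point and is sound. Second, you put the two-element gadget on the relations themselves (the diagonal padding $R_i^\circ$) rather than on the timestamp encoding; this is equivalent, but it creates the spurious stage when the loose termination index is $0$, which the paper's encoding avoids, so your ``preliminary guard'' should be made explicit (that case is a single \fosum-definable test, under which all symbols of $\Gamma$ are output empty or undefined).

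Two points to tighten. (i) Under the paper's semantics of summation, $\avg_{\bvec z:\beta(\bvec z)}F'_k(\bvec w;\bvec z)$ does not literally compute what you want, since the summand $\llbracket\beta\rrbracket\cdot\llbracket F'_k\rrbracket$ is $\bot$ wherever $F'_k$ is undefined and $0\cdot\bot=\bot$; guard the summand, e.g.\ sum $\ite{\beta(\bvec z)}{F'_k(\bvec w;\bvec z)}{0}$, and handle the all-$\bot$ case separately. (The paper's own proof shares this wrinkle.) (ii) A minor indexing slip: $\delta$ first holds in stage $n$, so $F_j$ is written at the step producing stage $n+1$; this does not affect the argument.
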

\begin{proof}
    The approach of this proof will be to simulate the evaluation of a single stage under loose semantics with two stages under functional semantics. We will first update all the intensional relations and then update the intensional weight functions, using the old values of the intensional relations. 
    Additionally, since the intensional weight functions can be entirely rewritten at each stage under loose  semantics, we parameterise the weight function with a timestamp, as previously explained. For our timestamp, we will use set of tuples that are added to the intensional relations in the corresponding stage under loose semantics. It is for the construction of this timestamp that we need to simulate once stage of the loose inflationary semantics with two stages under the functional semantics.

    Let $R_1, \dots, R_n$ be all the relation names in $\Gamma$.
    To start, we will construct a formula $\phiAll$ that accepts
    all concatenations of exactly one tuple in each of $R_1,
    \dots, R_n$. To allow for some relations to be empty, we will
    add ``empty tuples'' by encoding our tuples prefixed by a
    pair of extra elements. If the elements in this pair have the
    same value, it encodes ``no tuple'',
and if the elements in the prefix pair have the different values, it encodes an actual tuple. We can write $\phiAll$ as follows
    \begin{tabbing}
        \qquad\=\qquad\=\kill
        \(\phiAll(x_1,\dots, x_{s_{n+1}}) := \displaystyle \bigwedge_{i = 1}^{n}(x_{1+s_i} = x_{2+s_i} \lor (x_{1+s_i} \neq x_{2+s_i} \land R_i(x_{3+s_i},\dots, x_{s_{i+1}})))\)
    \end{tabbing}
    where $s_i :=\sum_{j=1}^{i-1} (\ar(R_{i})+2)$ for $i\in \{1,\dots,n+1\}$. The tuples accepted by this formula will form our timestamps.

    Next we will add the following relational rules to $\Sigma'$ to track the history of $\phiAll$ and all $R_1, \dots, R_n$:
    \begin{tabbing}
        \qquad\=\qquad\=\kill
        $\RAll(x_1,\dots, x_{s_{n+1}}) \gets \phiAll(x_1,\dots, x_{s_{n+1}})$ \\[\jot]
        $R_{i}^\text{old}(x_1,\dots,x_{\ar(R_{i})}) \gets R_i(x_1,\dots,x_{\ar(R_{i})})$ for each $i \in \{1,\dots,n\}$ \\[\jot]
        $\RAllOld(x_1,\dots, x_{s_{n+1}}) \gets \RAll(x_1,\dots, x_{s_{n+1}})$ \\[\jot]
        $\RAllOldOld(x_1,\dots, x_{s_{n+1}}) \gets \RAllOld(x_1,\dots, x_{s_{n+1}})$
    \end{tabbing}

    We now define the formulas $\phiAllNew$, $\phiAllOldNew$, and $\phiAllOldOldNew$ that contain the tuples that will be added to $\RAll$ this stages and those that were added to it one and two stages ago respectively. They are defined as follows
    \begin{tabbing}
        \qquad\=\qquad\=\kill
        \( \phiAllNew(x_1,\dots, x_{s_{n+1}}) := \phiAll(x_1,\dots, x_{s_{n+1}}) \land \neg \RAll(x_1,\dots, x_{s_{n+1}})\)\\[\jot]
        \( \phiAllOldNew(x_1,\dots, x_{s_{n+1}}) := \RAll(x_1,\dots, x_{s_{n+1}}) \land \neg \RAllOld(x_1,\dots, x_{s_{n+1}})\)\\[\jot]
        \( \phiAllOldOldNew(x_1,\dots, x_{s_{n+1}}) := \RAllOld(x_1,\dots, x_{s_{n+1}}) \land \neg \RAllOldOld(x_1,\dots, x_{s_{n+1}}) \)
    \end{tabbing}

    Let $w_1, \dots, w_m$ be all the weight function names in $\Gamma$. We will define a new intensional weight function name $w_{i}'$ of arity $s_{n+1}+\ar(w_i)$ for each $i \in \{1,\dots,m\}$, that will represent the weight function $w_i$ in the stage where the first $s_{n+1}$ variables were added to $\RAll$. Let $w_i(y_1,\dots,y_{\ar(w_i)}) \gets t_i$ be the weight function rule in $\Sigma$ for each $i\in\{1,\dots,m\}$. We define $t_{i}'$ to be the weight term $t_i$ where each occurrence of $R_j$ is replaced by $R_{j}^{\text{old}}$ for each $j \in \{1,\dots,n\}$ and where each occurrence of $w_k(z_1,\dots,z_{\ar(w_k)})$ is replaced by 
    \begin{equation*}
        \avg_{p_1,\dots, p_{s_{n+1}}: \phiAllOldOldNew(p_1,\dots, p_{s_{n+1}})} w_{k}'(p_1,\dots, p_{s_{n+1}},z_1,\dots,z_{\ar(w_k)})
    \end{equation*}
    for each $k\in\{1,\dots,m\}$. We now add the following weight function rule to $\Sigma'$
    \begin{equation*}
        w_{i}'(x_1,\dots, x_{s_{n+1}},y_{1},\dots,y_{\ar(w_i)}) \gets \IfText \phiAllNew(x_1,\dots, x_{s_{n+1}}) \ThenText t_{i}' \ElseText \bot
    \end{equation*}
    which makes sure that the $w_{i}'$ can only be updated during the stage in which we are updating the intensional weight functions. This is because $\phiAllNew$ only contains tuples if some tuple was added to any intensional relation in the previous stage, because a stage in which we update the intensional relations is preceded by a stage in which we update only the intensional weight functions.

    Next, we will add the relation rules for $R_1,\dots,R_n$. To do this, we need to make sure we do not update the intensional relations while the intensional weight functions are being updated. To this end we will define the formula $\phiWUpdate$ as follows:
    \begin{equation*}
        \phiWUpdate:= \exists x_1,\dots, x_{s_{n+1}}\,\phiAllNew(x_1,\dots, x_{s_{n+1}})
    \end{equation*}
    Let $R_i(x_1,\dots,x_{\ar(R_{i})}) \gets \varphi_i(x_1,\dots,x_{\ar(R_{i})})$ be the relational rule for $R_i$ in $\Sigma$ for each $i \in \{1,\dots,n\}$. We define $\varphi_{i}'(x_1,\dots,x_{\ar(R_{i})})$ to be the formula $\varphi_i$ where each occurrence of $w_j(y_1,\dots,y_{\ar(w_{j})})$ is replaced by 
    \begin{equation*}
        \avg_{p_1,\dots, p_{s_{n+1}}: \phiAllOldNew(p_1,\dots, p_{s_{n+1}})} w_{j}'(p_1,\dots, p_{s_{n+1}},y_1,\dots,y_{\ar(w_{j})}) 
    \end{equation*}
    for each $k\in\{1,\dots,m\}$. We now add the following relational rule to $\Sigma'$
    \begin{equation*}
        R_i(x_1,\dots,x_{\ar(R_{i})}) \gets \neg \phiWUpdate \land \varphi_{i}'(x_1,\dots,x_{\ar(R_{i})})
    \end{equation*}
    for each $i \in \{1,\dots,n\}$.

    Now we will finally add the weight function rules for $w_1,\dots,w_m$. Before we can do this we need to define the formula $\phiStop$ that is only true when $R_1,\dots,R_n$ have stopped updating. 
    The formulas $\phiAllNew$ and $\phiAllOldNew$ track the new tuples that will be added to $\RAll$, and those that were added to it 1 stage ago respectively. If no relation was updated during a relation update stage, in the next stage, a weight update stage, the formula $\phiAllNew$ will be empty. This alone is insufficient to check that the relations have stopped growing, because in every relation update stage the $\phiAllNew$ is empty, since the relations are not updated during the preceding weight update stage. Thus, we can define $\phiStop$ as
    \begin{equation*}
        \phiStop := \neg \exists x_1,\dots, x_{s_{n+1}} (\phiAllNew(x_1,\dots, x_{s_{n+1}}) \lor \phiAllOldNew(x_1,\dots, x_{s_{n+1}}))
    \end{equation*}
    Next, since $\phiStop$ will become true during a weight update stage, we will use $\phiAllOldOldNew$ to determine the identifier of the final stage. Thus, we will add the following rules to $\Sigma'$
    \begin{tabbing}
        \quad\=\quad\=\kill
        \( w_i(x_1,\dots,x_{\ar(w_i)}) \gets\)\\[\jot] 
        \> \( \displaystyle \IfText \phiStop \ThenText \avg_{p_1,\dots, p_{s_{n+1}}: \phiAllOldOldNew(p_1,\dots, p_{s_{n+1}})} w_{i}'(p_1,\dots, p_{s_{n+1}},x_1,\dots,x_{\ar(w_i)}) \ElseText \bot \)
    \end{tabbing} 
    for each $i \in \{1, \dots m\}$, where $b_i$ is the arity of $w_i$.
\end{proof}
To complete our proof translating from loose to functional semantics, we need to give a translation that works for weighted structures with a domain size smaller than 2. However, before that we will give a quick definition of a bounded loose termination index.
\begin{definition}
    Let $\Sigma$ be an \ifpsum{} stratum of type $\Upsilon \to \Gamma$ and let $\C$ be a class of $\Upsilon$-weighted structures.
    We say that the loose termination index is \emph{$\C$-bounded} if there exists some $n$ such that for the loose termination index of $\Sigma(\A)$ is less than or equal to $n$ for each $\A \in \C$.
\end{definition}
\begin{lemma}
    \label{lem:loose_to_func_small_dom}
    For any \ifpsum{} stratum $\Sigma$ of type $\Upsilon \to \Gamma$ and class of $\Upsilon$-weighted structures $\C$, if the loose termination index is $\C$-bounded, there exists a set of \fosum{} formulas and weight terms such that for each $\A \in \C$ the following holds:
    \begin{itemize}
        \item For each relation name $R \in \Gamma$ there exists a formula $\varphi_{R}$ of the same arity as $R$ such that $\varphi_{R}^{\A}$ agrees with $R^{\Sigma^L(\A)}$.
        \item For each weight function name $w \in \Gamma$ there exists a weight term $t_{w}$ with the same number of free variables as the arity of $w$ such that $t_{w}^{\A}$ agrees with $w^{\Sigma^L(\A)}$.
    \end{itemize}
\end{lemma}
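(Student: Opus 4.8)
The plan is to use the uniform bound $N$ on the loose termination index to \emph{unroll} the loose fixpoint $N$ times by substitution, and then, by a finite case analysis, to detect the stage at which the intensional relations have stabilised and read off the values of the intensional symbols there. Fix $N$ so that $n_\A\le N$ for all $\A\in\C$, where $n_\A$ is the loose termination index of the computation started from $\A'$, the expansion of $\A$ interpreting every intensional relation as empty and every intensional weight function as $\bot$ everywhere; recall that then $\Sigma^L(\A)=L^{n_\A}_\Sigma(\A')$.

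First I would express each stage of the loose computation by \fosum-expressions over $\Upsilon$. By induction on $t\in\{0,\dots,N\}$ I define, for each intensional relation name $R$ of arity $k$, a formula $\varphi_R^{(t)}(x_1,\dots,x_k)$, and for each intensional weight function name $w$ of arity $k$, a weight term $\theta_w^{(t)}(x_1,\dots,x_k)$, all over $\Upsilon$. For $t=0$, $\varphi_R^{(0)}$ is a fixed unsatisfiable formula with the right free variables (e.g.\ $x_1\neq x_1$, or $\exists z\,z\neq z$ if $k=0$) and $\theta_w^{(0)}$ is the constant $\bot$. For the step, let $R(\bvec x)\gets\psi_R$ and $w(\bvec x)\gets\rho_w$ be the rules of $\Sigma$, and let $\psi_R^{(t)}$ resp.\ $\rho_w^{(t)}$ be obtained by simultaneously replacing, in $\psi_R$ resp.\ $\rho_w$, every atom $S(\bvec y)$ whose symbol is intensional by $\varphi_S^{(t)}$ with $\bvec y$ substituted for its free variables, and every term $v(\bvec y)$ whose symbol is intensional by $\theta_v^{(t)}$ with $\bvec y$ substituted for its free variables (renaming bound variables to avoid capture, which is harmless since relation and weight-function symbols are applied only to variables). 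Then put $\varphi_R^{(t+1)}:=\varphi_R^{(t)}\vee\psi_R^{(t)}$ and $\theta_w^{(t+1)}:=\rho_w^{(t)}$. A routine induction on $t$, using a standard substitution lemma and following Definition~\ref{deft} with the loose clause for weight-function rules (overwriting, instead of~\eqref{eqstrict}), shows that for every $\A\in\C$ we have $\sem{\varphi_R^{(t)}}{\A}=R^{L^t_\Sigma(\A')}$ and $\sem{\theta_w^{(t)}}{\A}=w^{L^t_\Sigma(\A')}$ (here one uses that $L^t_\Sigma(\A')$ agrees with $\A$ on $\Upsilon$).

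Next I would detect stabilisation. Set $\mathit{stab}_j:=\bigwedge_R\forall\bvec x\,\big(\varphi_R^{(j)}(\bvec x)\leftrightarrow\varphi_R^{(j+1)}(\bvec x)\big)$, the conjunction ranging over all intensional relation names of $\Gamma$; by the previous step, $\A\models\mathit{stab}_j$ iff $L^j_\Sigma(\A')$ and $L^{j+1}_\Sigma(\A')$ agree on all intensional relations, so the least $j$ with $\A\models\mathit{stab}_j$ is exactly $n_\A\le N$. I then take, for each intensional relation name $R$, the formula $\varphi_R:=\bigvee_{j=0}^{N}\big(\big(\bigwedge_{i<j}\neg\mathit{stab}_i\big)\wedge\mathit{stab}_j\wedge\varphi_R^{(j)}(\bvec x)\big)$, and, for each intensional weight function name $w$, the nested conditional weight term that, for the least $j\in\{0,\dots,N-1\}$ with $\mathit{stab}_j$ true, returns $\theta_w^{(j)}$, and returns $\theta_w^{(N)}$ if none of $\mathit{stab}_0,\dots,\mathit{stab}_{N-1}$ holds. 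On any $\A\in\C$ exactly the disjunct, resp.\ branch, indexed by $j=n_\A$ is selected (the final $\theta_w^{(N)}$ covering the case $n_\A=N$), so $\sem{\varphi_R}{\A}=R^{L^{n_\A}_\Sigma(\A')}=R^{\Sigma^L(\A)}$ and $\sem{\theta_w}{\A}=w^{\Sigma^L(\A)}$. These expressions depend only on $\Sigma$ and $N$, hence only on $\Sigma$ and $\C$, which is what is required.

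I expect the genuinely delicate point to be the stabilisation step rather than the unrolling. One cannot simply unroll $N$ stages and return $L^N_\Sigma(\A')$: since $n_\A$ varies over $\C$ and the intensional \emph{weight functions} are overwritten at every stage---and may keep changing after the relations have stabilised---the naive unrolling would return incorrect weight values whenever $n_\A<N$. The fix rests on two observations: the predicate ``the intensional relations have stabilised at stage $j$'' is \fosum-definable as a Boolean combination of the already-constructed stage formulas, and a \emph{uniform} bound $N$ turns ``the least such stage'' into a finite, hence expressible, case distinction whose chosen index is provably $n_\A$. Everything else---the base case and the substitution lemma---is routine.
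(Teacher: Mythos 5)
Your proposal is correct and follows essentially the same route as the paper's proof: unroll the loose computation stage by stage via substitution into the rule bodies, define formulas detecting the (uniformly bounded) loose termination index as the first stage at which the intensional relations stop growing, and assemble the final expressions by a finite case distinction over that index. If anything, you are slightly more careful than the paper's write-up in explicitly taking the disjunction $\varphi_R^{(t)}\vee\psi_R^{(t)}$ to reflect the inflationary union for relations.
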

\begin{proof}
    Let $n$ be a natural number smaller than or equal to one more than the bound on the loose termination index of $\Sigma$. First we will prove inductively that we can write a formula $\varphi_{R}^n$ for each relation name $R \in \Gamma$ and a weight term $t_{w}^n$ for each weight function name $w\in \Gamma$ such that $\varphi_{R}^n$ and $R^{\B_n}$ agree, and that $t_{w}^n$ and $w^{\B_n}$ agree for each $\Upsilon$-weighted structure $\A$, where $\B_n := L_{\Sigma}^n(\A)$. For the base case, where $n=0$, this is trivial since each term is $\bot$ and each formula is simply false. Let the $R \in \Gamma$ be a relation name and let $R(x_1,\dots,x_m) \gets \varphi(x_1,\dots,x_m)$ be its rule in $\Sigma$. Then for any $n>0$ we have that $\varphi_{R}^{n}$ is $\varphi$ with every occurrence of every relation $R' \in \Gamma$ replaced by $\varphi_{R'}^{n-1}$ and every weight function $w' \in \Gamma$ replaced by $t_{w'}^{n-1}$, which are all constructed inductively. We can similarly construct $t_{w}^{n}$ for each weight function name $w \in \Gamma$.
    
    Next we will prove that we can construct a formula $\lti{n}$, such that for each $\Upsilon$-weighted structure, $\A \models \lti{n}$ if and only if $n$ is the loose termination index of $\Sigma$ evaluated on $\A$. We will again prove this inductively. Let $\{R_1,\dots,R_m\}$ be the set of all relation names in $\Gamma$ and let $a_1,\dots,a_m$ be their respective arities. For the base case where $n=0$, we can define the formula simply as
    \begin{equation*}
        \lti{0} := \bigwedge_{R_i \in \{R_1,\dots,R_m\}} \neg \exists x_1,\dots ,x_{\ar(R_{i})}\,\varphi_{R}^1(x_1,\dots,x_{\ar(R_{i})})
    \end{equation*}
    For the case where $n>0$, we can define the formula as
    \begin{align*}
        \lti{n} := & (\bigwedge_{R_i \in \{R_1,\dots,R_m\}} \neg \exists x_1,\dots ,x_{\ar(R_{i})}(\varphi_{R}^n(x_1,\dots,x_{\ar(R_{i})}) \land \neg \varphi_{R}^{n-1}(x_1,\dots,x_{\ar(R_{i})})))\\
        & \land (\bigwedge_{i=0}^{n-1} \neg \lti{i})
    \end{align*}
    where all $\lti{i}$ is defined inductively for all $i < n$.

    Finally, we can combine the above two proofs to construct our desired formulae and weight terms as
    \begin{align*}
        \varphi_{R_i}(x_1,\dots,x_{\ar(R_{i})}) :=& \bigvee_{k=0}^{T} (\lti{k} \land \varphi_{R_i}^{k}(x_1,\dots,x_{\ar(R_{i})}))\\
        t_{w_j}(x_1,\dots,x_{\ar(w_{j})}) := &\IfText \lti{0} \ThenText t_{w_j}^{0}(x_1,\dots,x_{\ar(w_{j})})\\
        & \ElseTextNS \IfText \lti{1} \ThenText t_{w_j}^{1}(x_1,\dots,x_{\ar(w_{j})})\\
        & \vdots\\
        & \ElseTextNS \IfText \lti{T} \ThenText t_{w_j}^{T}(x_1,\dots,x_{\ar(w_{j})})\\
        & \ElseTextNS \bot
    \end{align*}
    for each relation name $R_i \in \Gamma$ with arity $\ar(R_{i})$ and for each weight function name $w_j \in \Gamma$, with $T$ an upper bound on the loose termination index of $\Sigma$.
\end{proof}
\begin{proposition}
    \label{prop:loose_to_func}
    For any \ifpsum{} stratum $\Sigma$ of type $\Upsilon \to \Gamma$, there exists an \ifpsum{} stratum $\Sigma'$ of type $\Upsilon \to \Gamma'$, with $\Gamma \subset \Gamma'$, such that for any $\Upsilon$-weighted structure $\A$, we have that ${\Sigma'}(\A)$, restricted to $\Upsilon \cup \Gamma$, equals $\Sigma^L(\A)$.
\end{proposition}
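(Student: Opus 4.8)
The plan is to lean on the two lemmas just established. Lemma~\ref{lem:loose_to_func_big_dom} already gives the desired simulation on structures with at least two elements, so all that remains is to cover structures with at most one element and then fuse the two constructions into a single stratum.

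First I would note that on the class $\C_{<2}$ of $\Upsilon$-structures whose universe has fewer than two elements, the loose termination index of $\Sigma$ is bounded by a constant: on such a structure every intensional relation is a subset of a set of size at most one, hence can grow at most once, while every loose stage preceding termination strictly enlarges some intensional relation; so the index is at most the number of intensional relation symbols occurring in $\Sigma$. Applying Lemma~\ref{lem:loose_to_func_small_dom} to this class $\C_{<2}$ then yields, for every relation symbol $R\in\Gamma$, an \fosum\ formula $\varphi_R(\bvec x)$, and for every weight-function symbol $w\in\Gamma$, an \fosum\ weight term $t_w(\bvec x)$, all over the vocabulary $\Upsilon$, such that $\sem{\varphi_R}{\A}=R^{\Sigma^L(\A)}$ and $\sem{t_w}{\A}=w^{\Sigma^L(\A)}$ for every $\A\in\C_{<2}$.

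Next I would assemble $\Sigma'$. Put $\beta\coloneqq\exists x\,\exists y\,(x\neq y)$, the \fosum\ sentence that holds exactly on structures of size at least two, and let $\hat\Sigma$ be the stratum from Lemma~\ref{lem:loose_to_func_big_dom} with \emph{all} of its intensional symbols renamed to fresh names disjoint from $\Gamma$; write $\hat R$ and $\hat w$ for the renamed copies of $R,w\in\Gamma$, so that $\hat R^{\hat\Sigma(\A)}=R^{\Sigma^L(\A)}$ and $\hat w^{\hat\Sigma(\A)}=w^{\Sigma^L(\A)}$ whenever $|A|\ge 2$. Now let $\Sigma'$, of type $\Upsilon\to\Gamma'$ where $\Gamma'$ is $\Gamma$ together with the fresh intensional vocabulary of $\hat\Sigma$, consist of all rules of $\hat\Sigma$, plus, for each $R\in\Gamma$, the rule $R(\bvec x)\gets\bigl(\beta\wedge\hat R(\bvec x)\bigr)\vee\bigl(\neg\beta\wedge\varphi_R(\bvec x)\bigr)$, and for each $w\in\Gamma$ the rule $w(\bvec x)\gets\textsf{if $\beta$ then $\hat w(\bvec x)$ else $t_w(\bvec x)$}$. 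Since no rule of $\hat\Sigma$ mentions a symbol of $\Gamma$, the fresh symbols evolve under $T_{\Sigma'}$ exactly as under $T_{\hat\Sigma}$, irrespective of the added rules, and $\Gamma\subsetneq\Gamma'$ because $\hat\Sigma$ contributes genuinely new symbols.

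Finally I would verify correctness by a case split on $|A|$. If $|A|\ge 2$, then $\beta$ holds at every stage (the universe is fixed), so the added rules act as $R(\bvec x)\gets\hat R(\bvec x)$ and $w(\bvec x)\gets\hat w(\bvec x)$; since intensional relations only grow and intensional weight functions, once made defined, are never overwritten under the functional semantics, each $\Gamma$-symbol tracks its $\hat\Sigma$-counterpart (with a one-step delay for the weight functions) and converges to the same limit, which by Lemma~\ref{lem:loose_to_func_big_dom} is exactly $\Sigma^L(\A)$ on $\Upsilon\cup\Gamma$. If $|A|<2$, then $\beta$ fails at every stage, so the added rules act as $R(\bvec x)\gets\varphi_R(\bvec x)$ and $w(\bvec x)\gets t_w(\bvec x)$; as these right-hand sides are over $\Upsilon$ only, they are evaluated already at the first step and then frozen, giving $R^{\Sigma'(\A)}=\sem{\varphi_R}{\A}=R^{\Sigma^L(\A)}$ and $w^{\Sigma'(\A)}=\sem{t_w}{\A}=w^{\Sigma^L(\A)}$ by Lemma~\ref{lem:loose_to_func_small_dom}. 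In both cases the fresh symbols of $\hat\Sigma$ may accumulate irrelevant values, but they lie outside $\Upsilon\cup\Gamma$ and are discarded upon restriction, so $\Sigma'(\A)$ restricted to $\Upsilon\cup\Gamma$ equals $\Sigma^L(\A)$. The step to treat with care is precisely this interplay under the functional semantics: one must check that in the ``inactive'' regime the garbage held by the unused symbols is truly inert because it is never read back, and that in the ``active'' regime the monotone growth of relations and the write-once behaviour of weight functions let the $\Gamma$-symbols copy their $\hat\Sigma$-counterparts without spoilage. Keeping $\beta$ a plain \fosum\ sentence, rather than an intensional flag that would only be set after the first step, is what makes this argument free of off-by-one timing issues.
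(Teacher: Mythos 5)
Your proposal is correct and follows essentially the same route as the paper: both cover the small-universe case by bounding the loose termination index by the number of intensional relation symbols and invoking Lemma~\ref{lem:loose_to_func_small_dom}, and both merge the two constructions into one stratum by dispatching every rule body on the \fosum\ sentence $\exists y_1\exists y_2\,(y_1\neq y_2)$. The only (harmless) cosmetic difference is that you rename the big-domain stratum's intensional symbols and add copy rules into $\Gamma$ --- requiring your one-step-delay observation --- whereas the paper splices the bodies $\varphi_{R_i}^{\text{big}}$ and $t_{w_j}^{\text{big}}$ directly into the dispatching rules for the $\Gamma$-symbols.
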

\begin{proof}
    Lemmas \ref{lem:loose_to_func_big_dom} and \ref{lem:loose_to_func_small_dom} can be combined into a single stratum. Let $R_i$ be a relation in $\Gamma$ and let $R_i(x_1,\dots,x_{\ar(R_{i})}) \gets \varphi_{R_i}^{\text{big}}(x_1,\dots,x_{\ar(R_{i})})$ be its relational rule in the stratum of Lemma~\ref{lem:loose_to_func_big_dom}. Similarly, let $w_j$ be a weight function in $\Gamma$ and let $w_j(x_1,\dots,x_{\ar(w_{j})}) \gets t_{w_j}^{\text{big}}(x_1,\dots,x_{\ar(w_{j})})$. We start by adding all the rules for symbols that are not in $\Gamma$ from the stratum from Lemma~\ref{lem:loose_to_func_big_dom} to $\Sigma'$. We then add the following rules to $\Sigma'$
    \begin{tabbing}
        \quad\=\quad\=\kill
        \(R_i(x_1,\dots,x_{\ar(R_{i})}) \gets\)\\[\jot]
        \> \(((\exists y_1,y_2\, y_1 \neq y_2) \land \varphi_{R_{i}}^\text{big}(x_1,\dots,x_{\ar(R_{i})})) \)\\[\jot]
        \> \({} \lor (\neg(\exists y_1,y_2\, y_1 \neq y_2) \land \varphi_{R_{i}}^\text{small}(x_1,\dots,x_{\ar(R_{i})}))\) \\[\jot]
        \(w_j(x_1,\dots,x_{\ar(w_{j})}) \gets\)\\[\jot] 
        \> \( \IfText \exists y_1,y_2\, y_1 \neq y_2 \ThenText t_{w_{i}}^{\text{big}}(x_1,\dots,x_{\ar(w_{j})}) \ElseText t_{w_{i}}^{\text{small}}(x_1,\dots,x_{\ar(w_{j})})\)
    \end{tabbing}
    for each such $R_i$ and $w_j$ in $\Gamma$, where $\varphi_{R_i}^{\text{small}}$ and $t_{w_j}^{\text{small}}$ are the \fosum{} formula and weight term for $R_i$ and $w_i$ from Lemma~\ref{lem:loose_to_func_small_dom} respectively. Since these formulas and weight terms are only considered whenever the size of the domain is less than or equal to 1, we know that that case the loose termination index is bounded by the number of relations and thus that these formulas and weight terms can be constructed.
\end{proof}

\subsection{Proof of Lemma~\ref{simindlemma}}
  It clearly suffices to prove this for a program with only a single
  stratum. Moreover, since in structures with just one element, all
  \ifpsum\ programs are easily seen to be equivalent to \fosum\
  expressions, without loss of generality we only consider structures
  $\A$ with at least $2$ elements.

  So let $\Sigma$ be a stratum of type $\Upsilon\to\Gamma$ consisting of the rules
  $R_i(\vec x_i)\gets\psi_i$ for $i\in[k]$ and $F_j(\vec
  x_j)\gets\eta_j$ for $j\in[\ell]$. Without loss of generality we assume that $k,\ell\ge 1$; otherwise we introduce dummy
  rules. For $i\in[k]$, let $r_i$ be the arity of $R_i$, and for
  $j\in[\ell]$, let $s_j$ be the arity of $F_j$. Moreover, let $r\coloneq\max\{r_1,\ldots,r_k,s_1,\ldots,s_\ell\}$. 

  For all $i\in[k+\ell]$, let
  \[
    \chi_i(z_1,\ldots,z_{k+\ell})\coloneqq
    z_i\neq
    z_{i'}\wedge\bigwedge_{j\in[k+\ell]\setminus\{i\}}z_j=z_{i'},
  \]
  where $i'=1$ if $i\neq 1$ and $i'=2$ if $i=1$. In the following,
  $\vec z$ always ranges over $(k+\ell)$-tuples
  $(z_1,\ldots,z_{k+\ell})$.
  We will use $(k+\ell)$-tuples $\vec c$ to represent indices in
  $[k+\ell]$, where $\vec c$ represents $i$ if it satisfies
  $\chi_i(\vec c)$,
  that is, if the $i$th entry is distinct from all others and if all
  entries except the $i$th are equal. Then with a single
  $(r+k+\ell)$-ary function $G$ we can represent $(k+\ell)$
  $r$-ary functions $G_i$, where
  \[
    G_i(\vec a)=b\iff G(\vec a,\vec c)=b\quad\text{for all $\vec c$
      satisfying $\chi_i(\vec c)$}.
  \]
  If some of the functions $G_i$ have smaller arity, we can also
  represent them, simply ignoring the arguments that are not needed.
  Furthermore, we can represent relations via their characteristic
  functions.

  Let $F\not\in\Upsilon$ be a
  fresh function symbol of arity $(k+\ell+r)$. We will use $F$ to
  represent the relations $R_i$ and the functions $F_i$ in the way just described.
  For every $i\in[k]$, we let $\psi_i'$ be the formula obtained from
  $\psi_i$ by replacing each
  subformula $R_p(x_1',\ldots,x_{r_p}')$ by the formula 
  \[
    \rho_p(x_1',\ldots,x_{r_p}')\coloneqq \exists
    x_{r_p+1}\ldots\exists x_{r}' \exists\vec z\big(\chi_p(\vec z)\wedge
    F(x_1',\ldots,x_r',\vec z)=1\big)
  \]
  and by replacing each subterm $F_q(x_1',\ldots,x_{s_q}')$ by the
  term
  \[
    \tau_q(x_1',\ldots,x_{s_q}')\coloneqq\avg_{(x_{s_q+1}',\ldots,x_r',\vec z):
      \chi_q(\vec z)} F(x_1',\ldots,x_r',\vec z).
  \]
 Similarly, for every $j\in[\ell]$, we let $\eta_j'$ be the term obtained from
  $\eta_j$ by replacing each
  subformula $R_p(x_1',\ldots,x_{r_p}')$ by the formula
  $\rho_p(x_1',\ldots,x_{r_p}')$ and each subterm $F_q(x_1',\ldots,x_{s_q}')$ by the
  term $\tau_q(x_1',\ldots,x_{s_q}')$.

  Now for every $i\in[k+\ell]$ we define a term $\theta_i(\vec x,\vec z)$ as follows:
  \begin{itemize}
  \item we let
    \[
      \theta_{k+\ell}(\vec x,\vec z)\coloneqq \textsf{if $\chi_{k+\ell}(\vec
        z)$ then $\eta_{k+\ell}'$ else $\bot$};
    \]
  \item for $j\in[\ell-1]$, we let
    \[
      \theta_{k+j}(\vec x,\vec z)\coloneqq \textsf{if $\chi_{k+j}(\vec
        z)$ then $\eta_{k+j}'$ else $\theta_{k+j+1}$};
    \]
  \item for $i\in[k]$, we let
    \[
      \theta_i(\vec x,\vec z)\coloneqq\textsf{if $\chi_i(\vec
        z)\wedge\varphi_i'$ then $1$ else $\theta_{i+1}$}.
    \]
  \end{itemize}
  Let $\theta\coloneqq\theta_1$ and consider the \ifpsum\ term
  \[
    \zeta(\vec x,\vec z)\coloneqq \ifp\big(F(\vec x,\vec z)\gets\theta\big)(\vec
    x,\vec z).
  \]
  Let $\A$ be a structure (with at least $2$ elements). As all free variables of the term $\theta$
  appear in $\vec x$ or $\vec z$, we do not need an assignment to
  define the functions $F^{(n)}:A^{k+\ell+r}\to\Rbot$ for
  $n\in\Nat$. By induction $n$, it is easy to prove that for all
  $i\in[k]$ and $\vec a^=(a_1,\ldots,a_r)\in A^{r}$ and $\vec c\in
  A^{k+\ell}$ such that $\A\models\chi_i(\vec c)$ we have
  \[
    F^{(n)}(\vec a,\vec c)=1\iff (a_1,\ldots,a_{r_i})\in
    R_i^{T^n_\Sigma(\A)}
  \]
  and for all 
  $j\in[\ell]$ and $\vec a=(a_1,\ldots,a_r)\in A^{r}$ and $\vec c\in
  A^{k+\ell}$ such that $\A\models\chi_{k+j}(\vec c)$ we have
  \[
    F^{(n)}(\vec a,\vec c)=F_j^{T^n_\Sigma(\A)}(a_1,\ldots,a_{s_j}).
  \]
  Furthermore, for all 
  $\vec a\in A^{r}$ and all $\vec c\in
  A^{k+\ell}$ such that $\A\not\models\chi_{i}(\vec c)$ for any
  $i\in[k+\ell]$ we have $F^{(n)}(\vec a,\vec c)=\bot$.

  To complete the proof, we need to make a case distinction depending
  on the answer symbol $S$ of $\Sigma$. If $S=R_i$ for some
  $i\in[k]$, we let
  \[
    \xi(x_1,\ldots,x_{r_i})\coloneqq\exists x_{r_i+1}\ldots\exists x_r\exists\vec
    z\Big(\chi_i(\vec z)\wedge\zeta(x_1,\ldots,x_r,\vec z)=1\Big).
  \]
  If $S=F_j$ for some $j\in[\ell]$, we let
  \[
    \xi(x_1,\ldots,x_{s_j})\coloneqq \avg_{(x_{s_j+1},\ldots,x_r,\vec z):
      \chi_j(\vec z)} \zeta(x_1,\ldots,x_r,\vec z).
  \]

\subsection{Proof of Theorem~\ref{theo:eval-ifp}}

As a starting point, we note that it is straightforward to show that 
$\fosum$ has polynomial-time data complexity. 

\begin{lemma}\label{lem:eval-fo}
    There is an algorithm that, given an $\fosum$ expression $\xi(\vec
    x)$, a rational structure $\A$, and a tuple $\vec a\in A^{|\vec
      x|}$, computes $\sem{\xi}{\A}(\vec a)$ in polynomial time in $\|\A\|$. 
\end{lemma}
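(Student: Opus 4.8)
The plan is a straightforward structural induction on the (fixed) expression $\xi$, evaluated bottom-up over $\A$: subformulas are evaluated to Boolean values and weight subterms to elements of $\Rbot$. Because $\xi$ is fixed, it has only constantly many subexpressions and constant nesting depth, so the sole point requiring care is the bookkeeping on the bitsize of the rational values produced by weight subterms --- one must check that they stay polynomial in $\|\A\|$ and that every individual arithmetic step runs in time polynomial in the bitsizes of its arguments. Concretely, I would prove by induction on a subexpression $\xi'$ of $\xi$ the combined claim: there is a polynomial $p_{\xi'}$, depending only on $\xi'$, such that for every rational structure $\A$ and every assignment, (i) if $\xi'$ is a weight term then $\sem{\xi'}{\A}(\cdot)$ has bitsize at most $p_{\xi'}(\|\A\|)$, and (ii) given the values of its immediate subexpressions, $\sem{\xi'}{\A}(\cdot)$ is computable in time $p_{\xi'}(\|\A\|)$.

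The base cases are immediate: a numeric constant $r$ has bitsize $\|r\|=O(1)$, independent of $\A$; a weight-function atom $F(\vec x)$ evaluates by a table lookup to $F^\A(\vec a)$, of bitsize at most $\|\A\|$; equality and relation atoms are Boolean lookups. Among the inductive cases, Boolean connectives are trivial, and a quantifier $Qx\,\varphi$ is evaluated by iterating over the $|A|$ elements of the universe and taking the corresponding conjunction or disjunction. For a weight term $\theta_1\circ\theta_2$ with $\circ\in\{+,-,\cdot,/\}$, ordinary fraction arithmetic followed by reduction to lowest terms (one gcd computation) runs in time polynomial in $p_{\theta_1}(\|\A\|)+p_{\theta_2}(\|\A\|)$ and yields a value of bitsize at most $p_{\theta_1}(\|\A\|)+p_{\theta_2}(\|\A\|)+O(1)$; the undefined value $\bot$ (including the $y=0$ case of division) is propagated by a constant-time case distinction; the conditional $\textsf{if }\varphi\textsf{ then }\theta_1\textsf{ else }\theta_2$ simply selects a branch after evaluating the guard; and a comparison $\theta_1\leq\theta_2$ evaluates both sides and compares them, with the convention $\bot\leq x$ handled explicitly.

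The one case worth spelling out is the summation $\sum_{(x_1,\dots,x_k):\varphi}\theta$, with $k$ constant. We iterate over the at most $|A|^k$ tuples $\vec b\in A^k$, and for each evaluate $\varphi$ and, when $\varphi$ holds, $\theta$, so each summand is either $0$ or a rational of bitsize at most $p_\theta(\|\A\|)$. The sum of $N\leq|A|^k$ rationals each of bitsize at most $s$ is again a rational of bitsize $O(Ns)$: over the common denominator equal to the product of the summand denominators (bitsize $O(Ns)$) the numerator is a sum of $N$ integers each of bitsize $O(Ns)$, hence of bitsize $O(Ns+\log N)$, and reduction only shrinks it. Thus every partial sum has bitsize polynomial in $\|\A\|$, each of the $|A|^k$ additions operates on polynomial-bitsize operands and runs in polynomial time, and the whole summation is computed in polynomial time with a polynomially bounded result. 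Composing the constantly many inductive steps --- each of which either adds the polynomial bounds of its subexpressions (a constant-factor blow-up over the constant depth) or multiplies the bound by roughly $|A|^k$ --- leaves the final bound polynomial in $\|\A\|$, which proves the lemma. The only genuine obstacle is therefore this bitsize accounting; everything else is the familiar bottom-up evaluation of first-order logic over a finite structure, and this is exactly why the substantive difficulties are deferred to the fixpoint case of Theorem~\ref{theo:eval-ifp}, where weights can be fed back through recursion and the growth of denominators must be controlled much more carefully.
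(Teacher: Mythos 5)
Your proof is correct and follows essentially the same route as the paper, which disposes of this lemma in one sentence by noting that the rational arithmetic stays polynomial in the bitsize of the inputs and that formulas are then evaluated in the usual bottom-up way. Your explicit bitsize accounting for the summation case (common denominator of at most $|A|^k$ summands, constant nesting depth) is exactly the detail the paper leaves implicit.
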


\begin{proof}[Proof of Lemma]
All the arithmetic on rationals that needs to be carried out in polynomial time in terms of the bit-size of the input numbers. So we can evaluate terms in polynomial time. then we can evaluate first-order formulas in the usual way.
\end{proof}

Now to the proof of Theorem~\ref{theo:eval-ifp}.
  Inspection of the proof of Lemma~\ref{lem:normal-form} shows that
  the transformation of an expression into its normal form
  preserves scalarness.
  Hence it suffices to consider
  expressions in normal form, which basically means that we have to
  evaluate terms
  \begin{equation}
    \label{eq:2}
    \eta=\ifp\big(F(\vec x)\gets\theta\big)(\vec x'),
  \end{equation}
  where $\theta(\vec x)$ is an \fosum\ term with $\deg_{\{F\}}(\theta)\le
  1$. Let $\Upsilon\coloneqq\ext(\eta)=\ext(\theta)\setminus\{F\}$ and $k\coloneqq\ar(F)$.

 A \emph{common denominator} for a function $f:D\to\mathbb Q$ is a
 $q\in\PNat$ such that $f(d)\cdot q\in\Nat$ for all $d$. Note that if
 $D$ is finite there is a unique least common denominator for
 $f$. If $D=\emptyset$ we let $1$ be the least common
 denominator for $f$ by default. For $f:D\to\Qbot$ we define a \emph{(least)
   common denominator} to be a (least) common
 denominator for the restriction of $f$ to the set of all $d\in D$
 with $f(d)\neq\bot$.
 
  \medskip
  \textsc{Claim~1. }
  {\itshape 
  Let $\A$ be a rational $\Upsilon$-structure, and let 
  $\zeta(\vec y)$ be an \fosum\ term of vocabulary
  $\Upsilon\cup\{F\}$, where $|\vec y|=\ell$, such that $\deg_{\{F\}}(\zeta)\le 1$.
  
  Then there are $c,d\in\Nat$ such that
  $\|c\|,\|d\|\in\|\A\|^{O(1)}$ and the following holds for every function $\overline
  F:A^k\to\mathbb Q_\bot$. Let $q$ be a common denominator of $\overline
  F$, and let $p\in\PNat$ such that $p\ge |\overline F(\vec
  a)|\cdot q$ for all $\vec a\in A^k$ with $\overline F(\vec
  a)\neq\bot$.
  
  Then there is an $j\in[d]$ such that $j\cdot q$ is a common
  denominator for $\sem{\zeta}{(\A,\overline F)}$, viewed as a
  function from $A^\ell$ to $\Qbot$, and for all
  $\vec b\in A^\ell$ with
  $\sem{\zeta}{(\A,\overline F)}(\vec b) \neq\bot$ it holds that
  $\big|\sem{\zeta}{(\A,\overline F)}(\vec b)\big|\cdot j\cdot q\le
  c\cdot p$.
}

\medskip
\textsc{Proof of Claim~1:}
We prove the claim by induction on $\zeta$.
The base cases are straightforward:
\begin{itemize}
\item if $\zeta=r$ for a constant $r=\frac{p'}{q'}\in\Q$, we let $c\coloneqq p'$ and $d\coloneqq q'$, and if $\zeta=\bot$ we let $c\coloneqq0$ and $d\coloneqq 1$.;
\item if $\zeta=F(\vec y)$ we let $c\coloneqq d\coloneqq 1$;
\item if $\zeta=G(\vec y)$ for some
  weight-function symbol $G\in\Upsilon$ we let $d$ be the least
  common multiple of $G^\A$, and we let
  $c\coloneqq d\cdot\max\{|G^\A(\vec b)| : \vec b\in A^{\ar(G)}\}$.
\end{itemize}
Suppose next that $\zeta=\zeta_1\circ\zeta_2$, and let $c_i,d_i$ the constants for
$\zeta_i$ that we get from the induction hypothesis.
\begin{itemize}
\item If
  $\circ\in\{+,-\}$, we let $d\coloneqq d_1d_2$ and $c\coloneqq
  c_1d_2+c_2d_1$.
\item If $\circ=\cdot$, then at most one $\zeta_i$
  contains $F$, because $\deg_{\{F\}}(\zeta)\le 1$. Say, $\zeta_2$
  does not contain $F$. Then without loss of generality we may assume
  that $d_2$ is a common denominator for $\sem{\zeta_2}{\A}$ and $c_2\coloneqq d_2\cdot\max\big\{|\sem{\zeta_2}{\A}(\vec a)|\bigmid \vec a\in A^k\text{ with }\sem{\zeta_2}{\A}(\vec a)\neq\bot\big\}$.
  We let
  $c\coloneqq c_1c_2$ and $d\coloneqq d_1d_2$.
\item If $\circ=/$, then $\zeta_2$ does not contain $F$, and again we
  may assume that $d_2$ is a common denominator for
  $\sem{\zeta_2}{\A}$ and
  $c_2\coloneqq d_2\cdot\max\big\{|\sem{\zeta_2}{\A}(\vec a)| : \vec a\in A^k\text{ with }\sem{\zeta_2}{\A}(\vec a)\neq\bot\big\}$. We let
  $c\coloneqq c_1d_2$ and $d\coloneqq d_1c_2$.
\end{itemize}
  Suppose next that
  $\zeta=\ite{\varphi}{\zeta_1}{\zeta_2}$, and let $c_i,d_i$ the constants for
  $\zeta_i$ that we get from the induction hypothesis. We let
  $d\coloneqq d_1d_2$ and $c\coloneqq\max\{c_1d_2,c_2d_1\}$.

  Finally, suppose that $\zeta(\vec y)=\sum_{\vec
    z:\varphi(\vec y,\vec z)}\zeta'(\vec y,\vec z)$, where
    $|\vec z|=m$. and let $c',d'$ the constants for
  $\zeta'$ that we get from the induction hypothesis. Let $j\le
  d'$ such that $j\cdot q$ is a common denominator for $\sem{\zeta'}{(\A,\overline
    F)}$, viewed as a function from $A^{\ell+m}$ to $\Qbot$.
    Then $j\cdot q$ is also a common denominator for $\sem{\zeta}{(\A,\overline
    F)}$, viewed as a function from $A^{\ell}$ to $\Qbot$. Hence
  we can let $d\coloneqq d'$.

  For $\vec b\in A^\ell$, let
  $\vec c_1,\ldots,\vec c_n$ be a list of all $\vec c\in A^m$ such that
  $(\A,\overline F)\models\varphi(\vec b,\vec c)$. By the induction
  hypothesis, for all $i\in[n]$ we have $|\sem{\zeta'}{(\A,\overline
    F)}(\vec b,\vec c_i)|\cdot j\cdot q\le c'\cdot p$. Hence
  \[
    |\sem{\zeta}{(\A,\overline
    F)}(\vec b)|\cdot j\cdot q\le\sum_{i=1}^n \big|\sem{\zeta'}{(\A,\overline
      F)}(\vec b,\vec c_i)\big|\cdot j\cdot q\le n\cdot c'\cdot p\le |A|^m\cdot c'\cdot p.
  \]
  We let $c\coloneqq |A|^m\cdot c'$.

  This completes the proof of the claim.

  \medskip Let $\A$ be a rational $\Upsilon$-structure.  To evaluate
  the term $\eta$ in \eqref{eq:2}, we compute the sequence of
  functions $F^{(t)}:A^k\to\Qbot$ for $t\in\{0,\ldots,|A|^k\}$. Recall
  that $F^{(0)}(\vec a)=\bot$ for all $\vec a$ and
  $F^{(t+1)}=\sem{\theta}{(\A,F^{(t)})}$. Choose $c,d$ according to
  Claim~1 applied to $\theta(\vec x)$. Then for all $t\in\Nat$, if $q$ is a common denominator
  for $F^{(t)}$ then for some $j\in[d]$, $jq$ is a common denominator
  for $F^{(t+1)}$. Moreover, if $p\in\PNat$ such that
  $p\ge \big|F^{(t)}(\vec a)\big|\cdot q$ for all
  $\vec a\in A^k$ with $F^{(t)}(\vec a)\neq\bot$, then
  $c\cdot p\ge \big|F^{(t+1)}(\vec a)\big|\cdot j\cdot q$
  for all $\vec a\in A^k$ with $F^{(t+1)}(\vec a)\neq\bot$.

  Observe that $1$ is a common denominator for $F^{(0)}$ and
  $1\ge |F^{(0)}(\vec a)|$ for all
  $\vec a\in A^k$ with $F^{(0)}(\vec a)\neq\bot$. An easy induction
  shows that for every $t\ge 1$ there is a $q_t\le d^t$ such that
  $q_t$ is a common denominator for $F^{(t)}$, and
  $c^t\ge |F^{(t)}(\vec a)|\cdot q_t$
  for all $\vec a\in A^k$ with $F^{(t)}(\vec a)\neq\bot$.

  Since $\|c\|,\|d\|= \|A\|^{O(1)}$, it follows that for $t\le |A|^k$
  it holds that $\sum_{\vec a\in A^{k}}\|F^{(t)}(\vec
  a)\|=\|A\|^{O(1)}$. Thus we can compute
  $F^{(t+1)}=\sem{\theta}{(\A,F^{(t)})}$ in polynomial time using Lemma~\ref{lem:eval-fo}.

\subsection{Proof of Theorem~\ref{theo:inexp}}
\label{app:theo:inexp}

Let $\A,\B$ be a $\Upsilon$-structures. An \emph{isomorphism} form
$\A$ to $\B$ is a bijective mapping $\pi:A\to B$ satisfying the
following two conditions.
\begin{enumerate}
  \item[(i)] For all $k$-ary relation symbols $R\in\Upsilon$ and all tuples
    $\vec a\in A^k$ it holds that $\vec a\in R^\A\iff\pi(\vec a)\in
    R^\B$.
  \item[(ii)] For all $k$-ary weight-function symbols $F\in\Upsilon$ and all tuples
    $\vec a\in A^k$ it holds that $F^\A(\vec a)=F^\B(\pi(\vec a))$.    
\end{enumerate}
A \emph{local isomorphism} from $\A$ to $\B$ is a bijection
$\lambda$ from a set $\dom(\lambda)\subseteq A$ to a set
$\rg(\lambda)\subseteq B$ satisfying conditions (i) and (ii) for
all tuples $\vec a\in\dom(\lambda)^k$. It will often be convenient to
describe local isomorphisms as sets $\lambda\subseteq A\times B$ of pairs.

Let $k,\ell\in\Nat$ such that $\ell\le k$ and $\vec a=(a_1,\ldots,a_\ell)\in A^\ell,\vec
b=(b_1,\ldots,b_\ell)\in B^\ell$.
The \emph{bijective $k$-pebble game} on $\A,\B$ with initial
position $\vec a,\vec b$ is played by two
players called \emph{Spoiler} and \emph{Duplicator}. A \emph{position}
of the game is a set $p\subseteq A\times B$ of size $|p|\le k$; the
\emph{initial position} is $p_0\coloneqq\{(a_i,b_i)\mid i\in[\ell]\}$. If
$|A|\neq|B|$ or if $p_0$ is not a local isomorphism, the game
ends immediately and Spoiler wins. Otherwise, a \emph{play} of the game proceeds in a possibly infinite sequence of \emph{rounds}. Each round consists
of the following steps (a)--(c).  Suppose the position before the round is $p$. 
\begin{enumerate}
\item[(a)] Spoiler selects
  a subset $p'\subseteq p$ of size $|p'|<k$.
\item[(b)] Duplicator selects a bijection $\beta:A\to B$. 
\item[(c)] Spoiler selects an $a\in A$, and the new position is $p\cup\{(a,\beta(a))\}$.
\end{enumerate}
If during the play a position $p$ that is not a local isomorphism is
reached, the play ends and Spoiler wins. Otherwise, the play
continues. If the play never ends, that is, each position is a local
isomorphism, Duplicator wins. 

We denote the game by $\BP_k(\A,\vec a,\B,\vec b)$, or just
$\BP_k(\A,\B)$ if $\ell=0$.

Observe that without loss of generality we may assume that in step (a)
of each round, if the current position $p$ has size $|p|<k$ then
Spoiler selects $p'=p$, and if $|p|=k$ then Spoiler selects a
$p'\subset p$ of size $|p'|=k-1$.

\begin{lemma}\label{lem:pebble-game}
  Let $k,\ell\in\Nat$ such that $\ell\le k$. Furthermore, let
  $\A,\B$ be $\Upsilon$-structures and $\vec a=(a_1,\ldots,a_\ell)\in A^\ell,\vec
b=(b_1,\ldots,b_\ell)\in B^\ell$ such that Duplicator has a
  winning strategy for the game $\BP_k(\A,\vec a,\B,\vec b)$. Then
  for all $\fosum$ formulas $\varphi(x_1,\ldots,x_\ell)$ with at most $k$
  variables it holds that
  \begin{equation}
    \label{eq:a1}
    \A\models\varphi(a_1,\ldots,a_\ell)\iff \B\models\varphi(b_1,\ldots,b_\ell),
  \end{equation}
  and for all $\fosum$ weight terms $\theta(x_1,\ldots,x_\ell)$ with at most $k$
  variables it holds that
  \begin{equation}
    \label{eq:a2}
    \theta^\A(a_1,\ldots,a_\ell)=\theta^\B(b_1,\ldots,b_\ell).
  \end{equation}
\end{lemma}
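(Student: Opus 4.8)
The plan is to prove both statements simultaneously by induction on the structure of the \fosum{} expression $\xi \in \{\varphi, \theta\}$, using the hypothesis that Duplicator has a winning strategy for $\BP_k(\A, \vec a, \B, \vec b)$ for \emph{every} initial position of length $\ell \le k$. The key point that makes a simultaneous induction on formulas and terms work is that the inductive hypothesis must be phrased uniformly over all initial positions of bounded length; this is exactly what lets us handle quantifiers and summations, which extend the tuple of parameters by one element.

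First I would set up the back-and-forth machinery: I would record the standard fact (essentially already noted after the definition of the game) that if Duplicator wins $\BP_k(\A, \vec a, \B, \vec b)$ from a position $p$ of size $< k$, then after Spoiler chooses a new element $a \in A$ (via Duplicator's bijection $\beta$), Duplicator still wins from $p \cup \{(a, \beta(a))\}$, and that $\beta$ can be taken to be a bijection witnessing a local isomorphism extending $p$. I would also observe the symmetric statement with the roles of $\A$ and $\B$ swapped, which follows because the game and the winning-strategy condition are symmetric. The base cases are then immediate: for $x_i = x_j$ this is trivial; for $R(x_{i_1}, \ldots, x_{i_r})$ and for $F(x_{i_1}, \ldots, x_{i_r})$ and for constants $r \in \Qbot$, the claim follows directly because $p_0 = \{(a_i, b_i)\}$ is a local isomorphism (conditions (i) and (ii)). The Boolean connectives and the comparison $\theta_1 \le \theta_2$ follow from the induction hypothesis applied to immediate subexpressions (note all subexpressions still use at most $k$ variables). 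Arithmetic operators $\circ \in \{+, -, \cdot, /\}$ and the \textsf{if-then-else} construct likewise follow by applying the induction hypothesis to the subterms and subformula and observing that the extended arithmetic on $\Rbot$ (including the convention $x/0 = \bot$) is applied to equal arguments on both sides, hence yields equal results.

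The interesting cases are the quantifier $Qx\,\varphi$ and the summation $\sum_{(x_1,\ldots,x_m):\varphi} \theta$, which are handled by the same argument. Consider $\sum_{(x_1,\ldots,x_m):\varphi}\theta$ with $\ell + m \le k$ (note the whole term uses at most $k$ variables, so $\ell + m \le k$ holds after renaming bound variables). I would have Spoiler play so as to shrink the current position down to $p_0 = \{(a_i,b_i)\mid i\in[\ell]\}$ (legal since $|p_0| = \ell < k$ unless $\ell = k$, in which case $m = 0$ and there is nothing to prove), then have Duplicator play a bijection $\beta$ witnessing a local isomorphism; since Duplicator wins the game, the bijection $\beta$ restricted to any $m$-tuple extends $p_0$ to a local isomorphism and preserves the winning condition. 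This $\beta$ induces a bijection on $m$-tuples $\vec c \mapsto \beta(\vec c)$ that matches the values of $\llbracket \varphi \rrbracket$ (by the formula-induction hypothesis applied at the extended position $\vec a \vec c, \vec b \beta(\vec c)$, which Duplicator still wins) and matches the values of $\llbracket \theta \rrbracket$ (by the term-induction hypothesis at the same position). Therefore the two sums agree term by term under the bijection $\beta$, hence are equal in $\Rbot$; the same bijection argument shows $\exists$ and $\forall$ are preserved. The main obstacle, and the only genuinely delicate point, is the bookkeeping that Duplicator's winning strategy is preserved along \emph{every} branch where we extend the position by the $m$ summation variables one at a time — but this is exactly what a winning strategy for the bijective pebble game guarantees: at a position of size $< k$, after any of Spoiler's element choices the resulting position is still one from which Duplicator wins. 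Assembling these cases completes the induction and hence the proof of Lemma~\ref{lem:pebble-game}.
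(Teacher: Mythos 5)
Your overall plan (simultaneous induction on formulas and terms, with the statement quantified over all initial positions of length at most $k$, the easy cases dispatched by the induction hypothesis, and the quantifier case handled by one round of the game) matches the paper's proof. However, there is a genuine gap in your treatment of the summation case $\sum_{(x_{\ell+1},\ldots,x_m):\varphi}\theta$ when more than one variable is bound. You claim that the single bijection $\beta:A\to B$ that Duplicator plays at position $p_0$ ``induces a bijection on $m$-tuples $\vec c\mapsto\beta(\vec c)$'' such that each extended position $p_0\cup\{(c_1,\beta(c_1)),\ldots,(c_m,\beta(c_m))\}$ is still winning for Duplicator. The winning strategy guarantees no such thing: it only guarantees that after Spoiler adds \emph{one} element $c_1$, the position $p_0\cup\{(c_1,\beta(c_1))\}$ is winning; in the next round Duplicator may (and in general must) switch to a different bijection that depends on $c_1$. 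If a single $\beta$ worked for all tuples of length up to the maximum arity of $\Upsilon$, it would be an isomorphism $\A\cong\B$, which defeats the purpose of the game. Your closing remark that the positions are extended ``one at a time'' is the right instinct, but it contradicts the single-$\beta$ claim and does not by itself close the argument.

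What is actually needed — and what the paper's proof supplies — is the following. Playing $m-\ell$ rounds for each tuple $\vec a'\in A^{m-\ell}$ produces a \emph{triangular} map $\vec a'\mapsto\vec b^{\vec a'}$ whose $(i+1)$-st coordinate is computed by a bijection $\beta^{\vec a'}_{i+1}$ depending only on the first $i$ coordinates of $\vec a'$. It is not automatic that this map $A^{m-\ell}\to B^{m-\ell}$ is a bijection; one must prove injectivity by taking two tuples that first differ in coordinate $i+1$, observing that the relevant bijections $\beta^{\vec a'}_{i+1}=\beta^{\vec a''}_{i+1}$ coincide because they depend only on the common prefix, and concluding that the images differ in that coordinate; surjectivity then follows from $|A|=|B|$ and finiteness. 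Only with this bijection in hand can you match the summands of the two sums term by term. This prefix-dependence and injectivity argument is the crux of the summation case, and it is missing from your proposal.
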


\begin{proof}
  The proof is by simultaneous induction on $\varphi$ and $\theta$. The
  base cases as well as the inductive steps for inequalities, Boolean
  connectives, arithmetic operators, and if-then-else are
  straightforward. The only interesting cases are quantification and
  summation.

  We consider quantification first.
  Assume $\varphi(x_1,\ldots,x_\ell)=\exists
  x\psi(x_1,\ldots,x_{\ell},x)$ and that
  $\A\models\varphi(a_1,\ldots,a_k)$. Let $a\in A$ such that
  $\A\models\psi(a_1,\ldots,a_k,a)$
  Consider the first round in
  the game $\BP_{k}(\A,\vec a,\B,\vec b)$. The initial position is
  $p_0=\{(a_i,b_i)\mid i\in[\ell]\}$. Suppose that in step (a), Spoiler
  selects the position $p'\coloneqq p_0$. This is possible, because
  $|p_0|\le\ell< k$. Let $\beta$ be the bijection
  selected by Duplicator in (b) according to her winning
  strategy. Suppose that in step (c), Spoiler selects $a$, and let $b\coloneqq\beta(a)$. Then the
  new position is $p_0\cup\{(a,b)\}$, and Duplicator wins the game
  $\BP_{k}(\A,\vec aa,\B,\vec bb)$. By the induction hypothesis,
  $\A\models\psi(a_1,\ldots,a_k,a)\iff
  \B\models\psi(b_1,\ldots,b_k,b)$ and thus
  $\B\models\psi(b_1,\ldots,b_k,b)$ by the choice of $a$. Thus
  $\B\models \varphi(b_1,\ldots,b_k)$.

  Similarly, if $\B\models \varphi(b_1,\ldots,b_k)$ then
  $\A\models\varphi(a_1,\ldots,a_k)$. This proves \eqref{eq:a1} for
  $\varphi(x_1,\ldots,x_\ell)=\exists
  x\psi(x_1,\ldots,x_{\ell},x)$.

  Formulas $\varphi(x_1,\ldots,x_\ell)=\forall
  x\psi(x_1,\ldots,x_{\ell},x)$ can be dealt with similarly.

  The most interesting case is that of summation terms. Consider
  such a term
  \[
    \theta(x_1,\ldots,x_\ell)=\sum_{(x_{\ell+1},\ldots,x_m):\varphi(x_1,\ldots,x_m)}\eta(x_1,\ldots,x_m),
  \]
  for some $m>\ell$, formula $\varphi(x_1,\ldots,x_m)$, and term
  $\eta(x_1,\ldots,x_m)$. 

  We consider the first $m-\ell$ rounds of the game
  $\BP_{k}(\A,\vec a,\B,\vec b)$ where Duplicator plays according to
  her winning strategy. Let
  $p_0=\{(a_i,b_i)\mid i\in[\ell]\}$ be the initial position. As
  $m\le k$, we can assume that in step (a) of each of the first
  $(m-\ell)$ rounds Spoiler just selects the current position of size
  $<k$. For every tuple
  $\vec a'=(a_{\ell+1},\ldots,a_m)\in A^{m-\ell}$ we define a sequence
  $\beta^{\vec a'}_1,\ldots,\beta^{\vec a'}_m$ and a a tuple
  $\vec b^{\vec a'}=(b^{\vec a'}_{\ell+1},\ldots, b^{\vec a'}_{m})\in
  B^{m-\ell}$ inductively as follows: $\beta^{\vec a'}_1$ is the
  bijection selected by Duplicator in the first round of the game in
  step (b), and $b_1^{\vec a'}\coloneqq\beta^{\vec
  a'}_1(a_{\ell+1})$. Assuming that Spoiler selects $a_{\ell+1}$ in
step (c), the new position $p_1\coloneqq p_0\cup\{(a_{\ell+1},b^{\vec
  a'}_1)\}$ is a winning position for Duplicator. For the inductive
step, consider some $i<m-\ell$ and suppose that the position 
$p_i=p_0\cup\{(a_{\ell+1},b^{\vec
  a'}_1),\ldots, (a_{\ell+i},b^{\vec
  a'}_i)\}$ is a winning position for Duplicator. Let $\beta^{\vec
  a'}_{i+1}$ be the
  bijection selected by Duplicator in the $(i+1)$st round of the game in
  step (b), and $b_{i+1}^{\vec a'}\coloneqq\beta^{\vec
  a}_{i+1}(a_{\ell+i+1})$. Assuming that Spoiler selects $a_{\ell+i+1}$ in
step (c), the new position $p_{i+1}\coloneqq p_{i+1}=p_0\cup\{(a_{\ell+1},b^{\vec
  a'}_1),\ldots, (a_{\ell+i+1},b^{\vec
  a'}_{i+1})\}$ is still a winning position for Duplicator.

Thus by the
induction hypothesis, we have 
\begin{gather}
  \label{eq:a3}
  \A\models
  \varphi(a_1,\ldots,a_m)\iff\B\models\varphi(b_1,\ldots,b_\ell,b_1^{\vec
    a'},\ldots,b_{m-\ell}^{\vec a'}),\\
  \label{eq:a4}
  \eta^\A(a_1,\ldots,a_m) = \eta^\B(b_1,\ldots,b_\ell,b_1^{\vec
    a'},\ldots,b_{m-\ell}^{\vec a'}).
\end{gather}
Let $\beta:A^{m-\ell}\to B^{m-\ell}$ be the mapping defined by
$\beta(\vec a')\coloneqq(b_1^{\vec a'},\ldots,b^{\vec a'}_{m-\ell})$. We shall prove that $\beta$ is bijective. Once we have
proved this, \eqref{eq:a3} and \eqref{eq:a4} imply $\theta^{\A}(\vec
a)=\theta^{\B}(\vec b)$.

To prove that $\beta$ is injective, consider distinct tuples $\vec
a'=(a'_{\ell+1},\ldots,a'_{m}), \vec
a''=(a''_{\ell+1},\ldots,a''_{m})\in A^{m-\ell}$. Let $i\in\{0,\ldots,m-\ell-1\}$ be such
that $a'_{\ell+j}=a''_{\ell+j}$ for all $j\le i$ and $a'_{\ell+i+1}\neq
a''_{\ell+i+1}$. Since $b_1^{\vec a'},\ldots,b_i^{\vec a'}$ and
$\beta_{i+1}^{\vec a'}$ only depend on $a'_{\ell+1},\ldots,a'_{\ell+i}$,
we have $\beta_{i+1}^{\vec a'} =\beta_{i+1}^{\vec a''}$. As
$\beta_{i+1}^{\vec a'}:A\to B$ is a bijection and $a'_{\ell+i+1}\neq
a''_{\ell+i+1}$, we have
\[
  b^{\vec a'}_{i+1}=\beta_{i+1}^{\vec a'}(a'_{\ell+i+1})\neq \beta_{i+1}^{\vec a'}(a''_{\ell+i+1})=\beta_{i+1}^{\vec
    a''}(a''_{\ell+i+1})=b^{\vec a''}_{i+1}.
\]
Thus $\beta(\vec a')\neq\beta(\vec a'')$, which proves that $\beta$ is
injective. Since the domain $A^{m-\ell}$ and co-domain $B^{m-\ell}$
of $\beta$ are finite sets of the same size, it follows that $\beta$
is bijective. 
\end{proof}

To prove Theorem~\ref{theo:inexp}, we apply the following well-known
result.

\begin{theorem}[Cai, Fürer and Immerman~\cite{cfi}]\label{theo:cfi}
  For every $k\in\Nat$ there are graphs $\A_k,\B_k$ such that
  $\A_k\not\cong B_k$ and Duplicator has a
  winning strategy for the game $\BP_k(\A_k,\B_k)$.

  Furthermore, $\A_k$ and $\B_k$ are distinguishable in polynomial
  time. That is, there is a polynomial time algorithm that accepts all
  $\A_k$ and rejects all $\B_k$.
\end{theorem}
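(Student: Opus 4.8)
The plan is to reconstruct the classical Cai--F\"urer--Immerman construction over a base graph of large treewidth. Fix $k$ and choose a connected, $3$-regular base graph $G=G_k=(V_G,E_G)$ of treewidth greater than $k$; such graphs exist for every $k$ (for instance sufficiently large toroidal grids or bounded-degree expanders). For each edge $e\in E_G$ I introduce an \emph{edge pair} $e_0,e_1$, and for each vertex $v\in V_G$ with incident edges $e^1,\ldots,e^d$ I introduce one \emph{inner vertex} $v_{\vec a}$ for every $\vec a\in\{0,1\}^d$ of even weight, joined to $e^i_{a_i}$ for each $i$. I colour each edge pair by the identity of $e$ and each inner vertex by the identity of $v$ (formally, unary relations making the gadget decomposition visible); these colours only constrain Spoiler and the distinguishing algorithm. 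This defines $\A_k\coloneqq\mathrm{CFI}(G,\emptyset)$, and $\B_k\coloneqq\mathrm{CFI}(G,\{e^\ast\})$ is obtained by \emph{twisting} one fixed edge $e^\ast$, i.e.\ swapping $e^\ast_0,e^\ast_1$ at one of its endpoints (equivalently, using odd-weight inner vertices in that single gadget). I write $\mathrm{CFI}(G,T)$ for the graph twisted on an edge set $T\subseteq E_G$.

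Next I would isolate the global twist parity as the decisive invariant, proving the twist-shifting lemma: for connected $G$, $\mathrm{CFI}(G,T)\cong\mathrm{CFI}(G,T')$ if and only if $|T|\equiv|T'|\pmod 2$. For necessity, any colour-preserving isomorphism is recorded as a swap vector $\sigma\in\{0,1\}^{E_G}$ indicating which edge pairs it swaps; matching the even-weight inner vertices forces, at each vertex $v$, the congruence $\sum_{e\ni v}\sigma_e\equiv t_v\pmod 2$, where $t_v$ is the local twist parity dictated by $T,T'$. Summing over all $v$ makes every $\sigma_e$ occur twice, so the left side vanishes and $|T|\equiv|T'|\pmod 2$ follows. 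For sufficiency, flipping $\sigma$ along a path joining two twisted edges cancels them in pairs, so equal parity suffices. Instantiating with $|T|=0$ and $|T'|=1$ gives $\A_k\not\cong\B_k$.

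The heart of the proof, and the step I expect to be hardest, is Duplicator's winning strategy in $\BP_k(\A_k,\B_k)$. The invariant she maintains is that the current position of size $\le k$ extends to a colour-preserving isomorphism from $\A_k$ to some $\mathrm{CFI}(G,T)$ whose twist $T$ can be \emph{routed away} from the at most $k$ gadgets touched by pebbles. In each round Spoiler occupies at most $k$ base vertices; since $G_k$ has treewidth greater than $k$, deleting these cannot separate the graph enough to confine the twist, so by the twist-shifting lemma Duplicator may relocate $T$ to an edge disjoint from the pebbled region, rendering the twisted graph locally identical to $\A_k$ there, and she answers with the bijection induced by this local isomorphism. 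I would make ``routable'' precise through a bramble or cops-and-robber certificate of high treewidth and check that a single round preserves the invariant; this connectivity bookkeeping is the only genuinely delicate part, all other cases being immediate.

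Finally, the two families are separated in polynomial time. Because the colours expose the gadget structure, a polynomial-time algorithm can recover $G$ together with all edge pairs and inner-vertex incidences and then evaluate the global twist parity; equivalently, the construction encodes a linear system over $\mathrm{GF}(2)$ that is solvable precisely for the untwisted graph and unsolvable for the twisted one, and solvability is decided by Gaussian elimination. As $|\A_k|,|\B_k|=\mathrm{poly}(|G_k|)$ and reconstruction, parity evaluation, and elimination all run in polynomial time, this yields an algorithm accepting every $\A_k$ and rejecting every $\B_k$, completing the proof.
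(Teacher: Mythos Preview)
The paper does not give its own proof of this statement; it is quoted as the classical Cai--F\"urer--Immerman theorem and invoked as a black box in the proof of Theorem~\ref{theo:inexp}. Your proposal is a correct high-level reconstruction of that classical argument: CFI gadgets over a base graph of large treewidth, twist parity as the isomorphism invariant, and a cops-and-robber strategy for Duplicator to hide the twist.

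Two points need tightening before this matches the statement as written. First, the theorem is about \emph{graphs}, i.e.\ structures over a single binary edge relation, whereas your construction carries one unary colour per base vertex and per base edge. You are right that adding colours can only make Duplicator's task harder, so her winning strategy survives their removal; but both your non-isomorphism argument (the parity computation over swap vectors presumes the bijection respects the gadget decomposition) and your polynomial-time algorithm (``recover $G$ from the colours, then compute the twist parity'') genuinely rely on the colouring. Over plain graphs you must either encode the colours by attaching pairwise non-isomorphic pendant gadgets, or choose the base graphs $G_k$ to be rigid and equipped with an efficiently recognisable structure from which the decomposition can be recovered. Say which route you take, and check that the encoding costs Spoiler only a bounded number of additional pebbles (so that a suitably larger treewidth still suffices).

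Second, in the bijective game Duplicator must commit to her bijection \emph{before} Spoiler places the new pebble, so she cannot first see the move and then ``relocate $T$ to an edge disjoint from the pebbled region''. The correct order, which your cops-and-robber remark points to but does not state, is: with at most $k-1$ pebbles down when the bijection is requested, the twist already sits in the robber's component, and Duplicator offers the corresponding almost-isomorphism as her bijection; only \emph{after} Spoiler's move does she slide the twist along a path inside that component (avoiding the now at most $k$ pebbled gadgets) to restore the invariant for the next round. Because the path avoids all pebbled gadgets, the slide fixes every pebbled pair. Make this two-step rhythm explicit when you write it up.
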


\begin{proof}[Proof of Theorem~\ref{theo:inexp}]
  Let $Q$ be the Boolean query that is true on a structure $\A$ if
  the polynomial time algorithm of Theorem~\ref{theo:cfi} accepts $\A$ and false
  otherwise. Suppose for contradiction that there is an
  \ifpsum\ sentence $\varphi$ that expresses $Q$. By
  Lemma~\ref{lem:normal-form}, we may assume that $\varphi=\exists \vec
  x\Big(\chi(\vec x)\wedge\ifp\big(F(\vec x)\gets\theta\big)(\vec
  x)\Big)$ for some selection condition $\chi$ and \fosum\ term
  $\theta(\vec x)$. Let $\ell\coloneqq|\vec x|$, and let $m$ be the
  number of variables (free or bound) occurring in $\theta$.

  Let $k\coloneqq2\ell+m+1$ and consider the structures
  $\A\coloneqq\A_k$ and $\B\coloneqq\B_k$
  of Theorem~\ref{theo:cfi}. Then
  \begin{equation}
    \label{eq:a6}
    \A\models\varphi\quad\text{and}\quad \B\not\models\varphi.
  \end{equation}
  Furthermore, Duplicator has a
  winning strategy for the game $\BP_k(\A_k,\B_k)$, which implies that
  $\A$ and $\B$ have the same order, say, $n$. Furthermore, by
  Lemma~\ref{lem:pebble-game}, $\A$ and $\B$ satisfy the same
  \fosum\ sentences with at most $k-1$ variables.

  Consider the sequences $F^{(t)}_\A$ and $F^{(t)}_\B$, for
  $t\in\Nat$, the we compute when evaluating the term $\ifp\big(F(\vec x)\gets\theta\big)(\vec
  x)$ in $\A$ and $\B$, respectively. We claim that for every $t\ge 1$
  there is an \fosum\ term $\theta^{(t)}(\vec x)$ with at most
  $\ell+m$ variables such that $F^{(t)}_\A(\vec
  a)=\sem{\theta^{(t)}}{\A}(\vec a)$ for all $\vec a\in A^\ell$ and $F^{(t)}_\B(\vec
  b)=\sem{\theta^{(t)}}{\B}(\vec b)$ for all $\vec b\in B^\ell$. We
  let $\theta^{(1)}$ be the term obtained from $\theta$ by replacing
  each subterm $F(\vec y)$ in $\theta$ by the constant
  $\bot$. Furthermore, for every $t\ge 1$ we let $\theta^{(t+1)}$ be
  the term obtained from $\theta$ by replacing each subterm $F(\vec
  y)$ by
  \[
    \zeta(\vec y)\coloneqq\sum_{\vec z: \vec z=\vec y}\sum_{\vec x:\vec x=\vec
      z}\theta^{(t)}(\vec x).
  \]
  Here $\vec z$ is an $\ell$-tuple of variables disjoint from both
  $\vec y$ and $\vec x$, and $\vec z=\vec y$ abbreviates
  $\bigwedge_{i=1}^\ell z_i=y_i$. The role of the two summation
  operators is simply to put the right variables into the term
  $\theta^{(t)}$; for all structures $\C$ and tuples $\vec c\in
  C^\ell$ it holds that $\sem{\zeta(\vec y)}{\C}(\vec
  c)=\sem{\theta^{(t)}(\vec x)}{\C}(\vec c)$. Thus, semantically,
  $\theta^{(t+1)}$ is obtained from $\theta$ by replacing $F$ by a
  term defining $F^{(t)}$. Furthermore, note that the term
  $\theta^{(t+1)}$ contains only the variables in $\vec z$ and $\vec
  x$ in addition to those in $\theta$ and thus has at most $2\ell+m$ variables.

  Since the fixed-point process converges in at most $t\coloneqq n^\ell$ steps,
  we have $\sem{\ifp\big(F(\vec x)\gets\theta\big)(\vec
  x)}{\A}(\vec a)=\sem{\theta^{(t)}}{\A}(\vec a)$ for all $\vec a\in A^\ell$, and similarly $\sem{\ifp\big(F(\vec x)\gets\theta\big)(\vec
  x)}{\B}(\vec b)=\sem{\theta^{(t)}}{\B}(\vec b)$ for all $\vec b\in
B^\ell$. Thus 
\[
  \A\models\varphi\iff\A\models \exists \vec
  x\big(\chi(\vec x)\wedge\theta^{(t)}(\vec x)\big)\iff \B\models \exists \vec
  x\big(\chi(\vec x)\wedge\theta^{(t)}(\vec x)\big)
  \iff \B\models\varphi,
\]
because the formula $\exists \vec
  x\big(\chi(\vec x)\wedge\theta^{(t)}(\vec x)\big)$ has at most
  $2\ell+m$ variables. This contradicts \eqref{eq:a6}.
\end{proof}

\subsection{Proof of Theorem~\ref{theor:not-ptime}}

Let us refer to a boolean combination of polynomial inequalities in a
single variable $X$ simply as a \emph{condition}.  A
\emph{description} is an expression of the form
  \begin{tabbing}
    \textsf{if $\gamma_0$ then $\bot$} \\
    \textsf{else if $\gamma_1$ then $r_1$} \\
    \textsf{else if $\gamma_2$ then $r_2$} \\
    \dots \\
    \textsf{else $r_n$}
  \end{tabbing}
  where the $\gamma_i$ are conditions and the $r_i$ are rational
  functions in a single variable $X$.  We say that such a
  description $\delta$ is \emph{well-defined} if for every real number $a$ that
  does not satisfy $\gamma_0$, all the $r_i$ are well-defined
  on $a$ (i.e., no division by zero is performed).  For any $a
  \in \R$ the value $\delta(a) \in \Rbot$ is now defined in the
  obvious way.

Let $\K_1 \subseteq \K(1,1)$ denote the class of FNNs depicted in
Figure~\ref{fig:edge-split}(c), where all biases
are set to $0$.  Every network $\Net \in \K_1$ has one input node
and one output node, which we always denote by
$\inn$ and $\out$.  The network has any number of hidden
nodes; we denote this number by $H(\Net)$.
Note that any two hidden nodes are symmetric.

A \emph{symbolic assignment} is a mapping $\sigma$ from
a finite set $Y$ of variables to $\{\inn,\out,h\}$, where $h$ is a
symbol with the meaning of `hidden'.
Importantly, on any finite $Y$ there are only a finite number of symbolic
assignments.  Now an actual assignment $\nu$
on $Y$ in some $\Net \in \K_1$
is said to be \emph{of sort $\sigma$} if 
for each variable $y\in Y$, we have $\nu(y)=\inn$ iff
$\sigma(y)=\inn$;
$\nu(y)=\out$ iff $\sigma(y)=\out$; and $\nu(y)$ is a hidden node
iff $\sigma(y)=h$.

By an intricate but tedious induction, we can verify the
following.
\begin{lemma} \label{lem:description}
  For every \fosum\ formula $\varphi$ and
  every symbolic
  assignment $\sigma$ on the free variables of $\varphi$,
  there exists a condition $\gamma$ such that for every
  $\Net \in \K_1$ and every assignment $\nu$ in $\Net$ of sort
  $\sigma$, we have $\Net,\nu \models \varphi$ iff
  $\gamma(H(\Net))$ holds.

  Similarly, for every \fosum\ weight term $\theta$ and 
  every symbolic valuation $\sigma$ on the free variables of
  $\theta$, there exists a well-defined
    description $\delta$ such that for every
  $\Net \in \K_1$ and every assignment $\nu$ in $\Net$ of sort
  $\sigma$, we have $\sem\theta{(\Net,\nu)}=\delta(H(\Net))$.
\end{lemma}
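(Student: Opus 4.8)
The plan is to prove Lemma~\ref{lem:description} by simultaneous structural induction on $\varphi$ and $\theta$, exploiting the extreme symmetry of $\K_1$: up to isomorphism a network $\Net\in\K_1$ is determined by the single number $H(\Net)$, and its automorphism group acts as the full symmetric group on the hidden nodes. Hence, once the ``shape'' of the assignment $\nu$ is fixed, the value $\sem{\xi}{(\Net,\nu)}$ can depend only on $H(\Net)$, and the induction tracks how it depends on it. One preliminary adjustment: the notion of symbolic assignment must also record the equality pattern among the variables sent to hidden nodes --- this is already forced by the atom $y_1=y_2$, whose truth value is otherwise not determined by $\sigma$ alone; there are still only finitely many such refined symbolic assignments on a finite variable set. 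Below, $X$ denotes the formal parameter standing for $H(\Net)$, and a refined symbolic assignment $\sigma$ is realizable in some $\Net\in\K_1$ precisely when $X\ge q(\sigma)$, where $q(\sigma)$ is the number of distinct hidden nodes used by $\sigma$.

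I would dispose of the base cases immediately: for fixed $\sigma$, the relational atoms $E$, $\In$, $\Out$ and the equality atom have a constant truth value, the atomic weight terms $b(y)$, $w(y_1,y_2)$, $\val(y)$ take a fixed value in $\Qbot$ (all biases, edge weights and input values occurring in $\K_1$ are fixed rational constants, everything not present being $\bot$), and a numeric constant $r$ is itself a trivial description. Negation and the Boolean connectives are handled by closure of conditions under Boolean combination; and $\theta_1\le\theta_2$ becomes, once the $\bot$-regions are treated according to the convention $\bot\le x$ (true if $\theta_1=\bot$, false if $\theta_1$ is real and $\theta_2=\bot$) and denominators of the rational functions involved are cleared, a Boolean combination of polynomial inequalities in $X$. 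For the arithmetic operators and for if-then-else on weight terms I would combine the relevant descriptions case by case: the guard of case $(i,j)$ of the result is the conjunction of the two guards; on the $\bot$-region one uses $\bot$-propagation (e.g.\ $\theta/\theta'=\bot$ also wherever $\theta'=0$, which is a polynomial condition on $X$); and well-definedness is preserved because every ``the denominator vanishes'' case has been moved into the $\bot$-region, so each surviving rational function is evaluated only where its denominator is nonzero.

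The two substantive cases are quantification and summation. For $\exists x\,\psi$ I would split over the finitely many refined symbolic assignments $\sigma'$ extending $\sigma$ to $x$ (sending $x$ to $\inn$, to $\out$, to an already-used hidden node, or to a fresh one); by induction each yields a condition $\gamma_{\psi,\sigma'}$, and the condition for $\exists x\,\psi$ under $\sigma$ is $\bigvee_{\sigma'}\bigl(X\ge q(\sigma')\wedge\gamma_{\psi,\sigma'}\bigr)$; the case $\forall x\,\psi$ follows via $\neg\exists x\,\neg\psi$. For a summation term $\sum_{(\vec x):\varphi}\eta$ care is needed: the semantics multiplies $\sem{\varphi}{}$ by $\sem{\eta}{}$ over \emph{all} tuples $\vec a$, and since $0\cdot\bot=\bot$ the sum is $\bot$ as soon as $\eta$ evaluates to $\bot$ on some tuple of a realizable shape, even one filtered out by $\varphi$. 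So, splitting over the finitely many refined shapes $t$ of $\vec x$ relative to $\sigma$, the $\bot$-guard of the resulting description is $\bigvee_t\bigl(X\ge q(t)\wedge\gamma_0(\delta_{\eta,t})\bigr)$, where $\gamma_0(\delta_{\eta,t})$ is the top ($\bot$-)guard of $\delta_{\eta,t}$. Off that region the contribution of shape $t$ is the number of tuples $\vec a$ of that shape --- a falling-factorial polynomial in $X$ coming from the hidden coordinates, times $1$ for each $\inn$/$\out$ coordinate --- multiplied by the value of $\delta_{\eta,t}$ and included only where $\gamma_{\varphi,t}$ holds. Flattening over the finitely many shapes $t$, the finitely many truth patterns of the conditions $\gamma_{\varphi,t}$, and the finitely many cases of the descriptions $\delta_{\eta,t}$ partitions $\Nat$ into finitely many pieces each cut out by a condition, on each of which the sum equals a single rational function of $X$; as no new denominators arise off the $\bot$-region, this assembles into a well-defined description.

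I expect the summation step to be the main obstacle --- not because any single point is deep, but because several delicate things must be gotten right at once: the $\bot$-propagation through $0\cdot\bot$ (so the sum collapses to $\bot$ as soon as a single summand does, regardless of the filter $\varphi$), the correct counting of tuples of each refined shape by falling factorials, and the reassembly of finitely many polynomial-scaled descriptions into a single description in normal form, together with the check that well-definedness off the $\bot$-region is preserved. Everything else reduces to routine closure of the class of conditions, and of the class of well-defined descriptions, under the operations involved.
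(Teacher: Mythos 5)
Your proposal is correct and carries out exactly the ``intricate but tedious induction'' that the paper leaves implicit, with the same structure: a simultaneous induction on formulas and terms, a case split over the finitely many symbolic sorts at quantifiers and summations, and falling-factorial tuple counts in the summation case. Your refinement of symbolic assignments to record the equality pattern among hidden-sorted variables is indeed necessary (the lemma as literally stated already fails for the atom $y_1=y_2$ with $\sigma(y_1)=\sigma(y_2)=h$, though only closed formulas are used downstream, so the intended application is unaffected), and your handling of $\bot$-propagation through summation and of well-definedness off the $\bot$-region addresses the genuinely delicate points.
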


Consider now the query $Q$ on $\K(1,1)$ where $Q(\Net)$ is true
iff $f^\Net(1)$ is an even natural number.
For any $\Net \in \K_1$, this means that $H(\Net)$ is even.
Suppose, for the sake of contradiction, that there exists a
closed \ifpsum\ formula $\varphi$ such that $\Net \models
\varphi$ iff $H(\Net)$ is even, for every $\Net \in \K_1$.  Since
all hidden nodes in such $\Net$ are symmetric, it is easy to see
that all fixpoints in $\varphi$ are reached in a constant number
of iterations.  So, without loss of generality, we may assume
$\varphi$ to be in \fosum.  By Lemma~\ref{lem:description} then,
noting that $\varphi$ has no free variables,
there exists a condition $\gamma$
such that $\gamma(a)$ holds iff $a$ is even,
for all natural numbers $a$.  This is impossible, since boolean
combinations of polynomial inequalities on $\R$ can only define
finite unions of intervals \cite{bcr_real}.

\subsection{Proof of Theorem \ref{theo:scalarptime}}

  For simplicity we give the proof for boolean queries without
  parameters.
  Let $\Net\in\K(m,p)$. Suppose that $\Net$ has $\pol$-bounded reduced
  weights for some polynomial $\pol(X)$. We choose a $c\in\Nat$ such
  that $\pol(n)< n^c$ for all $n\ge 2$.
  
As a first step of the proof, we observe that the nodes
of $\RNet$ can be linearly ordered in a canonical way.
Let $i_1,\dots,i_m$ and $o_1,\dots,o_p$ be the input and output
  nodes of $\Net$.
  We let
$\tilde i_1<\ldots<\tilde i_m<\tilde o_1<\ldots<\tilde o_m$ (recall
that we always assume input nodes and output nodes to be distinct). For hidden nodes $\tilde u$, we let $\tilde
i_j<\tilde u<\tilde o_k$. For distinct hidden nodes $\tilde u,\tilde
u'$, if the depth
of $u$ is smaller than the depth of $u'$, we let $\tilde u<\tilde
u'$. If $u$ and $u'$ have the same depth, but $\bias(u)\neq\bias(u')$,
we let $\tilde u<\tilde u'$ if and only if $\bias(u)<\bias(u')$. If
$\bias(u)=\bias(u')$, we consider all nodes
$\tilde v_1<\ldots<\tilde v_m$ of smaller depth. Then there is some
$i\in[m]$ such that $\wt(\tilde v_i,u)\neq \wt(\tilde v_i,u')$,
because otherwise we would have $u\sim u'$ and thus
$\tilde u=\tilde u'$. We choose the minimum $i$ such that
$\wt(\tilde v_i,\tilde u)\neq \wt(\tilde v_i,\tilde u')$ and let
$\tilde u<\tilde u'$ if and only if
$\wt(\tilde v_i,\tilde u)<\wt(\tilde v_i,\tilde u')$.

The linear order $\le$ on $\RNet$ induces a quasi-order $\preceq$ on
$\Net$: we let $u\preceq v$ if $\tilde u\le\tilde v$. Note that $u\sim
v$ if and only if $u\preceq v$ and $v\preceq u$.

This quasi-order $\preceq$ on $\Net$ is
\sifpsum-definable. We first construct a term
$\theta_{\textup{dep}}(x)$ such that $\sem{\theta_{\textup{dep}}}{\Net}(u)$ is the
depth of $u$ in $\Net$. Then we can easily construct an \sifpsum\ formula
$\varphi_{\preceq}(x,y)$ such that
$\Net\models\varphi_{\preceq}(u,v)\iff u\preceq v$. Moreover, we
construct a term $\theta_{\wt}(x,y)$ such that
$\sem{\theta_{\textup{wt}}}{\Net}(u,v)$ is the weight of the edge
$(\tilde u,\tilde v)$ in $\RNet$.
To unify the notation, we also let $\theta_{\textup{bias}}(x)\coloneqq
b(x)$ be the term defining the bias of a node.
This way, we have essentially
defined $\RNet$ within $\Net$.

Let $n\coloneqq |\RNet|$. Then $n$ is the length of the quasi-order
$\preceq$. Note that $n\ge 2$, because $\RNet$ has at least one input
node and one output node. 
Recall
that we chose $c$ such that $\pol(n)<n^c$ for the polynomial $\pol(X)$
bounding the weights.
Let $\preceq_c$ be the lexicographical order on $c$-tuples associates
with $\preceq$. That is, for tuples
$\vec u=(u_1,\ldots,u_c),\vec v=(v_1,\ldots,v_c)$ we have
$\vec u\preceq_c\vec v$ if and only if either
$u_i\sim v_i$ for all $i\in[c]$ or for the minimal $i$ such that
$u_i\not\sim v_i$ it holds that $u_i\prec v_i$. We index positions in
the quasi-order $\preceq_c$
with numbers $i\in\{0,\ldots,n^c-1\}$. For each such $i$, we let
$\class i$
denote the $i$th equivalence class with respect to $\preceq_c$, and
for every $c$-tuple $\vec u$ of nodes of $\Net$ we let $\ind{\vec u}$
be the unique $i\in\{0,1\ldots,n^c-1\}$ such that $\vec u\in\class i$. 
We construct an \sifpsum\ formula
$\varphi_{\textup{lex}}(\vec x,\vec y)$ that defines $\preceq_c$ and a
term $\theta_{\textup{ind}}(\vec x)$ such that for every
$c$-tuple $\vec u$ we have
$\sem{\theta_{\textup{ind}}}{\Net}(\vec u)=\ind{\vec u}$. It
will also be convenient to let $\varphi_{\textup{slex}}(\vec x,\vec
y)\coloneqq \varphi_{\textup{lex}}(\vec x,\vec y)\wedge\neg
\varphi_{\textup{lex}}(\vec y,\vec x)$ be the formula that defines the
strict lexicographical order $\prec_c$.

As $\RNet$ has $\pol$-bounded 
  weights, every bias and weight of $\RNet$ can be written as a
  fraction $\frac{r}{q}$ where $|r|,q\le \pol(n)< n^c$. Suppose that
  for node $u$ of $\Net$ we have $\bias(\tilde
  u)=\bias(u)=\frac{r(u)}{q(u)}$ in reduced form and for every edge
  $(u,v)$ of $\Net$ we have $\wt(\tilde u,\tilde
  v)=\frac{r(u,v)}{q(u,v)}$ in reduced form. The next step may be the
  crucial step of the proof. We would like to define the numbers
  $r(u),q(u),r(u,v),q(u,v)$ in \sifpsum. However, it is not obvious
  how to do this directly by terms $\theta(x)$ or $\theta(x,y)$. We
  sidestep this issue by defining the index of the numbers in the
  quasi-order $\preceq_c$.
  We construct \sifpsum\
  formulas $\varphi_{\textup{bias}}(x,\vec z,\vec z')$ and
  $\varphi_{\textup{wt}}(x,y,\vec z,\vec z')$ such that for all nodes
  $u,v$ and $c$-tuples $\vec t,\vec t'$ of nodes
  of $\Net$ we have
  \begin{align*}
    \Net\models \varphi_{\textup{bias}}(u,\vec t,\vec t')&\iff \ind{\vec
                                                          t}=|r(u)|\text{
                                                          and }\ind{\vec
                                                          t'}=q(u),\\
    \Net\models \varphi_{\textup{wt}}(u,v,\vec t,\vec t')&\iff \ind{\vec
                                                          t}=|r(u,v)|\text{
                                                          and }\ind{\vec
                                                          t'}=q(u,v).
  \end{align*}
  To construct $\varphi_{\textup{bias}}(x,\vec z,\vec z')$,
  we first take care of the sign, letting
  \[
    \varphi_{\textup{bias}}(x,\vec
    z,\vec z')\coloneqq\big(\theta_{\textup{bias}}(x)<0\wedge\varphi_{\textup{neg}}(x,\vec
    z,\vec z'\big)\vee
    \big(\theta_{\textup{bias}}(x)\ge
    0\wedge\varphi_{\textup{pos}}(x,\vec z,\vec z'\big).
  \]
  Now we let
  \[
    \varphi_{\textup{pos}}(x,\vec z,\vec z'\big)\coloneqq
\big(\theta_{\textup{bias}}(x)\cdot \theta_{\textup{ind}}(\vec
z')=\theta_{\textup{ind}}(\vec z)\big)\wedge\forall \vec
y'\Big(\varphi_{\textup{slex}}(\vec y',\vec z')\to\neg\exists\vec y\, \theta_{\textup{bias}}(x)\cdot \theta_{\textup{ind}}(\vec
y')=\theta_{\textup{ind}}(\vec y)\Big).
\]
To define $\varphi_{\textup{neg}}$, we simply replace both occurrences of
$\theta_{\textup{bias}}(x)$ in $\varphi_{\textup{pos}}$ by $(-1)\cdot
\theta_{\textup{bias}}(x)$. The formula $\varphi_{\textup{wt}}(x,y,\vec
z,\vec z')$  can be defined similarly.

Using all these \sifpsum\ expressions, we can define an ordered copy
of $\RNet$ in $\Net$; formally, this is done by a transduction
\cite{grohe_book}.  By the Immerman-Vardi
Theorem~\cite{imm_relpol,vardi_comp}, every polynomial-time computable
query on ordered structures is expressible in \IFP\ and hence in
$\sifpsum$. Thus we obtain an $\ifpsum$ formula that defines, in $\Net$, the
answer to query $Q$ applied to $\RNet$. As the query is
model-agnostic, this also gives us the answer to $Q$ applied to
$\Net$.

\subsection{Proof of Theorem \ref{thm:np-hardness-fx-zero}}

	Let $\varphi\coloneqq\bigwedge_{i=1}^m(\lambda_{i1}\wedge\lambda_{i2}\wedge\lambda_{i3})$,
	where $\lambda_{ij}\in\{X_k,\neg X_k\}$ for some $k\in[n]$, be a 3-CNF
	formula in the Boolean variables $X_1,\ldots,X_n$. 
	In the following, we will construct an FNN $\Net$ such that $\f^\Net$ is the zero function if and only if $\varphi$ is unsatisfiable. The construction is based on the idea of interpreting numbers
        with binary representation $(0.a_1a_2\ldots a_n)_2$ as variable assignments and letting $\Net$ simulate $\varphi$ on these numbers. 
	To that end, we first use an auxiliary FNN $\Netsplit \in \K(1,n)$ with the following properties. For all $i \in [n]$, we have
	\begin{enumerate}
		\item\label{itm:in-zero-one} $\forall x\, \f^{\Netsplit}_{\out_i}(x) \in [0,1]$ and
		\item\label{itm:agree-on-bin}  if $x \in [0,1)$ is a
                  multiple of $2^{-n}$ with $x = (0.a_1a_2\ldots
                  a_n)_2$, then $\f^{\Netsplit}_{\out_i}(x) =
                  a_i$.
	\end{enumerate} 
	It is well-known how to construct such a network $\Netsplit$. For the reader's convenience we give the construction at the end of the proof. Next, we aim to simulate $\varphi$ with an FNN. For that, let $f_\varphi : \R^n \to [0,1]$ be defined by
	\[
	\f_\varphi(\vec x)\coloneqq
	\min_{i\in[m]}\max_{j\in[3]}\ell_{ij}(\vec x)
	\]
	where for $\vec x=(x_1,\ldots,x_n)$ we let $\ell_{ij}(\vec x)=x_k$ if $\lambda_{ij}=X_k$ and
	$\ell_{ij}(\vec x)=1-x_k $ if $\lambda_{ij}=\neg X_k$.
	Then
	for all $\vec x=(x_1,\ldots,x_n)\in\{0,1\}^n$ we have $f_\varphi(\vec
	x)\in\{0,1\}$ with $f_\varphi(\vec x)=1$ if and only if the assignment
	$X_k\mapsto x_k$ satisfies $\varphi$. Observe that one can
    easily construct an FNN $\Net_\varphi\in\K(n,1)$ with $\f^{\Net_\varphi} = f_\varphi$ since $\max(x_i, x_j) = x_i + \ReLU(x_j - x_i)$.
	It is tempting to consider an FNN that computes $f^{\Net_\varphi} \circ f^{\Netsplit}$ as a candidate for $\Net$. Indeed, if $\varphi$ is satisfiable, then the satisfying assignment $X_i \mapsto a_i$ yields an input $x=(0.a_1a_2\ldots a_n)_2$ with $f^{\Net_\varphi} \big( f^{\Netsplit}(x)\big) = 1$. If $\varphi$ is unsatisfiable, however, we want $f^\Net$ to be $0$ for all $x \in \R$ and not just for multiples of $2^{-n}$. Luckily, we can ensure this requirement using the following insight.
	
	\medskip
	\textsc{Claim~2. }
	{\itshape 
	Suppose that $\varphi$ is unsatisfiable. Then $0\le f_\varphi(\vec x)\le 1/2$
	for all $\vec x\in[0,1]^n$.
	}
	
	\medskip
	\textsc{Proof of Claim~2:}
	Let $c\coloneqq\max_{\vec x\in[0,1]^n}f_\varphi(\vec x)$.
We need to prove that $c\le 1/2$.
        Assume that
		$c>0$. Choose $\vec x=(x_1,\ldots,x_n)\in[0,1]^n$ such that
		$f_\varphi(\vec x)=c$ with the minimum number $k\in[n]$ such that
		$x_k\not\in\{0,1\}$.
                Then there is an $i\in[m]$ such that
		$X_k\in\{\lambda_{i1},\lambda_{i2},\lambda_{i3}\}$ and
		$x_k=\max_{j\in[3]}\ell_{ij}(\vec x)$ because otherwise we
		could set $x_k$ to $0$ without decreasing $f_\varphi(\vec x)$. Similarly,
		there is an $i'\in[m]$ such that
		$\neg X_k\in\{\lambda_{i'1},\lambda_{i'2},\lambda_{i'3}\}$ and
		$1-x_k=\max_{j\in[3]}\ell_{i'j}(\vec x)$, because otherwise we
		could set $x_k$ to $1$ without decreasing $f_\varphi(\vec x)$.
		
		As either $x_k\le 1/2$ or $1-x_k\le 1/2$, either
		$\max_{j\in[3]}\ell_{ij}(\vec x)\le 1/2$ or
		$\max_{j\in[3]}\ell_{i'j}(\vec x)\le 1/2$ and thus
		$c=f_\varphi(\vec x)\le 1/2$.
	This proves the claim.
	
	\medskip
        Since $\lbrace \f^{\Netsplit}(x) \,\mid\, x \in \R \rbrace \subseteq [0,1]^n$, if  $\varphi$ is  unsatisfiable, we get $\max_{x \in \R} f^{\Net_\varphi} \big( f^{\Netsplit}(x)\big) \leq \frac12$. Therefore, we let $\Net$ consist of the concatenation of $\Netsplit$ and $\Net_\varphi$ and connect it with an edge of weight $2$ to a neuron with bias $-1$. Then $\Net$ computes the function 
	\[
	\f^\Net(x) = \ReLU\Big(- 1 + 2\cdot   \f^{\Net_\varphi} \big( \f^{\Netsplit} (x)\big)\Big)
	\]
	which is constantly zero for unsatisfiable $\varphi$ and reaches $1$ otherwise. 
	
	\bigskip
	It remains to present the construction of $\Netsplit$. For convenience, we will
        use the \emph{linearised sigmoid function} (also known as $\ReLU1$)
        defined by
	\[
	\clipped(x) = 	\ReLU(x) - \ReLU(x-1) = \begin{cases}
		1  & \text{if } x > 1\text,\\
		x & \text{if } 0 \leq x \leq  1\text,\\
		0  & \text{if } x < 0\text.\\
	\end{cases}
	\]
	We first aim to construct an FNN $\Net_1$ in $\K(1,1)$ with range $[0,1]$ and $\f^{\Net_1}((0.a_1a_2\ldots a_n)_2) = a_1$. To that end, we observe
	\[2 \cdot (0.a_1a_2\ldots a_n)_2 - 1
	\begin{cases}
		\geq 0  & \text{if } a_1 = 1\text,\\
		\leq -2^{-(n-1)}  & \text{if } a_1 = 0\text.\\
	\end{cases}
	\] 
	Now,  we can amplify this gap and extract $a_1$ via
	\[
	a_1 = \clipped(
	(2 \cdot (0.a_1a_2\ldots a_n)_2-1)\cdot 2^{n-1} + 1
	).
	\]
    Because of the previous equation, we choose $\Net_1$ to compute $\f^{\Net_1}(x) = \clipped(
	(2 x-1)\cdot 2^{n-1} + 1
	)$ and obtain the desired properties. To lift this to a construction of $\Netsplit$, we observe $(2
    \cdot (0.a_1a_2\ldots a_n)_2-a_1) = (0.a_2a_3\ldots a_n)$. Therefore, we can apply this construction iteratively and obtain the FNN $\Netsplit$ which fulfills the properties by computing 
    \[
    \f^{\Netsplit}_{\out_i}(x) = \clipped\Big(
	\big(2^i x - 1 - \sum_{j=1}^{i-1} 2^{i-j}\f^{\Netsplit}_{\out_j}(x)\big)\cdot 2^{n-i} + 1
	\Big).
    \]

\end{document}